\definecolor{coolblack}{rgb}{0.0, 0.18, 0.39}
\definecolor{ao(english)}{rgb}{0.0, 0.5, 0.0}
\titlespacing*{\section}{0pt}{0.5ex plus 1ex minus .2ex}{0.3ex plus .2ex}
\titlespacing*{\subsection}{0pt}{0.25ex plus 1ex minus .2ex}{0.3ex plus .2ex}
\newcolumntype{L}[1]{>{\raggedright\let\newline\\\arraybackslash\hspace{0pt}}m{#1}}
\newcolumntype{C}[1]{>{\centering\let\newline\\\arraybackslash\hspace{0pt}}m{#1}}
\newcolumntype{R}[1]{>{\raggedleft\let\newline\\\arraybackslash\hspace{0pt}}m{#1}}
\definecolor{forest}{rgb}{0.0, 0.5, 0.0}
\newtheorem{theorem}{Theorem}[section]
\newtheorem{proposition}{Proposition}[section]
\newtheorem{corollary}{Corollary}[section]
\newtheorem{lemma}{Lemma}[section]
\newtheorem{example}{Example}[section]
\newtheorem{remark}{Remark}[section]
\newcommand{\indep}{\perp \!\!\! \perp}
\newcommand{\ai}{\alpha_i}
\newcommand{\ais}{\alpha_{is}}
\newcommand{\lt}{\lambda_t}
\newcommand{\eit}{\varepsilon_{it}}
\newcommand{\eist}{\varepsilon_{ist}}
\newcommand{\eisone}{\varepsilon_{is1}}
\newcommand{\eistwo}{\varepsilon_{is2}}
\newcommand{\eitminusone}{\varepsilon_{i(t-1)}}
\newcommand{\eizero}{\varepsilon_{i0}}
\newcommand{\eione}{\varepsilon_{i1}}
\newcommand{\eitwo}{\varepsilon_{i2}}
\newcommand{\eiT}{\varepsilon_{iT}}
\newcommand{\oi}{\omega_i}
\newcommand{\did}{\operatorname{DiD}}
\newcommand{\att}{\operatorname{ATT}}
\newcommand{\eipre}{\varepsilon_i^{1}}
\newcommand{\etaipre}{\eta_i^{1}}
\newcommand{\biasdidpost}{\Delta_{\operatorname{post}}}
\newcommand{\biasmtgpost}{\Delta_{\operatorname{post}}^{\operatorname{dev}}}
\newcommand{\biasselpost}{\Delta_{\operatorname{post}}^{\operatorname{sel}}}
\newcommand{\biasselpre}
{\Delta_{\operatorname{pre}}^{\operatorname{sel}}}
\newcommand{\biasdidposthat}{\widehat{\Delta}_{\operatorname{post}}}
\newcommand{\biasselprehat}{\widehat{\Delta}_{\operatorname{pre}}^{\operatorname{sel}}}
\newenvironment{customass}[1]
  {\innercustomass}
  {\endinnercustomass}
\def\blfootnote{\xdef\@thefnmark{}\@footnotetext}
\title{\vspace{-.5cm}
Selection and parallel trends\thanks{We are grateful to Isaiah Andrews, Manuel Arellano, Dmitry Arkhangelsky, St\'ephane Bonhomme, Irene Botosaru, Christoph Breunig, Federico Bugni, Brantly Callaway, Ivan Canay, Cl\'ement de Chaisemartin, Gordon Dahl, Aureo de Paula, Graham Elliott, Joachim Freyberger, Bulat Gafarov, Bryan Graham, Lena Janys, Stefan Hoderlein, Christian Hansen, Keisuke Hirano, Peter Hull, Guido Imbens, D\'esir\'e K\'edagni, Pat Kline, Nikolay Kudrin, Matt Masten, Eric Mbakop, David McKenzie, Eduardo Morales, Mikkel Plagborg-M{\o}ller, Vitor Possebom, Niklas Potrafke, Demian Pouzo, Jonathan Roth, Aleksey Tetenov, Andres Santos, Yuya Sasaki, Vira Semenova, Xiaoxia Shi, Valentin Verdier, Chris Walters, and many seminar and conference participants for comments. We used Grammarly for language editing assistance, Github Co-Pilot for coding assistance, and GPT (OpenAI), Claude (Anthropic), and \url{refine.ink} for proofreading assistance. The usual disclaimer applies.}}
\author{Dalia Ghanem\thanks{Department of Agricultural \& Resource Economics, University of California, Davis. One Shields Ave, Davis CA 95616; \href{dghanem@ucdavis.edu}{dghanem@ucdavis.edu}} \quad Pedro H. C. Sant'Anna\thanks{Department of Economics, Emory University, 1602 Fishburne Dr, Atlanta, GA 30322; \href{pedro.santanna@emory.edu}{pedro.santanna@emory.edu}} \quad Kaspar W\"uthrich\thanks{Department of Economics, University of Michigan, 238 Lorch Hall, 611 Tappan Ave., Ann Arbor, MI 48109-1220, CESifo; \href{kasparwu@umich.edu}{kasparwu@umich.edu}}}
\date{First draft on arXiv: March 17, 2022. This draft: \today \vspace{-0.5cm}}
\begin{document}

\maketitle
\thispagestyle{empty}

\begin{abstract}

We study the role of selection into treatment in difference-in-differences (DiD) designs.
We derive necessary and sufficient conditions for parallel trends assumptions under general classes of selection mechanisms. These conditions characterize the empirical content of parallel trends and clarify the trade-offs between assumptions about selection into treatment and restrictions on the time series properties of the potential outcomes required for DiD methods. We use the necessary  and sufficient conditions to provide a selection-based decomposition of the bias of DiD and provide easy-to-implement strategies for benchmarking its components. We also provide templates for justifying DiD in applications with and without covariates. Reanalyses of the causal effect of NSW training programs and the effect of the Medicaid expansion demonstrate the usefulness of our selection-based approach to benchmarking the bias of DiD.

\medskip
\noindent \textbf{Keywords:} causal inference, conditional parallel trends, covariates, difference-in-differences, selection mechanism, time-invariant and time-varying unobservables, treatment effects

\medskip

\noindent \textbf{JEL Codes:} C21, C23

\end{abstract}

\newpage

\setcounter{page}{1}

\begin{quote}
     \footnotesize \textit{\dots while the new papers [in the DiD literature] clarify very well the statistical assumptions needed for estimation, effective use of these methods also requires being able to understand what the threats to these assumptions are in different contexts, and to make a plausible rhetorical argument as to why we should think the assumptions hold.}\\ --- David McKenzie, \emph{World Bank Development Impact Blog} \citep{mckenzie2022a}
\end{quote}

\section{Introduction}

This paper provides a new perspective on difference-in-differences (DiD) identification through the lens of how units select into treatment. Parallel trends, the identifying assumption underlying DiD, requires the change in the expected (untreated) potential outcome over time to be the same in the treatment and control group. It is thus inherently a joint restriction on selection into treatment and the time series properties of the untreated potential outcome. Whether and to what extent there is a trade-off between these two types of restrictions is not well-understood, however. In particular, there is no general framework that characterizes the implications of parallel trends under selection-based restrictions. Such a framework is crucial for researchers to exploit contextual and economic information about how units select into treatment, which is available in many applications. It would enable researchers  --- in the words of David McKenzie \citep{mckenzie2022a} --- ``to understand what the threats to these [parallel trends] assumptions are in different contexts, and to make a plausible rhetorical argument as to why we should think the assumptions hold.'' Indeed, we provide researchers with a selection-based framework for exploiting contextual and economic information to assess the plausibility of parallel trends beyond existing statistical and visual plausibility checks in DiD analyses. 

Our goal is to provide a nonparametric selection-based characterization of the parallel trends assumption. To achieve this goal, we start by providing a general necessary and sufficient condition for parallel trends to hold for all selection mechanisms in a class defined by \emph{the unobservables that determine selection}, which we denote by $\oi$.\footnote{From a theoretical perspective, this necessary and sufficient condition is the condition for nonparametric identification using DiD for a given class of selection mechanisms.} In applications, $\oi$ often corresponds to the units' \emph{information set} when making their selection decisions. We show that parallel trends holds if and only if the trend in the absence of the treatment is mean independent of the unobservable determinants of selection $\oi$. This general necessary and sufficient condition implies necessary and sufficient conditions for many empirically relevant classes of selection mechanisms, indexed by what units select on, $\oi$.

The focus on classes of selection mechanisms defined by $\oi$ is motivated by the reduced-form nature of DiD methods and the myriad of empirical contexts analyzed using these methods. While contextual and economic information about selection may allow researchers to posit what determines selection into treatment, it is generally difficult to specify a particular functional form for the selection mechanism. For instance, if we consider settings where selection into treatment is at the aggregate level, such as at the county or state level as in the Medicaid expansion \citep[e.g.,][]{miller2021medicaid,baker2026difference}, specifying a selection mechanism might be intractable or prohibitively complex.\footnote{This echoes the point made by \citet[][p.404]{abadie2021using} about the difficulty of specifying selection mechanisms in synthetic control analyses of comparative case studies.}  The advantage of our necessary and sufficient conditions is that researchers can use them to assess and justify parallel trends without needing to defend a particular choice of selection mechanism.\footnote{If researchers have sufficient contextual and economic information to fully specify the selection mechanism, then we recommend they rely on identification strategies that directly exploit such information rather than DiD.}

Before considering various restricted classes of selection mechanisms, we take a step back and ask: What are the necessary and sufficient conditions for parallel trends if we do not impose any restrictions on selection? The first corollary of our general necessary and sufficient condition answers this question with a ``no-free-lunch'' result. When researchers are not willing to restrict the selection mechanism at all, $\oi$ can include time-invariant and time-varying determinants of the untreated potential outcomes in the pre- and post-treatment period, $Y_{i1}(0)$ and $Y_{i2}(0)$.
We show that parallel trends holds for all selection mechanisms in this class if and only if the untreated potential outcome is constant across time up to deterministic mean shifts, $Y_{i2}(0)-Y_{i1}(0)=E[Y_{i2}(0)-Y_{i1}(0)]$. This result shows that if one is not willing to restrict selection into treatment, then one needs to essentially rule out time-varying unobservables.

This ``no-free-lunch'' result motivates considering restricted classes of selection mechanisms, such as selection on treatment effects (``Roy-style selection''), selection on pre-treatment unobservables (``imperfect foresight''), selection on lagged outcomes, and selection on time-invariant unobservables (``selection on fixed effects''). The necessary and sufficient conditions for these classes characterize the trade-offs between restrictions on selection and restrictions on the time series properties of the untreated potential outcome and its unobservable determinants. The more restrictions researchers are willing to impose on how units select into treatment, the weaker the time series restrictions necessary and sufficient for parallel trends to hold. Conversely, in settings where the available economic and contextual knowledge is not sufficient to justify strong restrictions on selection, researchers need to impose restrictive assumptions on the time series properties of the potential outcomes to justify parallel trends. 

Consider, for example, a setting with Roy-style selection where the units select into treatment if their treatment effects exceed the costs. If the units know their treatment effects and costs, then the necessary and sufficient condition for parallel trends requires the change in the potential outcomes over time, $Y_{i2}(0)-Y_{i1}(0)$, to be mean independent of those determinants of selection. Alternatively, suppose that the units select into treatment based on pre-treatment unobservables. In this case, the necessary and sufficient condition for parallel trends is a martingale-type condition on $Y_{it}(0)-E[Y_{it}(0)]$. Finally, in the context of selection on time-invariant unobservables, parallel trends is equivalent to a time homogeneity condition on the expectation of $Y_{it}(0)-E[Y_{it}(0)]$ conditional on the time-invariant unobservables. The advantage of our general necessary and sufficient condition is that researchers can use it to derive equivalent conditions for parallel trends tailored to the units' information sets in their application.

The necessary and sufficient conditions we provide can serve as theory-based templates allowing researchers to assess and justify parallel trends in empirical applications based on contextual and economic information about how units select into treatment. To illustrate, we specialize our general results to the standard two-way fixed effects model for $Y_{it}(0)$. The resulting conditions explicitly allow for selection on time-invariant and time-varying unobservables, thus formalizing what ``quasi-random'' assignment means in the context of DiD analyses.

An appealing feature of our selection-based approach is that our necessary and sufficient conditions for parallel trends can help researchers better understand the bias of DiD. For any given information set $\oi$, we decompose the bias of DiD into two components. The first component results from violations of selection on unobservables not contained in $\oi$, whereas the second component captures the bias due to violations of the necessary and sufficient condition for parallel trends when selection is based on $\oi$. Indeed, the information set $\oi$ provides an anchor for the decomposition without imposing any assumptions on selection or the time series properties of the potential outcomes. The proposed selection-based bias decomposition allows us to better understand the bias of DiD and the extent to which pre-treatment information is useful for assessing its magnitude.

For practitioners, we propose simple approaches to benchmark the  components in the bias decomposition and assess the robustness of DiD in empirical applications. Since selection on time-invariant and pre-treatment unobservables is a concern in many applications, including the two empirical applications in this paper, we anchor the decomposition on this case. Specifically, we apply the bias decomposition with $\oi$ containing all time-invariant and pre-treatment unobservables. For this choice of $\oi$, the first component of the bias captures the bias resulting from selection on post-treatment unobservables. The second component captures the bias due to deviations from the martingale condition, which is necessary and sufficient for parallel trends under imperfect foresight. Under a linear relaxation of the martingale condition, the second bias component is equal to the product of the martingale deviation and the pre-treatment difference between the treatment and control group. Based on this decomposition, we provide simple approaches for benchmarking and signing both bias components.

We illustrate the usefulness of the selection-based approach for benchmarking the bias components of DiD in two empirical applications. First, we consider the estimation of the causal effect of the NSW training programs. This application is well-suited for our purposes because there is an experimental benchmark allowing us to estimate the bias of DiD. Without covariates, the bias of DiD relative to the experimental benchmark is large and significant. The proposed benchmarking strategy yields the same sign of the bias as the experimental estimate. The decomposition further demonstrates that the bias of DiD is very sensitive to violations of the martingale property because there is a large pre-treatment difference between the treatment and control group. Incorporating covariates into the analysis reduces the estimated bias relative to the experimental benchmark and also renders DiD more robust by reducing the pre-treatment difference between the treatment and control group. Second, we revisit the DiD analysis of the causal effect of the Medicaid expansion. The proposed benchmarking strategy suggests that the sign of the bias depends on the degree of selection on post-treatment unobservables. As in the NSW application, DiD without covariates is sensitive to violations of the martingale condition. Controlling for covariates almost fully removes pre-treatment differences between the treated and control states, rendering DiD robust to violations of the martingale condition.

\subsection{Related literature}\label{sec:literature} This paper contributes to several branches of the literature on causal inference using panel data. First, we contribute to the classical literature on canonical DiD setups by providing a novel selection-based perspective on the parallel trends assumptions underlying this literature.\footnote{See, e.g., \citet{Ashenfelter1978}, \citet{ashenfelter1985using}, \citet{heckman1985alternative}, \citet{Card1990}, \citet{Card1994}, \citet{Meyer1995}, and \citet{angrist1999empirical} for early developments in the DiD literature, and \citet[][Section 2]{Lechner2010a} for a historical perspective.}  

Our second contribution is to the more recent literature on DiD methods. See, e.g., \citet{de_chaisemartin_two-way-survey_2021} and \citet{roth_et_al_DiD_survey} for surveys. Within this strand of the literature, our paper is most closely related to \citet{roth2021when}, \citet{Arkhangelsky2021_DRTWFE}, and \citet{Arkhangelsky2022_DRId}, though our focus greatly differs from theirs. \citet{roth2021when} discuss necessary and sufficient conditions under which parallel trends holds for all (monotonic) transformations of the untreated potential outcome. We, on the other hand, take the outcome model (and thus the specific transformation) as given and study the connection between parallel trends and selection into treatment. \citet{Arkhangelsky2021_DRTWFE} and \citet{Arkhangelsky2022_DRId}  propose doubly robust estimation methods that leverage restrictions on outcome models and/or selection models with unconfoundedness-type restrictions; see also \citet{Athey2021}. Our results complement theirs as we maintain the parallel trends assumption and discuss the types of restrictions on selection compatible with it. Moreover, our analysis shows that parallel trends is compatible with various types of selection on unobservables, unlike standard unconfoundedness assumptions \citep[e.g.,][]{imbens2004nonparametric,imbens2009recent}.

Our third contribution is to the literature on sensitivity analysis, partial identification, and robust inference under violations of parallel trends. Existing work has proposed different ways to use pre-treatment information to bound the ATT 
\citep[e.g.,][]{manski2018how,rambachan2023a,ban2023generalized}. We complement this work by providing a selection-based decomposition of the bias of DiD and simple empirical strategies for benchmarking its components. Our approach uses pre-treatment periods to learn about the deviations from the selection-based necessary and sufficient conditions for parallel trends, whereas existing approaches rely on more ``reduced-form'' quantities, such as pre-treatment parallel trends violations. 
Our selection-based approach also differs from the analysis by \citet{marx2024parallel}. They derive partial identification results under monotone treatment selection assumptions on the untreated potential outcome, which they motivate using an economic model of learning with binary outcomes. By contrast, we characterize the bias of DiD in terms of deviations from the necessary and sufficient conditions for parallel trends.

Our fourth contribution is to the literature imposing explicit selection and/or outcome models to develop and compare different methods for estimating treatment effects, including DiD.\footnote{See, e.g., \citet{ashenfelter1985using}, \citet{heckman1985alternative}, \citet{card2005estimating}, \citet{chabeferret2015analysis}, \citet{blundell2009alternative}, \citet{dechaisemartin2018fuzzy}, \citet{verdier2020average}, \citet{dechaisemartin2022not}, \citet{marx2024parallel}, \citet{chabetferret2025should}. Most papers in this literature examine sharp DiD designs, as we do, \citet{dechaisemartin2018fuzzy} and \citet{marx2024parallel} also consider fuzzy DiD designs.} We contribute to this literature by providing general necessary and sufficient conditions for parallel trends, which are derived for general selection and outcome models that nest models considered in this literature. Our conditions thus clarify trade-offs between assumptions on selection and time-varying unobservables that are relevant for those models. Within this strand of the literature, our paper is most closely related to contemporaneous work by \citet{marx2024parallel}, though our focus markedly differs from theirs. \citet{marx2024parallel} analyze parallel trends through the lens of various examples of dynamic choice models. In doing so, they focus on explicit models of selection, and many of their examples are for binary outcomes. By contrast, we provide general necessary and sufficient conditions for parallel trends without imposing restrictions on the nature of the outcome variables or explicit models of dynamic choice. The general classes of selection mechanisms we consider nest, and can be motivated by, dynamic choice models. Unlike \citet{marx2024parallel}, we also incorporate covariates into our analysis and study settings with multiple groups and periods. 

Finally, a byproduct of our analysis is an explicit connection between DiD and the literature on nonseparable panel models. In Appendix \ref{app:connections}, we show that our sufficient conditions for parallel trends imply combinations of identifying assumptions in this literature \citep[e.g.][]{altonji2005cross,bester2009identification, hoderlein2012nonparametric,chernozhukov2013average}.

\subsection{Notation} For a random vector $W_{it}$, where $i=1,\dots,n$ and $t=1,2$, we let $\dot{W}_{it}\equiv W_{it}-E[W_{it}]$ and denote its time series by $W_i\equiv (W_{i1},W_{i2})$.\footnote{We define all vectors in this paper as row vectors.} We use $F_W$ to denote the distribution of the random vector $W$ and $\mathcal{W}$ to denote its support. Let $f(z,w)$ be a function defined on $\mathcal{Z}\times\mathcal{W}$. We say that $f(z,w)$ is a trivial function of $w$ if $f(z,w)=f(z,w')=h(z)$ for all $z\in\mathcal{Z}$, $w\neq w'$, and $(w,w')\in\mathcal{W}^2$. We say that $f(z,w)$ is a symmetric function in $z$ and $w$ if $f(z,w)=f(w,z)$ for all $(z,w)\in\mathcal{Z}\times\mathcal{W}$. We use the notation $\overset{d}{=}$ to denote equality of distribution. For random variables, $X_i$, $Z_i$, and $W_i$, $Z_i|W_i,X_i\overset{d}{=}Z_i|X_i,W_i$ denotes that $F_{Z_i|W_i,X_i}(z|w,x)=F_{Z_i|X_i,W_i}(z|w,x)$ for $(z,w,x)\in\mathcal{Z}\times\mathcal{W}\times\mathcal{X}$. 

\section{Setup, selection mechanism, and examples}

In the main text, we consider the classical DiD setup with two groups and two periods, where the selection decision is made at the same level as the unit of observation. We extend our results to settings with disaggregate data (e.g., on individuals) where the selection decision is made at a more aggregate level (e.g., at the state level) in Appendix \ref{app:disaggregate} as well as to  DiD designs with multiple groups and multiple periods in Appendix \ref{app: general did}. 

Let $D_{it}$ and $Y_{it}$ denote the treatment status and outcome for unit $i\in \{1,\dots,n\}$ in period $t\in \{1,2\}$. Here the index $i$ refers to the unit making the decision to select into treatment. This could be an individual or a more aggregate administrative unit, such as a county or state. The treatment group ($G_i=1$) selects the treatment path $D_i=(0,1)$; the control group ($G_i=0$) selects $D_i=(0,0)$. The potential outcomes with and without the treatment are $Y_{it}(1)$ and $Y_{it}(0)$, respectively.\footnote{To focus attention on the role of the parallel trends assumption, we assume that there are no anticipatory effects. This is a standard assumption in the DiD literature. See, for example, \citet{roth_et_al_DiD_survey} for a discussion.} We abstract from covariates for now to focus on the issues arising from selection on time-invariant and time-varying unobservables. We discuss the additional implications of including covariates in Appendix \ref{app:covariates}.

We consider the standard parallel trends assumption. Throughout the paper, we assume that all relevant moments exist and $\{Y_{i1}(0),Y_{i2}(0),G_i\}$ is i.i.d. across $i$.
\begin{customass}{PT}
\label{ass:PT} 
The (unconditional) parallel trends assumption holds: $$E[Y_{i2}(0)-Y_{i1}(0)|G_i=1]=E[Y_{i2}(0)-Y_{i1}(0)|G_i=0].$$
\end{customass}
Under Assumption \ref{ass:PT}, the average treatment effect on the treated group in period $t=2$, $\att\equiv E[Y_{i2}(1)-Y_{i2}(0)|G_i=1]$, is identified from the ``difference-in-differences'' as follows:
$$
\att=E[Y_{i2}-Y_{i1}|G_i=1]-E[Y_{i2}-Y_{i1}|G_i=0]\equiv \did.
$$

We work with a general nonseparable model for $Y_{it}(0)$,
\begin{equation}\label{eq:nonseparable_outcome}
Y_{it}(0)=\xi_t(\ai,\eit), \quad i=1,\dots,n, \quad t=1, 2,
\end{equation}
where $\ai$, $\eione$, and $\eitwo$ are finite-dimensional vector-valued random variables, and $\xi_t(\cdot)$ is an unrestricted time-varying function. The outcome model \eqref{eq:nonseparable_outcome}, while not imposing any restrictions on $Y_{it}(0)$, allows us to distinguish between time-invariant and time-varying unobservables. This is necessary to define selection mechanisms that can directly depend on these unobservables. If, instead, we were to work directly with potential outcomes, this would rule out important examples of selection mechanisms, such as selection on fixed effects \citep[e.g.,][]{ashenfelter1985using}.

The determinants of selection into treatment vary widely across DiD applications. In some applications, the unit $i$ making the selection decision is a state or county, whereas in other settings it is an individual economic agent, such as an individual, a household, or a firm. Below, we therefore introduce a general selection mechanism that  accommodates many different types of selection. To motivate this general selection mechanism, it is helpful to first consider examples of selection mechanisms relevant for our setting. 

\begin{example}[Selection on untreated outcomes]\label{ex:selection_outcomes}
Selection on untreated potential outcomes goes back  to at least the seminal work of \citet{ashenfelter1985using} in the context of individuals selecting into job training programs. It remains relevant in DiD applications where selection decisions are made by more aggregate units, which are prevalent in contemporary empirical research. For example, in a survey of governors on the reasons for expanding Medicaid, \citet{sommers2013us} find that the health outcomes in their states were one of the factors affecting their support for Medicaid expansion.

\citet[][p.651]{ashenfelter1985using} studied the case where individuals select into the training programs if their pre-treatment earnings $Y_{i1}(0)$ fall below a fixed threshold, so that $G_i=1\left\{Y_{i1}(0)\le c\right\}$.\footnote{\citet{ashenfelter1985using} consider a more general setting with multiple pre-treatment periods, where selection depends on earnings in the $k$th pre-treatment period.} In the context of Medicaid expansion, policymakers in a state might expand Medicaid if the health outcomes of their constituents (before the expansion) fall below a certain threshold. 

More generally, let $\omega_i$ denote the information set available to the units when deciding whether to select into the treatment and consider the following mechanism,
\begin{align}
G_i=1\left\{E[Y_{i1}(0)+\beta Y_{i2}(0)|\omega_i]\le E[\kappa_{i2}|\omega_i]\right\},\label{eq:AC_selection}
\end{align}
where $\beta\in [0,1]$ is a discount factor and $\kappa_{i2}$ is a unit-specific random threshold. This example demonstrates the importance of allowing for additional unobservables in addition to the determinants of $Y_{it}(0)$, $(\ai,\eione,\eitwo)$. \qed
\end{example}

\begin{example}[Selection on treatment effects (Roy-style selection)]\label{ex:selection_effects} 
Let $\tau_{i2}$ denote the treatment effect in period $t=2$, $\tau_{i2}=Y_{i2}(1)-Y_{i2}(0)$. Suppose that units select into the treatment if the expected gains from treatment given the information set $\omega_i$, $E[\tau_{i2}|\omega_i]$, exceed the expected cost of treatment, $E[\kappa_{i2}|\omega_i]$, $G_i=1\{ E[\tau_{i2}|\omega_i]\ge E[ \kappa_{i2}|\omega_i]\}.$
In the context of the job training setting, units might decide to participate in the program if the expected gain in their earnings outweighs the expected costs, whereas in the Medicaid setting, policymakers might support a Medicaid expansion if the expected improvements in the health outcomes of their constituents outweigh the expected costs. Indeed, these expected improvements were among the factors mentioned in the survey in \citet{sommers2013us}. \qed
\end{example}
\begin{example}[Selection on fixed effects]\label{ex:selection_fe} DiD methods have traditionally been motivated using two-way fixed effects models. Fixed effects assumptions allow for unrestricted dependence between time-invariant unobservables and the regressors, thereby implicitly allowing for selection on time-invariant unobservables.\footnote{See, e.g., \citet{chamberlain1984panel,arellano2003panel,evdokimov2010identification,wooldridge2010econometric,hoderlein2012nonparametric,chernozhukov2013average}.} A simple example is $G_i=1\{\ai\le c\}$, where $\ai$ is a scalar, which corresponds to the selection mechanism on p.650 in \citet{ashenfelter1985using}. In the training program example, this mechanism captures settings where individuals select into the program if the permanent earnings component $\ai$ falls below a threshold $c$, which is ``based on potential trainees' discount rates, time horizons, and tastes for training'' \citep[][p.651]{ashenfelter1985using}. In the context of aggregate treatments, such as Medicaid expansion, policymakers might decide to expand Medicaid based on their political affiliations or state-specific characteristics, which are plausibly time-invariant. \qed
\end{example}

In the previous examples, all agents make their selection decision in the same way and based on the same information set $\oi$. This is likely an oversimplification in many contexts where DiD is used, especially in aggregate selection settings. In the following example, we consider a setting with heterogeneous units whose selection decisions depend on different unobservables.
\begin{example}[Selection with heterogeneous units]\label{ex:heterogneous_units} For simplicity, we consider a setting with two types of units, noting that the example can be easily generalized to settings with more than two types. 
Let $\mu_i\in \{0,1\}$ be an indicator for the unit's type. If $\mu_i=1$, unit $i$ adopts the treatment if the expected benefits outweigh the expected costs given $\omega_i^1$; if $\mu_i=0$, unit $i$ selects into treatment if the expected discounted sum of untreated outcomes given $\omega_i^0$ falls below a certain threshold, so that
\begin{equation*}
G_i= 1\{E[Y_{i2}(1)-Y_{i2}(0)|\omega_i^{1}]\geq E[\kappa_{i2}|\omega_i^1]\}^{\mu_i}1\{E[Y_{i1}(0)+\beta Y_{i2}(0)|\omega_i^{0}]\leq  E[\kappa_{i2}|\omega_i^0]\}^{1-\mu_i}.
\end{equation*}
Here, $G_i$ depends on the unobserved type $\mu_i$ as well as the information sets used by both types, $\omega_i^0$ and $\omega_i^1$.
\qed
\end{example}

Motivated by these examples, we consider the following general selection mechanism, 
\begin{eqnarray}
G_i=g(\omega_i,\nu_i), \quad i=1,\dots,n. \label{eq:selection_mechanism}
\end{eqnarray}
We allow $\omega_i$ to be a function or a subvector of $(\ai,\eione,\eitwo,\mu_i,\eta_{i1},\eta_{i2})$ and can thereby accommodate the above examples as special cases. Moreover, we can accommodate many other economic models of selection  \citep[e.g.,][]{heckman1985alternative,chabeferret2015analysis, marx2024parallel}. The additional unobservables $(\mu_i,\eta_{i1},\eta_{i2})$ capture any other determinants of selection that may be correlated with $Y_{it}(0)$, such as individual-specific thresholds in Example \ref{ex:selection_outcomes}, treatment effects and costs in Example \ref{ex:selection_effects}, or a unit's type in Example \ref{ex:heterogneous_units}.

The scalar unobservable, $\nu_i$, omitted for simplicity in the above examples, captures determinants of selection that are independent of $(\ai,\eione,\eitwo,\mu_i,\eta_{i1},\eta_{i2})$ and allows us to accommodate the important special case of random assignment. We impose the following assumption.
\begin{customass}
{SEL}\label{ass:SEL}  $P(\nu_i>c)\in(0,1)$ for some $c\in\mathbb{R}$ and $\nu_i\indep (\ai,\eione,\eitwo,\mu_i,\eta_{i1},\eta_{i2})$.
\end{customass}

Note that since $G_i=D_{i2}$, $g$ can be equivalently viewed as the selection mechanism for $D_{i2}$.
Let $\mathcal{O}$ and $\mathcal{V}$ denote the supports of $\oi$ and $\nu_i$, respectively. Denote by $\mathcal{G}_{\omega}$ the class of all selection mechanisms $g$ mapping from $\mathcal{O}\times \mathcal{V}$ to $\{0,1\}$, that is,
$
\mathcal{G}_{\omega}\equiv\{g:\mathcal{O}\times \mathcal{V}\to \{0,1\}\}.
$
Here, with a slight abuse of notation, we index the class of selection mechanisms by $\oi$ since it could be correlated with the potential outcomes.\footnote{Technically, $\mathcal{G}$ should be indexed by $\mathcal{O}$ (or, alternatively, $\mathcal{O}$ and $\mathcal{V}$), but given the central role of selection on $\oi$ in our analysis, we index $\mathcal{G}$ by $\omega$ to improve readability.} Since $\oi$ can be interpreted as the units' information sets, the categorization of selection mechanisms based on $\oi$ is motivated by the usefulness of information sets for analyzing causal inference methods \citep[see, e.g., Section 2 in][]{heckman2007econometric}.

\begin{remark}[Parallel trends and functional form] Throughout this paper, we take the functional form of the outcome as given. We thereby abstract from the issues arising from the sensitivity of DiD to functional form specification  \citep[e.g.,][]{roth2021when}. \qed
\end{remark}

\section{Necessary and sufficient conditions for parallel trends}\label{sec:necessary_sufficient}

In this section, we provide necessary and sufficient conditions for parallel trends. Section \ref{sec:master} provides a general necessary and sufficient condition for parallel trends. We then apply this general condition in  Sections \ref{sec:no_restrictions}--\ref{sec:restricted_classes} to derive necessary and sufficient conditions for various practically relevant scenarios by specifying what units select on.

\subsection{A general necessary and sufficient condition for parallel trends}\label{sec:master}

Here, we provide a general result that allows for specifying necessary and sufficient conditions for parallel trends for any class of selection mechanisms $\mathcal{G}_\omega$. Consistent with the nonparametric treatment of selection mechanisms in DiD analyses, we provide necessary and sufficient conditions for Assumption \ref{ass:PT} to hold for all $g\in \mathcal{G}_\omega$.\footnote{These conditions imply that Assumption \ref{ass:PT} holds for all $g\in \mathcal{G}_\omega$, which is in the spirit of standard notions of nonparametric identification \citep[e.g.,][Definition 2.10]{hansen2022econometrics}.}

Considering necessary and sufficient conditions for Assumption \ref{ass:PT} for all $g\in\mathcal{G}_\omega$ ensures that the invocation of Assumption \ref{ass:PT} is robust to the choice of selection mechanism within this class.\footnote{In Section \ref{subsec:for_all_why}, we further demonstrate that, even in the case where a researcher knows the underlying selection mechanism, the necessary and sufficient condition for all $g\in\mathcal{G}_\omega$ rules out cases where Assumption \ref{ass:PT} holds due to \emph{peculiar} combinations of parameters of the data-generating process (see Figure \ref{fig:numerical}).} While robustness to the exact specification of the selection mechanism is important in all DiD applications, it is especially relevant in settings with aggregate selection (e.g., at the county or state level). When multiple entities or actors are involved, or when the decision is based on aggregating heterogeneous individual preferences, the selection mechanisms are often complex and difficult to model. Consider again the Medicaid expansion example. The survey of governors in \citet{sommers2013us} shows that there are many different factors  affecting the governors' views on expanding Medicaid, ranging from fiscal considerations to the potential health outcomes of their constituents and the potential benefits of the expansion in terms of health insurance coverage and health outcomes.

The following theorem provides a necessary and sufficient condition for any class of selection mechanisms $\mathcal{G}_\omega$. We focus on non-degenerate DiD designs, that is, designs with $P(G_i=1)\in (0,1)$. Recall the notation $\dot{Y}_{it}(0)= Y_{it}(0)-E[Y_{it}(0)]$.

\begin{theorem}[Necessary and sufficient condition for \ref{ass:PT} to hold for all $g\in \mathcal{G}_\omega$] \label{thm:main} Suppose that Assumption \ref{ass:SEL} holds. 
 Suppose further that either $P(E[\dot{Y}_{i2}(0)-\dot{Y}_{i1}(0)|\omega_i]>0)<1$ or $P(E[\dot{Y}_{i2}(0)-\dot{Y}_{i1}(0)|\omega_i]<0)<1$. Then, Assumption \ref{ass:PT} holds for all $g\in \mathcal{G}_{\omega}$ satisfying $P(G_i=1)\in (0,1)$ if and only if $E[\dot{Y}_{i2}(0)-\dot{Y}_{i1}(0)|\omega_i]=0\text{ a.s.}$
\end{theorem}

\begin{proof} \textbf{``$\Longrightarrow$''}: We prove the result for the case where $P(E[\dot{Y}_{i2}(0)-\dot{Y}_{i1}(0)|\omega_i]>0)<1$. The proof for the case where $P(E[\dot{Y}_{i2}(0)-\dot{Y}_{i1}(0)|\omega_i]<0)<1$ follows from the same arguments. 

Under Assumption \ref{ass:SEL} and because $P(E[\dot{Y}_{i2}(0)-\dot{Y}_{i1}(0)|\omega_i]>0)<1$, the selection mechanism 
\begin{equation}
G_i=1\{\nu_i>c\}1\{E[\dot{Y}_{i2}(0)-\dot{Y}_{i1}(0)|\omega_i]\le 0\}\label{eq:least_favorable_mechanism_unc}
\end{equation}
is nondegenerate, that is,
$
P(1\{\nu_i>c\}1\{E[\dot{Y}_{i2}(0)-\dot{Y}_{i1}(0)|\omega_i]\le 0\}=1)\in (0,1).
$

If Assumption \ref{ass:PT} holds for all non-degenerate selection mechanisms, then it holds for the mechanism \eqref{eq:least_favorable_mechanism_unc}. By Lemma \ref{lem: PT equivalent condition}, Assumption \ref{ass:PT} holding for the mechanism in \eqref{eq:least_favorable_mechanism_unc} is equivalent to 
\begin{equation*}
E[1\{\nu_i>c\}1\{E[\dot{Y}_{i2}(0)-\dot{Y}_{i1}(0)|\omega_i]\le 0\}(\dot{Y}_{i2}(0)-\dot{Y}_{i1}(0))]=0,
\end{equation*}
which, by Assumption \ref{ass:SEL}, is equivalent to
\begin{equation*}
E[1\{E[\dot{Y}_{i2}(0)-\dot{Y}_{i1}(0)|\omega_i]\le 0\}(\dot{Y}_{i2}(0)-\dot{Y}_{i1}(0))]=0.
\end{equation*}

By the law of iterated expectations (LIE), this is further equivalent to 
\begin{eqnarray*}
E[1\{E[\dot{Y}_{i2}(0)-\dot{Y}_{i1}(0)|\omega_i]\le 0 \}E[\dot{Y}_{i2}(0)-\dot{Y}_{i1}(0)|\omega_i]]=0
\end{eqnarray*}
Since $E[E[\dot{Y}_{i2}(0)-\dot{Y}_{i1}(0)|\omega_i]]=0$, the result follows by Lemma \ref{lem:necessary}.

\smallskip

\noindent \textbf{``$\Longleftarrow$''}: By the LIE, we have that
\begin{eqnarray*}
E[\dot{Y}_{i2}(0)-\dot{Y}_{i1}(0)|G_i]=E[E[\dot{Y}_{i2}(0)-\dot{Y}_{i1}(0)|\omega_i,\nu_i]|G_i]=E[E[\dot{Y}_{i2}(0)-\dot{Y}_{i1}(0)|\omega_i]|G_i]=0,
\end{eqnarray*}
where the penultimate equality follows from Assumption \ref{ass:SEL}.
\end{proof}

Theorem \ref{thm:main} says that Assumption \ref{ass:PT} holds for all nondegenerate selection mechanisms if and only if the trend in the untreated potential outcomes is mean independent of the systematic determinants of selection.  
Conversely, if the units select on unobservables determining the trend in the untreated potential outcomes, then there exists (at least) one selection mechanism that breaks parallel trends. We summarize this implication in the following corollary.\footnote{We thank anonymous referees for suggesting this way of interpreting and restating the result in Theorem \ref{thm:main}.}
\begin{corollary}[Contrapositive]\label{cor:contrapositive} Under the assumptions of Theorem \ref{thm:main}, if the necessary and sufficient condition $E[\dot{Y}_{i2}(0)-\dot{Y}_{i1}(0)|\omega_i]=0\text{ a.s.}$ fails, then there exists (at least) one selection mechanism $g\in \mathcal{G}_{\omega}$ satisfying $P(G_i=1)\in (0,1)$ such that Assumption \ref{ass:PT} fails.
\end{corollary}

Stating the necessary and sufficient condition in Theorem \ref{thm:main} in terms of what the units select on and their information sets, $\oi$, rather than explicit selection mechanisms, has a key advantage: While it may be possible to determine (based on contextual and economic knowledge) what information units select on, there is likely uncertainty about the exact form of the selection mechanism, especially in settings with aggregate selection. The necessary and sufficient condition in Theorem \ref{thm:main} therefore relieves the researcher from the need to impose additional structure on the selection mechanism that is not justified by their context.

Next, we apply Theorem \ref{thm:main} to various practically relevant classes of selection mechanisms. To do so, we specialize the necessary and sufficient condition, $E[\dot{Y}_{i2}(0)-\dot{Y}_{i1}(0)|\omega_i]=0$, under different choices of $\oi$.

\begin{remark}[Parallel trends for all $g$ in a subclass of $\mathcal{G}_\omega$] \label{rem:subclasses}One might wonder how the necessary and sufficient conditions change when we focus on a subclass $\tilde{\mathcal{G}}\subset \mathcal{G}_\omega$, where $\mathcal{G}_\omega$ is an initial set of selection mechanisms. There are at least two ways to define subclasses in our framework. First, one can consider $\tilde{\mathcal{G}}=\mathcal{G}_{\tilde\omega}$, where $\tilde{\omega}$ is a subvector of $\omega$. The focus on such subclasses can often be motivated based on contextual and economic information. Given that we index the selection mechanisms by $\omega$, necessary and sufficient conditions for Assumption \ref{ass:PT} for all $g\in \mathcal{G}_{\tilde\omega}$ follow as direct corollaries from Theorem \ref{thm:main} with $\omega_i$ replaced by $\tilde\omega_i$, as we illustrate below.  Second, one can consider the same choice of $\omega$ but impose additional parametric and functional restrictions on $g$. The result in Theorem \ref{thm:main} continues to hold as long as these additional restrictions do not exclude the mechanism \eqref{eq:least_favorable_mechanism_unc} used to prove the ``only if'' direction. We do not explore such subclasses here because it is often difficult to justify parametric and functional form assumptions in DiD applications based on contextual and economic knowledge alone. 
\qed

\end{remark}

\subsection{What if we impose no restrictions on selection?}\label{sec:no_restrictions}

Unlike other causal inference methods, DiD does not explicitly restrict selection into treatment. This begs the question: What if researchers are indeed not willing to impose any assumptions on selection so that parallel trends needs to hold for all selection mechanisms? To answer this question, we apply Theorem \ref{thm:main} with $\oi$ including all unobservables that could enter the selection mechanism in \eqref{eq:selection_mechanism}, $\oi=(\ai,\eione,\eitwo,\mu_i,\eta_{i1},\eta_{i2})$. In this case, because $\oi$ contains the unobservable determinants of $\dot{Y}_{i2}(0)$ and $\dot{Y}_{i1}(0)$, we have that
$
E[\dot{Y}_{i2}(0)-\dot{Y}_{i1}(0)|\oi]=\dot{Y}_{i2}(0)-\dot{Y}_{i1}(0), 
$ yielding the following necessary and sufficient condition.
\begin{corollary}[No restrictions on selection]\label{cor:no_restriction} Under the assumptions of Theorem \ref{thm:main} with $\oi=(\ai,\eione,\eitwo,\mu_i,\eta_{i1},\eta_{i2})$, Assumption \ref{ass:PT} holds for all $g\in\mathcal{G}_\omega$ satisfying $P(G_i=1)\in(0,1)$ if and only if $\dot{Y}_{i1}(0)=\dot{Y}_{i2}(0)$ a.s. The result continues to hold if $\oi=(\ai,\eione,\eitwo)$.
\end{corollary}

To interpret the necessary and sufficient condition in Corollary \ref{cor:no_restriction}, it is helpful to rewrite it as 
$$
Y_{i2}(0)-Y_{i1}(0)=E[Y_{i2}(0)-Y_{i1}(0)].
$$
This shows that absent any restrictions on selection, parallel trends implies that the potential outcomes are constant over time, except for common mean shifts. This essentially rules out time-varying unobservables.  To see this, consider the following standard two-way model  \begin{equation}
Y_{it}(0)=\ai + \lt +\eit, \quad E[\eit]=0, \label{eq:separable}
\end{equation}
where $\lt$ is a nonstochastic time trend. The necessary and sufficient condition specialized to this separable outcome model is $\eione=\eitwo$, implying that $\eit$ is time-invariant.

Given that the necessary and sufficient condition for the unrestricted class of selection mechanisms is implausible in most applications, we next consider restricted classes of selection mechanisms. 
\begin{remark}[Testability of necessary and sufficient condition]
The necessary and sufficient condition in Corollary \ref{cor:no_restriction}, $\dot{Y}_{i1}(0)=\dot{Y}_{i2}(0)$, is testable using the data from the control group. This is noteworthy because Assumption \ref{ass:PT} is generally untestable.\footnote{For a discussion on necessary and sufficient conditions for parallel pre-trends, see \citet{ghanem2026when}.} \qed
\end{remark}
\subsection{Necessary and sufficient conditions for restricted classes of mechanisms}
\label{sec:restricted_classes}

DiD applications differ substantially in terms of what determines selection into treatment. Corollary \ref{cor:no_restriction} shows that restrictions on selection are unavoidable in realistic settings. In the following, we consider various restrictions on selection mechanisms that are practically relevant and well-established in the literature. The list of restrictions we consider is not exhaustive. The advantage of Theorem \ref{thm:main} is that researchers can specialize the necessary and sufficient condition for the class of selection mechanisms relevant for their application.

\subsubsection{Imperfect foresight}

Selection on pre-treatment unobservables is prevalent in DiD applications. An example is when units make their selection decision based on expected future potential outcomes and costs, while only having access to pre-treatment information (e.g., $\omega_i=(\ai,\eione,\mu_i,\eta_{i1})$ in Examples \ref{ex:selection_outcomes} and \ref{ex:selection_effects}). Another example is when units select into treatment in response to negative (or positive) pre-treatment shocks or their pre-treatment outcome falling below (or above) a specific threshold (e.g., Example \ref{ex:selection_outcomes} with $\beta=0$ and $\omega_i=(\ai,\eione,\mu_i,\eta_{i1})$) and more broadly when there is feedback \citep[e.g.,][]{bonhomme2025back,chamberlain2022feedback}. Finally, imperfect foresight is relevant when individuals are myopic.

We first consider the case where selection depends on the time-invariant and all pre-treatment unobservables, but not on the unobservables in the post-treatment period (unlike in Corollary \ref{cor:no_restriction}). Thus, $\omega_i=(\ai,\eione,\nu_i,\eta_{i1})$. For this case, Theorem \ref{thm:main} implies the following corollary.\footnote{We are grateful to Eric Mbakop for encouraging us to pursue necessary and sufficient conditions instead of necessary conditions only under imperfect foresight (and selection on fixed effects, discussed below).}
\begin{corollary}[Imperfect foresight: Case 1] \label{cor:if1} Under the assumptions of Theorem \ref{thm:main} with $\oi=(\ai,\eione,\mu_i,\eta_{i1})$, Assumption \ref{ass:PT} holds for all $g\in \mathcal{G}_{\omega}$ satisfying $P(G_i=1)\in (0,1)$ if and only if $E[\dot{Y}_{i2}(0)|\ai,\eione,\mu_i,\eta_{i1}]=\dot{Y}_{i1}(0)$ a.s.
\end{corollary}

The necessary and sufficient condition in Corollary \ref{cor:if1} is a martingale-type condition on the untreated potential outcomes with respect to the unobservables that determine selection in this case. To build further intuition and to compare this result to Corollary \ref{cor:no_restriction}, note that the necessary and sufficient condition can be equivalently written as
$$
Y_{i2}(0)-Y_{i1}(0)=E[Y_{i2}(0)-Y_{i1}(0)]+\zeta_{i2}, \quad E[\zeta_{i2}|\ai,\eione,\mu_i,\eta_{i1}]=0.
$$
That is, the  condition in Corollary \ref{cor:if1} allows the untreated potential outcomes to vary over time beyond deterministic mean shifts but requires the stochastic component of the change over time, $\zeta_{i2}$, to be mean-independent of the pre-treatment unobservables $(\ai,\eione,\mu_i,\eta_{i1})$.

In the two-way model \eqref{eq:separable}, the condition in Corollary \ref{cor:if1} becomes $E[\eitwo|\ai,\eione,\mu_i,\eta_{i1}]=\eione$, a martingale-type property that implies that $\eitwo-\eione+\zeta_{i2}$, where $\zeta_{i2}$ is an innovation satisfying $E[\zeta_{i2}|\ai,\eione,\mu_i,\eta_{i1}]=0$. This necessary and sufficient condition relates to the consistency of the first-difference estimator under violations of strict exogeneity when the idiosyncratic shocks follow a unit root.\footnote{We thank St\'ephane Bonhomme for pointing out this connection.} 

The martingale-type condition in Corollary \ref{cor:if1} arises because the units select on the determinants of $Y_{i1}(0)$, $(\ai,\eione)$. As a result, the condition in  Theorem \ref{thm:main} with $\omega_i=(\ai,\eione,\nu_i,\eta_{i1})$,
$
E[\dot{Y}_{i2}(0)-\dot{Y}_{i1}(0)|\ai,\eione,\mu_i,\eta_{i1}]=0$, simplifies to $E[\dot{Y}_{i2}(0)|\ai,\eione,\mu_i,\eta_{i1}]=\dot{Y}_{i1}(0)$, as stated in Corollary \ref{cor:if1}. If selection is based on pre-treatment unobservables that do not include the determinants of $Y_{i1}(0)$, the martingale condition does not arise, as the next corollary shows.
\begin{corollary}[Imperfect foresight: Case 2]\label{cor:if2} Under the assumptions of Theorem \ref{thm:main} with $\oi=(\mu_i,\eta_{i1})$, Assumption \ref{ass:PT} holds for all $g\in \mathcal{G}_{\omega}$ satisfying $P(G_i=1)\in (0,1)$ if and only if $E[\dot{Y}_{i2}(0)-\dot{Y}_{i1}(0)|\mu_i,\eta_{i1}]=0$ a.s.
\end{corollary}
The necessary and sufficient condition in Corollary \ref{cor:if2} resembles but is different from the standard definition of a martingale-difference condition on the difference $\Delta \dot{Y}_{i2}(0)=\dot{Y}_{i2}(0)-\dot{Y}_{i1}(0)$ \citep[e.g.,][p.189]{hamilton1994time} since the conditioning set neither includes lagged values of $\Delta \dot{Y}_{i2}(0)$ nor its determinants. It can therefore be consistent with a wider class of time series processes than the condition in Corollary \ref{cor:if1}, which implies that $\dot{Y}_{it}(0)$ is a martingale.

\subsubsection{Roy-style selection} 
Here, we consider settings with Roy-style selection, as in Example \ref{ex:selection_effects}. We first consider the case where the units know their treatment effect $\tau_{i2}=Y_{i2}(1)-Y_{i2}(0)$ and costs $\kappa_{i2}$ in period $t=2$. For this case, Theorem \ref{thm:main} implies the following corollary. 
\begin{corollary}[Roy-style selection] \label{cor:roy_selection} Under the conditions of Theorem \ref{thm:main} with $\oi=(\tau_{i2},\kappa_{i2})$, Assumption \ref{ass:PT} holds for all $g\in \mathcal{G}_{\omega}$ satisfying $P(G_i=1)\in (0,1)$ if and only if $E[\dot{Y}_{i2}(0)-\dot{Y}_{i1}(0)|\tau_{i2},\kappa_{i2}]=0$ a.s.
\end{corollary}
Rewriting the necessary and sufficient condition as  $E[\dot{Y}_{i2}(0)|\tau_{i2},\kappa_{i2}]=E[\dot{Y}_{i1}(0)|\tau_{i2},\kappa_{i2}]$
demonstrates that it requires the conditional expectation of the demeaned untreated potential outcome given the treatment effects and costs to be equal across time. The condition would hold immediately if $(\tau_{i2},\kappa_{i2})$ were independent of the untreated potential outcomes. However, this is an arguably unrealistic restriction in many applications. 

In the two-way model \eqref{eq:separable},  the necessary and sufficient condition in Corollary \ref{cor:roy_selection} simplifies to 
$E[\eitwo|\tau_{i2},\kappa_{i2}]=E[\eione|\tau_{i2},\kappa_{i2}].$
The condition would be clearly violated if $\tau_{i2}$ is a monotonic transformation of $\eitwo$. The condition is more plausible if instead $\tau_{i2}$ and $\kappa_{i2}$ were determined by time-invariant factors. This discussion demonstrates that under Roy-style selection, parallel trends implies restrictions on treatment effect heterogeneity.

In some applications, assuming that the units know their treatment effects and costs in $t=2$ might not be plausible. Suppose instead that selection is based on expected treatment effects and costs conditional on all the available pre-treatment information, $E[\tau_{i2}|\oi]$ and $E[\kappa_{i2}|\oi]$, respectively, where $\oi=(\ai,\eione,\mu_i,\eta_{i1})$. For this case, the result in Corollary \ref{cor:if1} applies, and the necessary and sufficient condition for parallel trends is the martingale-type condition, $E[\dot{Y}_{i2}(0)|\ai,\eione,\mu_i,\eta_{i1}]=\dot{Y}_{i1}(0)$.  If, instead, the expectations are conditional on the time-invariant unobservables $(\ai,\mu_i)$, then the result in Corollary \ref{cor:fe} below applies. More generally, Theorem \ref{thm:main} allows for considering many other variants of Roy-style selection.

\subsubsection{Selection on fixed effects} Here, we consider the classical case of selection on fixed effects. Selection on fixed effects is plausible, for example, if the units' information sets only contain the time-invariant unobservables (in addition to $\nu_i$), so that $\omega_i=(\alpha_i,\mu_i)$, or if selection is directly based on fixed effects, as in Example \ref{ex:selection_fe}. 

The following corollary provides the necessary and sufficient condition under selection on fixed effects.
\begin{corollary}[Selection on fixed effects]\label{cor:fe} Under the conditions of Theorem \ref{thm:main} with $\oi=(\ai,\mu_i)$, Assumption \ref{ass:PT} holds for all $g\in \mathcal{G}_{\omega}$ satisfying $P(G_i=1)\in (0,1)$ if and only if  $E[\dot{Y}_{i2}(0)-\dot{Y}_{i1}(0)|\ai,\mu_i]=0$ a.s.
\end{corollary}

Corollary \ref{cor:fe} shows that the necessary and sufficient condition for parallel trends under selection on fixed effects is a time-homogeneity restriction on the conditional mean of the untreated potential outcome. To interpret this necessary condition, note that it implies that
$$
Y_{i2}(0)-Y_{i1}(0)=E[Y_{i2}(0)-Y_{i1}(0)]+\zeta_{i2}, \quad E[\zeta_{i2}|\ai,\mu_i]=0.
$$
This shows that Corollary \ref{cor:fe} implies a weaker mean-independence condition on the stochastic trend component $\zeta_{i2}$ than, for example, Corollary \ref{cor:if1}, thus highlighting a trade-off between restrictions on selection and the evolution of the untreated potential outcomes over time.

In the context of the two-way model \eqref{eq:separable}, the necessary condition in Corollary \ref{cor:fe} simplifies to $E[\eione|\ai,\mu_i]=E[\eitwo|\ai,\mu_i]$, a time-homogeneity assumption on the conditional mean of the idiosyncratic shocks.\footnote{The time-homogeneity condition in Corollary \ref{cor:fe} relates to the strict exogeneity condition in fixed effects models. Suppose that $G_i=g(\ai)$, then the strict exogeneity assumption $E[\eit|G_i,\ai]=0$ implies that $E[\eit|\ai]=0$, which in turn implies the time homogeneity condition in Corollary \ref{cor:fe} with $\oi=\ai$.} While it is not surprising that the condition $E[\eione|\ai,\mu_i]=E[\eitwo|\ai,\mu_i]$ is sufficient for parallel trends if $G_i=g(\ai,\mu_i,\nu_i)$, Corollary \ref{cor:fe} demonstrates that this condition is in fact \emph{necessary} for parallel trends under model \eqref{eq:separable}.

\subsubsection{Selection on lagged outcomes (unconfoundedness)} 

A popular alternative to the parallel trends assumption is to assume that selection is based on lagged dependent variables \citep[e.g.,][]{angrist2009mostly,ding2019bracketing}, $Y_{i2}(0)\indep G_i\mid Y_{i1}(0)$, which we will refer to as \emph{unconfoundedness}. Here we apply Theorem \ref{thm:main} to characterize the parallel trends assumption under unconfoundedness and shed light on the connection between these popular assumptions. Setting $\oi=Y_{i1}(0)$ in Theorem \ref{thm:main}, so that $G_i$ satisfies unconfoundedness, provides such a characterization. 
\begin{corollary}[Unconfoundedness] \label{cor:unconfoundedness} Under the assumptions of Theorem \ref{thm:main} with $\oi=Y_{i1}(0)$, Assumption \ref{ass:PT} holds for all $g\in \mathcal{G}_{\omega}$ satisfying $P(G_i=1)\in (0,1)$ if and only if $E[\dot{Y}_{i2}(0)|Y_{i1}(0)]=\dot{Y}_{i1}(0)$ a.s.
\end{corollary}

Researchers imposing unconfoundedness typically do not impose additional explicit assumptions on the exact form of the selection mechanism. This provides an additional motivation for focusing on nonparametric conditions and deriving necessary and sufficient conditions for Assumption \ref{ass:PT} holding for all $g\in \mathcal{G}_{\omega}$.

Corollary \ref{cor:unconfoundedness} shows that parallel trends  is equivalent to the demeaned potential outcomes $\dot{Y}_{it}(0)$ satisfying a martingale property under unconfoundedness. Written in terms of original outcomes $Y_{it}(0)$, the condition becomes
$
E[Y_{i2}(0)|Y_{i1}(0)]=Y_{i1}(0)+E[Y_{i2}(0)-Y_{i1}(0)]
$, which holds, for example, if $Y_{it}(0)$ is a random walk with drift. 

It is interesting to relate the result in Corollary \ref{cor:unconfoundedness} to results in the existing literature.

\begin{remark}[Connection to \citet{ding2019bracketing}] Here, we connect the analysis to \citet{ding2019bracketing}  \citep[see also][]{angrist2009mostly}. \citet{ding2019bracketing} assume that 
\begin{equation}
E[Y_{i2}| Y_{i1},G_i]=\theta_1+\theta_2 Y_{i1}+\tau G_i.\label{eq:ding_li_cef}
\end{equation}
Proposition 1 in \citet{ding2019bracketing}, written in terms of population coefficients, implies that the estimand under \eqref{eq:ding_li_cef} is
\begin{equation}
E[Y_{i2}|G_i=1]-E[Y_{i2}|G_i=1]-\theta_2(E[Y_{i1}|G_i=1]-E[Y_{i1}|G_i=0])\label{eq:ding_li_ldv}
\end{equation}
This estimand is equivalent to $\did$ if and only if $\theta_2=1$.\footnote{Note that the main bracketing result, Theorem 1 in \citet{ding2019bracketing}, does not apply in this case since Condition 1 (stationarity) is violated.} 

To relate the result in \citet{ding2019bracketing} to Corollary \ref{cor:unconfoundedness}, note that under unconfoundedness, \eqref{eq:ding_li_cef} implies that
$
E[Y_{i2}(0)| Y_{i1}(0)]=\theta_1+\theta_2 Y_{i1}(0).
$
Corollary \ref{cor:unconfoundedness} implies that $\theta_2=1$ is necessary and sufficient for Assumption \ref{ass:PT} to hold. Since $\did=\att$ under Assumption \ref{ass:PT}, the result in \citet[][Proposition 1]{ding2019bracketing} is consistent with the necessary and sufficient condition in Corollary \ref{cor:unconfoundedness} under the linearity assumption \eqref{eq:ding_li_cef}. \qed
\end{remark}

\subsubsection{Other selection mechanisms and trade-offs}\label{sec:other}
The previous subsections illustrate the implications of Theorem \ref{thm:main} for various empirically relevant classes of selection mechanisms. Importantly, the result in Theorem \ref{thm:main} is very general and allows us to characterize the empirical content of parallel trends for many other relevant classes of selection mechanisms, including many of the existing selection models discussed in the literature and reviewed in Section \ref{sec:literature}. Applying Theorem \ref{thm:main} only requires specifying $\oi$, that is, what information the units select on. In practice, the specification of $\oi$ should be guided by contextual and economic knowledge.

Varying $\oi$ allows us to characterize trade-offs between restrictions on selection, encoded in $\oi$, and restrictions on the time series properties of the untreated potential outcomes. The richer the information that the units select on, the more restrictive the time series restriction necessary for parallel trends to hold.  The time series restrictions are particularly strong if selection is based on the unobservable determinants of $Y_{it}(0)$, that is, if $\oi$ includes or depends on $(\ai,\eione,\eitwo)$. 

The trade-offs between assumptions on selection and the time series properties of the untreated potential outcomes are particularly easy to see under explicit models for the untreated potential outcomes. To illustrate, suppose that $Y_{it}(0)$ is given by model \eqref{eq:separable}. Then, our necessary and sufficient conditions  imply that parallel trends holds, for example, in the following scenarios:\footnote{Theorem \ref{thm:main} provides a general framework for deriving sufficient conditions, depending on what unit select on. However, in some applications, researchers might be interested in imposing other types of assumptions on selection. In Appendix \ref{app:sufficient_condition_exchangeability}, we provide a sufficient condition based on symmetry of the selection mechanism.}
\begin{enumerate}[(a)]\setlength\itemsep{0pt}
\item Imperfect foresight (case 1): (i) $\oi=(\ai,\eione,\mu_i,\eta_{i1})$ and (ii) $E[\eitwo|\ai,\eione,\mu_i,\eta_{i1}]=\eione$

\item Imperfect foresight (case 2): (i) $\oi=(\mu_i,\eta_{i1})$ and (ii) $E[\eitwo-\eione|\mu_i,\eta_{i1}]=0$

\item Roy-style selection: (i) $\oi=(\tau_{i2},\kappa_{i2})$ and (ii) $E[\eitwo|\tau_{i2},\kappa_{i2}]=E[\eione|\tau_{i2},\kappa_{i2}]$
\item Selection on fixed effects: (i) $\oi=(\ai,\nu_i)$ and (ii) $E[\eitwo|\ai,\nu_i]=E[\eione|\ai,\nu_i]$

\end{enumerate}
The conditions (a)--(d) provide practitioners with explicit theory-based templates for assessing and justifying parallel trends assumptions and can be used in conjunction with the selection mechanisms in Examples \ref{ex:selection_outcomes}, \ref{ex:selection_effects}, \ref{ex:selection_fe}, and \ref{ex:heterogneous_units}, or other selection mechanisms in the literature. These conditions allow researchers to provide, in the words of \citet{mckenzie2022a}, ``plausible rhetorical arguments as to why we should think the [parallel trends] assumptions hold.''

\subsection{The relevance of parallel trends for all $g\in\mathcal{G}_{\omega}$ for empirical practice}
\label{subsec:for_all_why}

Theorem \ref{thm:main} and Corollaries \ref{cor:no_restriction}--\ref{cor:fe} present necessary and sufficient conditions for parallel trends to hold for all $g\in\mathcal{G}_{\omega}$ because we are interested in nonparametric conditions, consistent with the nonparametric (model-agnostic) treatment of selection mechanisms in the DiD analyses. Here we  elaborate on the practical relevance of focusing on \emph{parallel trends for all $g\in\mathcal{G}_{\omega}$}. This is a crucial question, as it might not be obvious why practitioners should consider  \emph{parallel trends for all $g\in\mathcal{G}_{\omega}$}, when it is clearly stronger than \emph{parallel trends for a specific $g\in\mathcal{G}_{\omega}$}.\footnote{Alternatively, one could consider parallel trends for a subclass $\tilde{\mathcal{G}}\subset \mathcal{G}$, as discussed in Remark \ref{rem:subclasses}. Suppose, for example, that contextual and economic information suggests that $\tilde{\mathcal{G}}=\mathcal{G}_{\tilde\omega}$, where $\tilde\omega$ is a subvector of $\omega$. In this case, the same discussion and issues are relevant for parallel trends for all $g\in \mathcal{G}_{\tilde\omega}$.} To keep our discussion concrete, we focus on the case of imperfect foresight where the units select on pre-treatment unobservables, so that $\oi=(\ai,\eione,\mu_i,\eta_{i1})$, but the arguments we make are relevant for any class of selection mechanisms.

First, in applications where contextual or economic knowledge  suggests that units select on pre-treatment unobservables, there is likely uncertainty about the exact form of the selection mechanism. For instance, practitioners might not want to take a stance on (i) how expectations are formed (e.g., subjective expectations may be different from conditional expectations) or (ii) whether selection is based on the discounted sum of expected untreated outcomes (Example \ref{ex:selection_outcomes}), on expected gains (Example \ref{ex:selection_effects}), or other quantities motivated by economic models of selection. As discussed above, the uncertainty about the exact form of the selection mechanism can be particularly pronounced in settings with aggregate selection.

Second, even if one is certain about the exact parametric form of the units' selection mechanism, the exact distribution of unobservables, and how expectations are formed, one would typically not want parallel trends to depend on specific parameter choices. To illustrate, consider Example \ref{ex:selection_outcomes} with $\omega_i=(\ai,\eione)$. Figure \ref{fig:numerical} plots the parallel trends violation, $E[Y_{i2}(0)-Y_{i1}(0)|G_i=1]-E[Y_{i2}(0)-Y_{i1}(0)|G_i=0]$, for different discount factors $\beta$ against $Cov(\eione,\eitwo)$ for two different outcome models: (a) a separable model with autocorrelated shocks, (b) an autoregressive model with a drift. For both models, regardless of $\beta$, the parallel trends violation is exactly zero when $Cov(\eione,\eitwo)=1$, which corresponds to the martingale condition (Panels (a) and (b) in Figure \ref{fig:numerical}). For the separable model, parallel trends additionally holds for very specific combinations of $\beta$ and $Cov(\eione,\eitwo)$ (Panel (a) in Figure \ref{fig:numerical}).  These additional instances of parallel trends, however, require a researcher to not only be willing to choose a parametric distribution, but also to rely on very particular combinations of the discount factor $\beta$ and $Cov(\varepsilon_{i1},\varepsilon_{i2})$ (in addition to a specific selection and outcome model). By focusing on parallel trends for all $g\in \mathcal{G}_{\omega}$, we rule out these additional cases and focus on ``robust'' instances of parallel trends.

\begin{figure}[ht]\caption{Numerical Illustration: Example \ref{ex:selection_outcomes} with $\omega_i=(\ai,\eione)$}
\begin{tabular}{cc}
(a) Separable model  &(b) Autoregressive model\\
    \includegraphics[width=8cm]{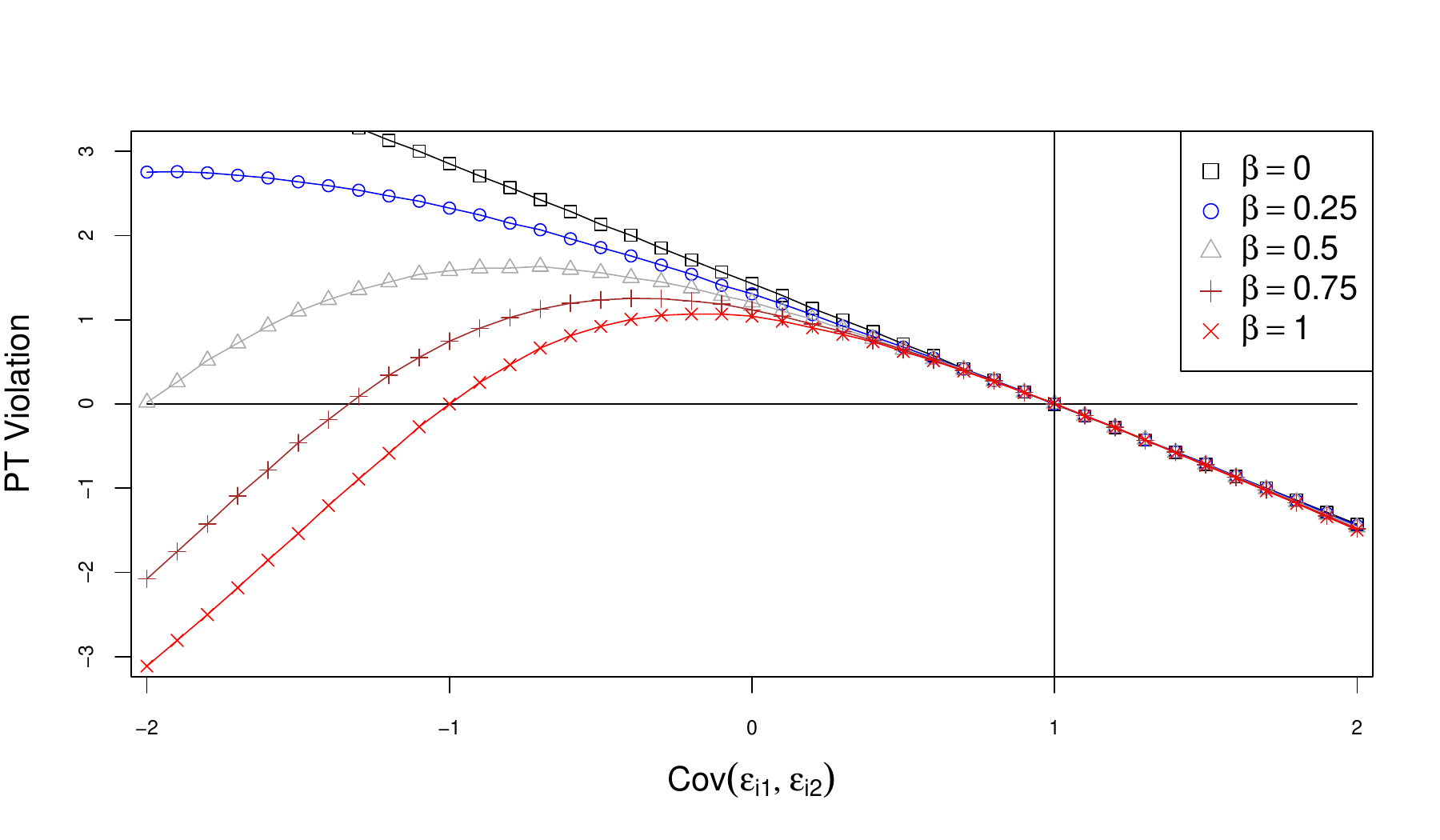}&    \includegraphics[width=8cm]{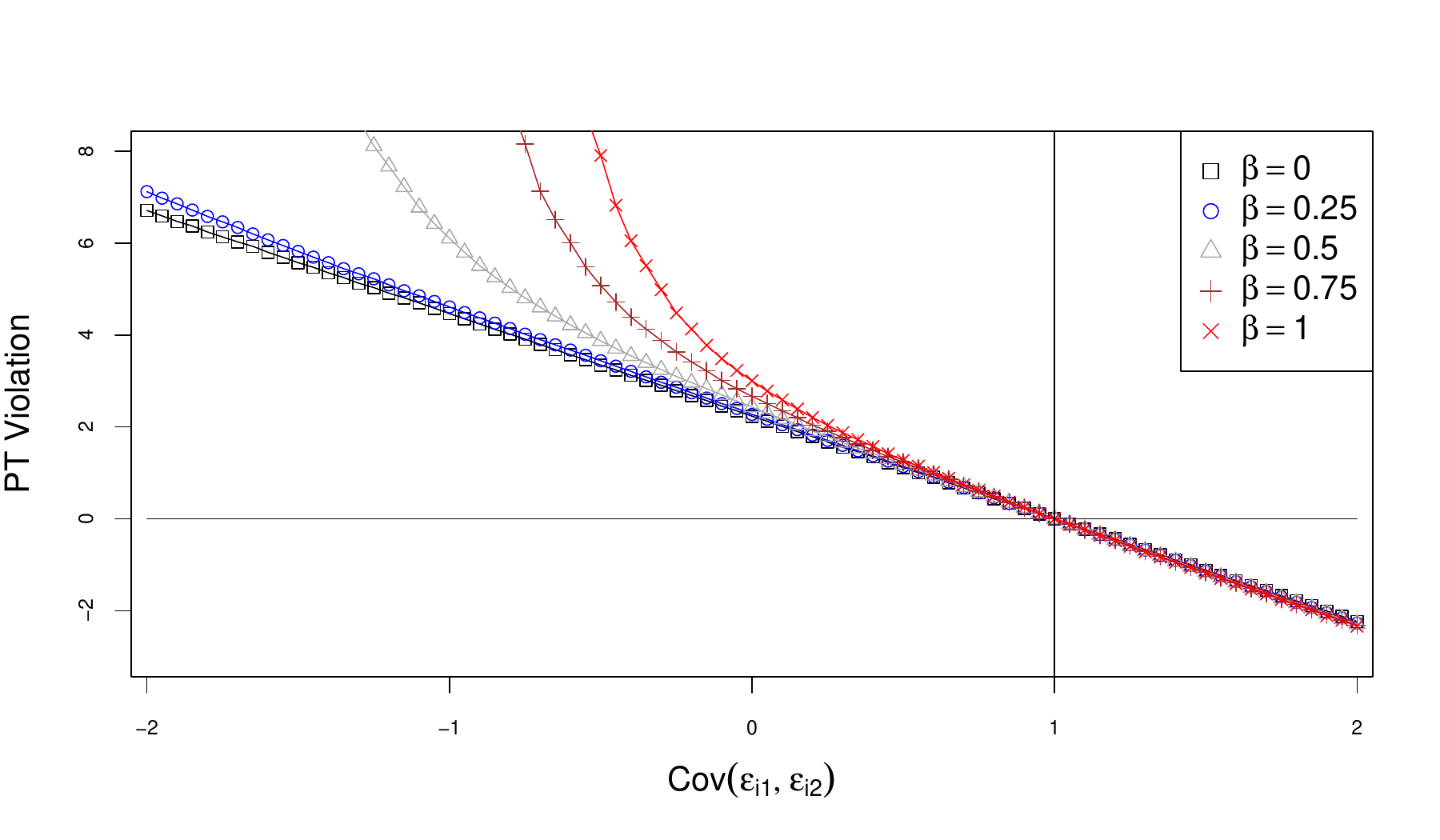}\\
    \end{tabular}\\
\footnotesize{\textit{Notes:} For both (a) and (b), $G_i=1\{Y_{i1}(0)+\beta E[Y_{i2}(0)|\ai,\eione]\le c\alpha_i\}$ for $i=1,\dots, n$. The parallel trends violation (PT violation), $E[Y_{i2}(0)-Y_{i1}(0)|G_i=1]-E[Y_{i2}(0)-Y_{i1}(0)|G_i=0]$, is computed numerically using $n=10^6$. For (a), the untreated potential outcomes are generated as follows: $Y_{it}(0)=\ai+\lambda_t+\eit$ for $t=1,2$, where $\lambda_1=0$, $\ai\sim N(0,1)$, $\ai\indep (\eione,\eitwo)$, and $(\eione,\eitwo)$ are jointly normal. We normalize $Var(\eione)=1$ such that $E[\dot{Y}_{i2}(0)|\ai,\eione]=\ai+Cov(\eione,\eitwo)\eione$, where $Cov(\eione,\eitwo)=\rho_{\varepsilon}\sigma_2$, $\sigma_2^2\equiv Var(\eitwo)$ and $\rho_{\varepsilon}\equiv Corr(\eione,\eitwo)$. The plot in (a) presents the PT violations for  a grid of values of $Cov(\eione,\eitwo)$ with $\lambda_2=\sigma_2=2,~ c=0.5$. For (b), $Y_{i1}(0)=\ai+\eione$, $Y_{i2}(0)=\rho_2\ai+\lambda_2+\eitwo$, where $\ai\indep (\eione,\eitwo)$, $\eitwo=\rho_2\eione+\zeta_{i2}$, and $(\eione,\zeta_{i2})$ are jointly normal. In this model, $E[\dot{Y}_{i2}(0)|\ai,\eione]=\rho_2\dot{Y}_{i1}(0)$. We normalize $Var(\eione)=1$ such that $Cov(\eione,\eitwo)=\rho_2$ and use $\sigma_2^2$ to denote $Var(\zeta_{i2})$. The plot in (b) presents the PT violation for a grid of values of $Cov(\eione,\eitwo)$ with $\lambda_2=\sigma_2=2,~ c=0.25$.}
\label{fig:numerical}
\end{figure}

\subsection{Extensions}
Here, we summarize three main extensions. We refer to the corresponding appendices for details.

\subsubsection{Disaggregate data and aggregate decisions} 
In many DiD applications, the selection decisions are made at the aggregate level (e.g., at the county or state level), while outcome data are available at the disaggregate level (e.g., at the individual or firm level). In Appendix \ref{app:disaggregate}, we consider a sharp DiD design with $n_s$ individuals, indexed by $i=1,\dots,n_s$, belonging to aggregate unit $s$. Let $Y_{st}(0)$ denote the untreated potential outcome for aggregate unit $s$ in period $t$ (e.g., the average of the disaggregate outcomes $Y_{ist}(0)$, $Y_{st}(0)=n_s^{-1}\sum_{i=1}^{n_s}Y_{ist}(0)$) and $G_s$ the selection decision of unit $s$. The necessary and sufficient conditions in this section directly apply to this setting by replacing $i$ with $s$ and interpreting the unobservables and potential outcomes as aggregate quantities.
That said, being explicit about the aggregation can help ``microfound'' restrictions on selection, as we discuss in Appendix \ref{app:disaggregate}.

\subsubsection{Multiple periods and groups}
In Appendix \ref{app: general did}, we extend our results to DiD designs with multiple periods and multiple groups.\footnote{Our setup and notation build on \citet{callaway_SantAnna_2021}, \citet{sun_abraham_2021}, and \citet{roth_et_al_DiD_survey}.} Specifically, we consider a staggered adoption setting with $T$ periods, where no units are treated at $t=1$ and some units remain untreated at $t=T$. We provide an analog of the general necessary and sufficient condition in Theorem \ref{thm:main}, and based on this condition, all our theoretical results can be extended to DiD settings with multiple groups and staggered adoption.

\subsubsection{Covariates}

In many applications, parallel trends may only be plausible conditional on covariates \citep[e.g.,][]{heckman1997matching,abadie2005DiD,santanna2020drdid, callaway_SantAnna_2021}. Therefore, we study the role of covariates through the lens of selection into treatment in Appendix \ref{app:covariates}. We explicitly allow for a vector of both time-invariant and time-varying  covariates, $X_{it}$, assuming that $X_{it}$ is not affected by the treatment. In Appendix \ref{app:separability}, we show that the necessary and sufficient conditions for conditional parallel trends imply separability requirements on how the covariates can enter the outcome model.   Appendix \ref{sec:selection-templates-PTX} provides selection-based templates for justifying conditional parallel trends assumptions for separable models. In Appendix \ref{sec:covariates_nonseparable}, we propose a weaker conditional parallel trends assumption that accommodates a rich class of nonseparable models and provide sufficient conditions for this assumption.

\section{Selection-based bias decomposition with an application to benchmarking the bias relative to imperfect foresight}
\label{sec:bias_characterization}

The necessary and sufficient conditions in Section \ref{sec:necessary_sufficient} demonstrate that if we allow for selection on time-varying shocks and in particular on the determinants of the untreated potential outcomes, parallel trends implies strong restrictions on the time series properties of these outcomes. Here we analyze the bias of DiD when these necessary and sufficient conditions are violated.

The bias analysis accommodates, but does not require, data on additional pre-treatment periods.
Suppose that there is one additional pre-treatment period, $t=0$, in which no units are treated, so that $Y_{i0}=Y_{i0}(0)$ for $i=1,\dots,n$. We allow selection to also depend on the shocks in period $t=0$, that is, we allow $\oi$ to be a function of $(\eizero,\eta_{i0})$.

\subsection{A general bias decomposition}

The following lemma provides a decomposition of the bias of DiD.

\begin{lemma}[Bias decomposition]\label{lem:bias_decomposition} Suppose that $P(G_i=1)\in (0,1)$. Then, for a given $\oi$, the bias of DiD can be decomposed as follows,
$$
\did-\att=\biasselpost+\biasmtgpost,
$$
where 
\begin{eqnarray*}
    \biasselpost&=&\frac{E[(G_i-E[G_i|\oi])(\dot{Y}_{i2}(0)-\dot{Y}_{i1}(0))]}{P(G_i=1)P(G_i=0)},\\
    \biasmtgpost&=&\frac{E[E[G_i|\oi]E[\dot{Y}_{i2}(0)-\dot{Y}_{i1}(0)|\oi]]}{P(G_i=1)P(G_i=0)}.
\end{eqnarray*}
\end{lemma}
\begin{proof}
See Appendix \ref{proof:lem:bias_decomposition}.
\end{proof}

The decomposition in Lemma \ref{lem:bias_decomposition} shows that the bias of DiD is equal to the sum of two components. The component $\biasselpost$ captures the parallel trends violation due to selection on unobservables not contained in $\oi$. The component $\biasmtgpost$ captures the parallel trends violation due to deviations from the necessary and sufficient condition for Assumption \ref{ass:PT} when selection is based on $\oi$ (Theorem \ref{thm:main}).

The bias decomposition in Lemma \ref{lem:bias_decomposition} is specific to a choice of information set $\oi$. One can think of the choice of $\oi$ as an anchor for the bias decomposition. We emphasize that the bias decomposition neither relies on assumptions on selection nor the time series process of $\dot{Y}_{it}(0)$. Indeed, the goal is to characterize  the extent to which the bias depends both on how much the units' selection behavior departs from selection purely on $\oi$ ($\biasselpost$), and on deviations from the corresponding time series restriction ($\biasmtgpost$).

In the following, we demonstrate how to benchmark the bias components in empirical applications. To do so, we need to choose an empirically plausible information set $\oi$ to anchor the decomposition. Selection on pre-treatment unobservables (imperfect foresight) is prevalent in DiD applications, including those in Section \ref{sec:empirical_illustration}. Therefore, it provides an empirically compelling anchor for the decomposition. Moreover, our analysis provides a template for tailoring the bias benchmarking to other information sets.

\subsection{Application to imperfect foresight}\label{sec:application_if}

Suppose that researchers deem selection on pre-treatment unobservables likely. In this case, there are two potential sources of bias: selection on post-treatment unobservables and violations of the martingale condition. In the following, we characterize these two bias terms and provide stratgies for benchmarking them using pre-treatment data.

To simplify the notation, define $\varepsilon_i^{t}\equiv (\varepsilon_{i0},\dots,\varepsilon_{it})$ and $\eta_i^{t}\equiv(\eta_{i0},\dots,\eta_{it})$ for $t>0$. Suppose that $\oi=(\ai,\eipre,\mu_i,\etaipre)$. In this case, the necessary and sufficient condition in Corollary \ref{cor:if1} generalizes to
\begin{equation}
E[\dot{Y}_{i2}(0)|\ai,\eipre,\mu_i,\etaipre]=\dot{Y}_{i1}(0).\label{eq:mtg_with_pre_period}
\end{equation}
To aid interpretation and assess the magnitude of the bias components in Lemma \ref{lem:bias_decomposition}, we characterize them under the following linear relaxation of the martingale condition \eqref{eq:mtg_with_pre_period}.\footnote{We focus on linear relaxations for convenience. Extensions to nonparametric relaxations of the form $
E[\dot{Y}_{it}(0)|\ai,\eizero,\dots, \eitminusone]=\sigma_{t}\rho(\dot{Y}_{i(t-1)}(0)),$
where $\rho(\cdot)$ is an arbitrary nonparametric function and $\sigma_1$ is normalized to one, are straightforward.} 
\begin{customass}{REL}\label{ass:REL} The following relaxation of the martingale condition holds:\footnote{Assumption \ref{ass:REL} yields a linear autoregressive model. This class of models has been studied extensively in the time series literature under restrictions on the heterogeneity of the coefficient \citep[e.g.,][]{nicholls1982random,regis2022random}.}
$$
E[\dot{Y}_{it}(0)|\ai, \varepsilon_i^{t-1},\mu_i,\eta_i^{t-1}]=\rho_t\dot{Y}_{i(t-1)}(0), \quad i=1,\dots,n,\quad t=1,2
$$ 
\end{customass}
Assumption \ref{ass:REL} imposes an AR(1) model with time-varying coefficients on $\dot{Y}_{it}(0)$,
\begin{equation}
\dot{Y}_{it}(0)=\rho_t\dot{Y}_{i(t-1)}(0)+\zeta_{it},\quad E[\zeta_{it}|\ai,\varepsilon_i^{t-1},\mu_i,\eta_i^{t-1}]=0.\label{eq:zeta}
\end{equation}
Note that Assumption \ref{ass:REL} is imposed on the demeaned potential outcomes and thus allows for mean shifts in $Y_{it}(0)$. If $\rho_2=1$, Assumption \ref{ass:REL} reduces to the martingale assumption, $E[\dot{Y}_{i2}(0)|\ai,\eipre,\mu_i,\etaipre]=\dot{Y}_{i1}(0)$. As a result, deviations from this martingale property under Assumption \ref{ass:REL} are fully characterized by the deviation of $\rho_2$ from $1$, $(\rho_2-1)$. 

The following proposition characterizes the bias components under Assumption \ref{ass:REL}.
\begin{proposition}[Bias characterization under linear martingale relaxation]\label{prop:bias_characterization}
    Suppose that $P(G_i=1)\in(0,1)$, $\oi=(\ai,\eipre,\mu_i,\etaipre)$, and Assumption \ref{ass:REL} holds. Then, 
\begin{eqnarray*}
    \biasselpost &=& E[\zeta_{i2}|G_i=1]-E[\zeta_{i2}|G_i=0],\\
    \biasmtgpost &=& (\rho_2-1)(E[Y_{i1}|G_i=1]-E[Y_{i1}|G_i=0]).
\end{eqnarray*}
\end{proposition}
\begin{proof}
See Appendix \ref{proof:prop:bias_characterization}.
\end{proof}

Proposition \ref{prop:bias_characterization} shows that $\biasselpost$ is equal to the mean difference in $\zeta_{i2}$ between both groups. Recall that $\zeta_{i2}$ is the difference between $\dot{Y}_{i2}(0)$ and its conditional expectation given pre-treatment unobservables $E[\dot{Y}_{i2}(0)|\ai,\eipre,\mu_i,\etaipre]$. If selection depends on post-treatment unobservables including $\eitwo$, then $\zeta_{i2}$ is  correlated with selection $G_i$, so that $E[\zeta_{i2}|G_i=1]$ is not equal to $E[\zeta_{i2}|G_i=0]$.

Proposition \ref{prop:bias_characterization} further shows that $\biasmtgpost$ is equal to the product of the martingale deviation, $(\rho_2-1)$, and the observed pre-treatment difference, $E[Y_{i1}|G_i=1]-E[Y_{i1}|G_i=0]$. This shows that the sensitivity of DiD with respect to violations of the martingale assumption depends on the pre-treatment group difference. It underscores that only if the martingale property holds exactly can we ignore pre-treatment differences (and the selection on unobservables they are indicative of). This discussion motivates using covariate adjustment for reducing the pre-treatment difference and thus the potential bias of DiD due to violations of the martingale assumption. We illustrate this point in Section \ref{sec:empirical_illustration} and defer the formal analysis of the DiD bias decomposition with covariates to Appendix \ref{app: bias decomposition covariates}. If the treatment is randomly assigned, then $E[Y_{i1}|G_i=1]-E[Y_{i1}|G_i=0]=0$ and $\biasmtgpost=0$. 

An important takeaway from Proposition \ref{prop:bias_characterization} is that the bias of DiD is an affine function of the martingale deviation $(\rho_2-1)$, where the slope is the pre-treatment difference, $E[Y_{i1}|G_i=1]-E[Y_{i1}|G_i=0]$, and the intercept is the post-treatment difference, $E[\zeta_{i2}|G_i=1]-E[\zeta_{i2}|G_i=0]$. The only component that is directly observable from the data is the pre-treatment difference. We next demonstrate how we can benchmark $\rho_2$ and $E[\zeta_{i2}|G_i=1]-E[\zeta_{i2}|G_i=0]$ using pre-treatment data.

\subsection{Benchmarking DiD bias components relative to imperfect foresight}\label{sec:benchmarking}
Here, we provide benchmarks for $\biasselpost$ and $\biasmtgpost$ based on pre-treatment data that allow practitioners to assess and sign the bias of DiD in applications.

The unobservable component in $\biasmtgpost$  is $\rho_2$ for which there are at least two natural benchmarks. First, under Assumption \ref{ass:REL}, $\rho_1$ can be identified from the pre-treatment data by noting that $E[\dot{Y}_{i1}(0)|\dot{Y}_{i0}(0)]=E[E[\dot{Y}_{i1}(0)|\ai,\eizero,\mu_i,\eta_{i0}]|\dot{Y}_{i0}(0)]=\rho_1\dot{Y}_{i0}(0)$, such that $\rho_1$ is identified as the coefficient of a population regression of $\dot{Y}_{i1}$ on $\dot{Y}_{i0}$. This is a useful benchmark if it is plausible that $\rho_2=\rho_1$, that is, if $\rho_t$ in Assumption \ref{ass:REL} is time-homogeneous. Second, we can consider the persistence in the control group, $E[\tilde{Y}_{i2}(0)|\tilde{Y}_{i1}(0),G_i=0]=\rho^{0}_2\tilde{Y}_{i1}(0)$, where $\tilde{Y}_{it}(0)\equiv Y_{it}(0)-E[Y_{it}(0)|G_i=0]$, and use $\rho^{0}_2$ as a benchmark value for $\rho_2$. This is useful in settings where unconfoundedness is plausible since $\rho_2=\rho_2^0$ in this case.\footnote{Specifically, the unconfoundedness assumption $Y_{i2}(0)\indep G_i|Y_{i1}(0)$ implies that $\rho_2=\rho^{0}_2$. This follows because $Y_{i2}(0)\indep G_i|Y_{i1}(0)$ implies that $E[\dot{Y}_{i2}(0)|\dot{Y}_{i1}(0),G_i=0]=E[\dot{Y}_{i2}(0)|\dot{Y}_{i1}(0)]$.} We emphasize that the proposed benchmarking strategy allows researchers to consider a range of values for $\rho_2$, including $\rho_1$ and $\rho_2^0$. 

As for $\biasselpost$, it is helpful to consider its observable pre-treatment analogue,
$$\biasselpre\equiv E[\zeta_{i1}|G_i=1]-E[\zeta_{i1}|G_i=0], \quad \text{where}~~\zeta_{i1}=\dot{Y}_{i1}-\rho_1\dot{Y}_{i0}.$$
 To relate $\biasselpost$ and $\biasselpre$, note that (assuming $\biasselpre\neq 0$)
$$\biasselpost= \frac{\rho_{G,\zeta_2}\sigma_{\zeta_2}}{\rho_{G,\zeta_1}\sigma_{\zeta_1}}\biasselpre,$$
where $\rho_{G,\zeta_t}\equiv Corr(G_i,\zeta_{it})$ and $\sigma_{\zeta_t}^2\equiv Var(\zeta_{it})$.

In the case where $\sigma_{\zeta_1}=\sigma_{\zeta_2}$, the relative magnitude of $\biasselpost$ to $\biasselpre$  is simply the ratio of the (scale-free) correlation coefficients, $\rho_{G,\zeta_2}/\rho_{G,\zeta_1}$. This ratio measures the relative degree of selection on pre- vs. post-treatment unobservables. For example, if contextual knowledge suggests that there is ``more selection'' on pre-treatment than on post-treatment unobservables, then $|\biasselpost|\le |\biasselpre|$, assuming $\operatorname{sgn}\left(\rho_{G,\zeta_1}\right)=\operatorname{sgn}\left(\rho_{G,\zeta_2}\right)$. The edge case where $\biasselpost=\biasselpre$ captures settings where the extent of selection on post- and pre-treatment unobservables is the same. 

In practice, we recommend that researchers determine the robustness of DiD results based on the benchmarks we discuss above. We illustrate this approach in two empirical applications in Section \ref{sec:empirical_illustration}.

\section{Empirical illustration of bias decomposition}
\label{sec:empirical_illustration}

Here we illustrate the bias decomposition in two applications. In the first application, we revisit the NSW training program, where we have access to an experimental estimate of the ATT and thus an estimate of the bias of DiD. This bias estimate allows us to directly evaluate the bias decomposition and benchmarking strategy using a \citet{lalonde1986evaluating}-style exercise. In the second application, we consider the Medicaid expansion to demonstrate the usefulness of the bias decomposition to DiD applications with aggregate selection. 
Selection on pre-treatment outcomes (and pre-treatment information more generally) is likely in both empirical contexts, as discussed in Example \ref{ex:selection_outcomes}, which makes them well-suited for illustrating the bias decomposition relative to selection on pre-treatment information. 

Both applications demonstrate that pre-treatment differences in mean outcomes between the treatment and control groups matter. While we can ignore such differences under parallel trends, our analysis underscores the central role they play once we entertain the possibility of violations of parallel trends. Both applications further highlight that covariates are crucial for reducing the pre-treatment difference in mean outcomes and thereby rendering DiD less sensitive to martingale violations.

\subsection{NSW training program}

\subsubsection{Setup and DiD analysis}
The evaluation of job training programs is one of the classical applications of DiD in economics. Here, we revisit the analysis of the causal effect of the NSW training programs on post-treatment earnings \citep[e.g.,][]{lalonde1986evaluating}. We use the same dataset as \citet{santanna2020drdid} and consider the ``\citet{Dehejia1999,Dehejia2002} sample.''\footnote{The data are from the \texttt{DRDID} \texttt{R}-package \citep{santanna_zhao_DRDID}.} 
This sample combines the experimental treatment group (185 individuals)  with an observational control group (15,992 individuals).

The outcome of interest is earnings. We observe individual-level data on earnings for two pre-treatment periods, 1974 and 1975, and one post-treatment period, 1978. We also have access to a set of baseline covariates: age, years of education, and indicators for high school dropouts, married individuals, Black and Hispanic individuals. 
 
The unconditional DiD estimate using 1975 as the pre-treatment period ($t=1$) and 1978 as the post-treatment period ($t=2$) is equal to $\widehat{\did}=$~3,621 (s.e.\ 610). A comparison to the experimental benchmark, which is 1,794 (s.e.\ 671), shows that the unconditional DiD substantially overestimates the returns to the training program. The estimated bias relative to the experimental benchmark is statistically and economically significant at 1,827, comparable in magnitude to the experimental benchmark.

With covariates, the regression-adjusted DiD estimate under conditional parallel trends is equal to $E_n[\widehat{\did}(X_i)|G_i=1]=$~2,436 (s.e.\ 653), where $E_n$ denotes the sample average and $\widehat{\did}(X_i)$ is the conditional DiD estimate obtained using the regression-adjusted DiD estimator. 
This shows that adjusting for differences in baseline covariates reduces the bias of DiD to 642, about a third of the bias of the unconditional DiD relative to the experimental benchmark.

It is standard to report the results from pre-trends tests when there are additional pre-treatment periods. Based on the pre-treatment data from 1974 and 1975, the unconditional and regression-adjusted DiD estimates are 198 (s.e.\ 280) and 335 (s.e.\ 309), respectively. 

Despite the non-rejections of the pre-trends tests, the sensitivity of the DiD estimates to parallel trends violations remains a major concern for three reasons. First, pre-tests are, by construction, not direct tests of parallel trends assumptions. Second, these tests can be substantially underpowered \citep[e.g.,][]{roth_pre-test_2022}.  Finally, building on the necessary and sufficient conditions in Section \ref{sec:necessary_sufficient}, \citet{ghanem2026when} show that pre-trends can be uninformative under imperfect foresight. These issues are particularly evident in this application, where the pre-test does not reject, despite unconditional DiD being significantly biased relative to the experimental benchmark. Next, we demonstrate how our selection-based bias decomposition can help us better understand the difference between the DiD and the experimental estimate in this application.

\subsubsection{Decomposing the bias of DiD}\label{sec:benchmarking_nsw}

We start by illustrating the bias decomposition without covariates.  
Replacing the population expectations by sample averages, we obtain
\begin{eqnarray*}
\biasdidposthat=\biasdidposthat\Big(\biasselpost,\rho_2\Big)&=&\biasselpost+(\rho_2-1)(E_n[Y_{i1}|G_i=1]-E_n[Y_{i1}|G_i=0]),\\
&=&\biasselpost+(\rho_2-1)(-\text{12,119}).
\end{eqnarray*} 
There is a substantial pre-treatment difference: average earnings in 1975 are much lower in the treatment than in the control group.

Figure \ref{fig:sensitivity_wo_covariates_nsw} displays $\biasdidposthat$ as a function of $\rho_2$ together with the bias estimate based on the experimental benchmark. Suppose first that $\biasselpost=0$. In this case, the bias of DiD equals $\biasdidposthat=(\rho_2-1)(-\text{12,119})$, depicted by the blue line. It is solely driven by violations of the martingale property (i.e., differences between $\rho_2$ and $1$). Alternatively, consider the edge case where $\biasselpost=\biasselpre$. This corresponds to the case with equal sign and strength of selection on $\zeta_{i1}$ and $\zeta_{i2}$, as discussed in Section \ref{sec:benchmarking}. The sample analogue of $\biasselpre$ equals $-2,049$, resulting in the following bias estimate (red line in Figure \ref{fig:sensitivity_wo_covariates_nsw}), $$
\biasdidposthat=-2,049+(\rho_2-1)(-\text{12,119}).
$$ 

\begin{figure}[!ht]
\caption{Benchmarking the Bias of DiD: Application to NSW Training Program}
\begin{subfigure}[b]{\textwidth}
         \centering
        \caption{Without covariates}
          \includegraphics[width=0.8\textwidth, keepaspectratio]{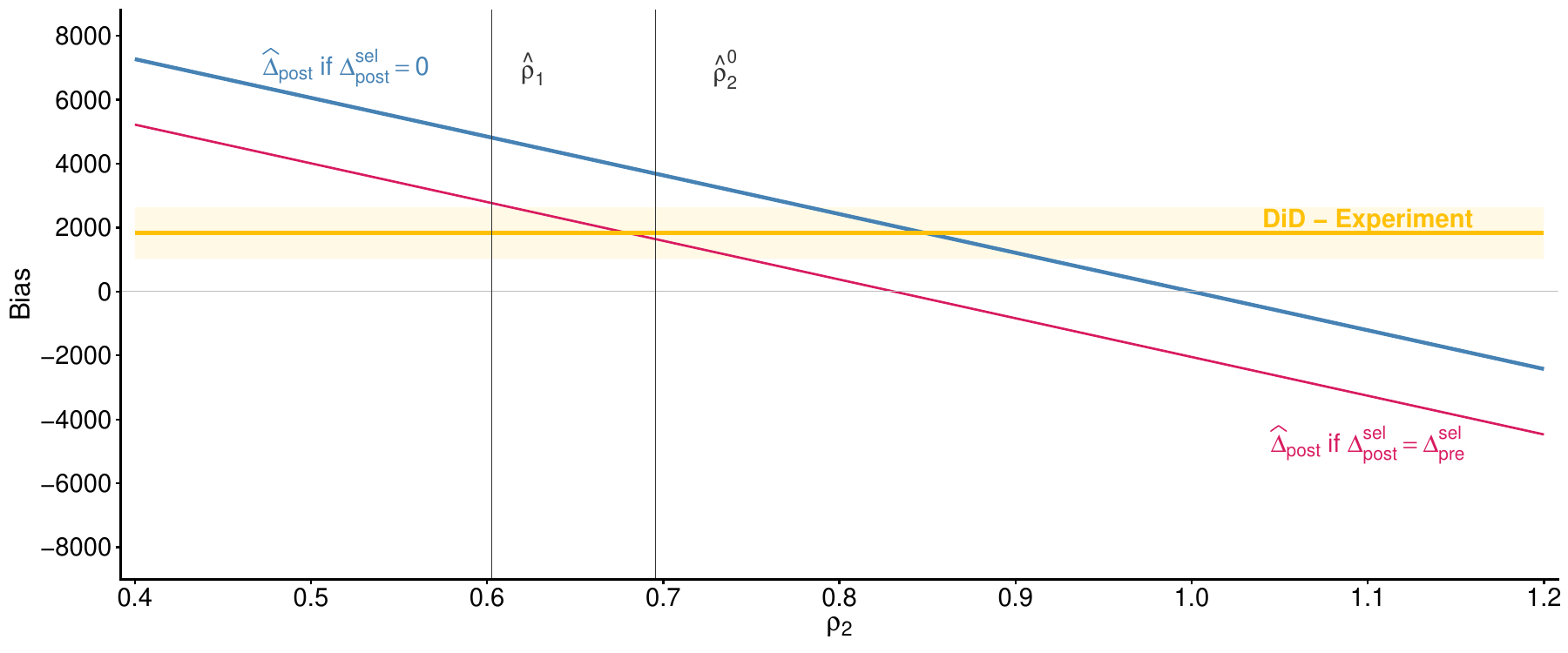}

         \label{fig:sensitivity_wo_covariates_nsw}
     \end{subfigure}
     
     \begin{subfigure}[b]{\textwidth}
         \centering
        \caption{With covariates}
         \includegraphics[width=0.8\textwidth, keepaspectratio]{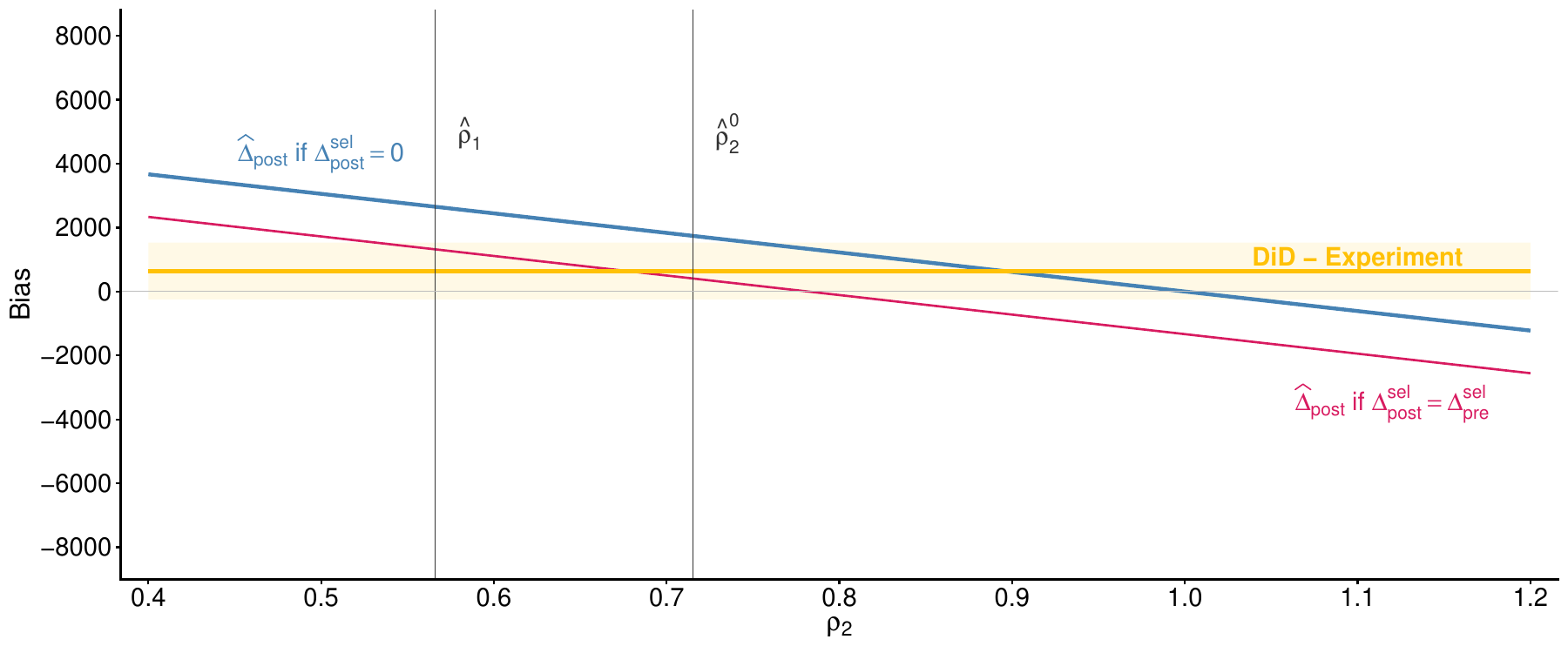}

         \label{fig:sensitivity_with_covariates_nsw}
     \end{subfigure}
     
\footnotesize{\textit{Notes:} Figure \ref{fig:sensitivity_wo_covariates_nsw} displays the results from the bias decomposition without covariates. Figure \ref{fig:sensitivity_with_covariates_nsw} shows the results from the bias decomposition with regression adjustment using age, years of education, indicators for high school dropouts, married individuals, Black and Hispanic individuals, age squared, age cubed (divided by 1,000), and years of schooling squared. The shaded areas depict 95\% confidence intervals for the difference between the unconditional DiD (regression-adjusted DiD in Figure \ref{fig:sensitivity_with_covariates_nsw}) and the experimental estimates using Bayesian-bootstrapped standard errors clustered at the individual level, based on 10,000 bootstrap draws.  Data: \citet{santanna_zhao_DRDID}.}
     
\end{figure}

As we discussed in Section \ref{sec:benchmarking}, there are two natural benchmarks for $\rho_2$: $\rho_1$, the pre-treatment counterpart of $\rho_2$, and $\rho_2^0$, its control group counterpart. The corresponding estimates are $\widehat\rho_1=0.603$ and $\widehat\rho^0_2=0.695$ and are depicted in Figure \ref{fig:sensitivity_wo_covariates_nsw}.\footnote{Recall that the post-treatment earnings are measured in 1978, so that $\rho_2$ measures the persistence over three years. To account for the difference in periodicity when estimating $\rho_1$, we proceed in two steps. First, we regress $\dot{Y}_{i1975}$ on $\dot{Y}_{i1974}$ to obtain an estimate of the yearly persistence in the pre-treatment period, $\tilde{\rho}_1=0.845$. Second, we adjust for the difference in periodicity by computing $\widehat\rho_1$ as $\widehat\rho_1=(\widehat{\tilde{\rho}}_1)^3=0.603$. This is justified under a linear AR(1) model for the demeaned outcomes in the pre-treatment period.} Both benchmark values would suggest that the unconditional DiD is upwardly biased, consistent with the  experimental bias estimate.

The analysis without covariates demonstrates that the bias of DiD is very sensitive to deviations from the martingale property.
The lack of robustness is driven by the treatment and control groups being very different before the treatment. This discussion suggests that we may reduce the pre-treatment difference and improve the robustness of DiD by adjusting for differences in baseline covariates. 

We therefore incorporate covariates into our analysis in Figure \ref{fig:sensitivity_with_covariates_nsw}. In Appendix \ref{app: bias decomposition covariates}, we show that under a linear relaxation of the conditional martingale property, the unconditional bias of DiD with covariates can be decomposed as 
\begin{eqnarray*}
\biasdidpost&=&\biasselpost+(\rho_2-1)(E[Y_{i1}|G_i=1]-E[E[Y_{i1}|G_i=0,X_i]|G_i=1]),
\end{eqnarray*}
where $\biasselpost\equiv E[\biasselpost(X_i)|G_i=1]$. 

Analogous to the unconditional bias decomposition, consider first the case where $\biasselpost=0$. Using the regression-adjusted estimator for the pre-treatment difference described in Appendix \ref{app:implementation}, we obtain the following bias estimate (blue line in Figure \ref{fig:sensitivity_with_covariates_nsw}),
$
\biasdidposthat=(\rho_2-1)(-6,113).
$
Adjusting for differences in baseline covariates reduces the magnitude of the pre-treatment difference by approximately 50\%. As a result, incorporating covariates makes the bias of DiD less sensitive to violations of the martingale property.

Alternatively, consider the case where $\biasselpre=\biasselpost$, which is implied by $\biasselpre(X_i)=\biasselpost(X_i)$. Using the regression-adjusted estimator of $\biasselpre$ described in Appendix \ref{app:implementation}, we obtain (red line in Figure \ref{fig:sensitivity_with_covariates_nsw})
$$
\biasdidposthat=-1,333+(\rho_2-1)(-6,113).
$$
The estimates of $\rho_1$ and $\rho_2^0$ with covariates are $\widehat\rho_1=0.566$, which is somewhat smaller than without covariates, and $\widehat\rho_2^0=0.715$, which is somewhat larger than without covariates.\footnote{Under the linear relaxation of the martingale assumption, the yearly persistence in the pre-treatment period, $\tilde\rho_1$, can be estimated by regressing the regression-adjusted pre-treatment outcome on its lag (see Appendix \ref{app: bias decomposition covariates} for details). The resulting estimate is $\widehat{\tilde{\rho}}_1=0.827$. Adjusting for the difference in periodicity yields $\widehat\rho_1=(\hat{\tilde{\rho}}_1)^3=0.566$.} Both benchmark values suggest the same sign and a similar magnitude of the bias as the experimental benchmark.

This analysis demonstrates how the proposed bias decomposition can help empirical practitioners assess the bias of DiD and its sensitivity. This is especially important in applications such as this one, where the (unconditional) pre-trends tests do not reject, even though DiD is biased relative to the experimental benchmark. 

\subsection{Medicaid Expansion}
\subsubsection{Setup and DiD analysis}
We revisit the DiD evaluation of Medicaid expansion to illustrate the relevance of our selection-based bias decomposition to DiD settings with aggregate selection. We use the sample from the 2$\times$2 DiD implementation in \citet{baker2026difference}, but consider one additional pre-treatment period.\footnote{The data are available in the following GitHub repository \url{https://github.com/pedrohcgs/JEL-DiD}.} The treatment group consists of states that have expanded Medicaid in 2014, whereas the control group consists of states that have not expanded by 2019. The pre-treatment periods are 2012 and 2013 ($t=0,1$), and the post-treatment period is 2014 ($t=2$).

In the context of Medicaid expansion, the outcome of interest, observed at the county level, is the crude mortality rate for people aged 20-64 (measured per 100,000). In our conditional DiD analysis, we also include the percentages of a county’s population that are female, white, or Hispanic; the unemployment rate; the poverty rate; and county-level median income (in thousands of dollars)---all measured in 2012---in our regression adjustment.\footnote{We use covariate values from 2012, so we can treat them as time-invariant in our analysis, simplifying the exposition and avoiding the strong, possibly unrealistic assumption that our covariates are strictly exogenous in this application. In line with \citet{callaway_SantAnna_2021}'s implementation in the \texttt{did R} package, \citet{baker2026difference} fixed covariate values at the 2013 values for post-treatment analysis, and at the 2012 values for pre-treatment periods, 2012-2013. 
We refer the reader to \citet{caetano2022difference}, as well as to \citet{ghanem2026when}, for additional discussion.} All estimates are weighted by county population in 2013.

We first examine the unconditional DiD estimate using 2013 and 2014, which equals $-2.6$ (s.e. 1.5), indicating a reduction in mortality due to Medicaid expansion that is statistically significant at the $10\%$ level. Once we account for covariates, however, the results are no longer significant with a regression-adjusted DiD estimate of $-2.1$ (s.e. 2.2). 

Before we proceed to the bias decomposition, we conduct the pre-trends tests. We find that unconditional and conditional pre-trends tests are not rejected at the 5\% level, with differences in pre-trends of $-2.8$ ($1.5$) and $-2.6$ (s.e. $2.5$), respectively. However, note that the pre-trends for the unconditional DiD are significant at the $10\%$ level.

\subsubsection{Decomposing the bias of DiD}
We next present the sample analogues of the bias decomposition, as described in Section \ref{sec:benchmarking_nsw}. For the unconditional DiD case, the sample analogue of the bias can be decomposed as follows
\begin{eqnarray*}
\biasdidposthat=\biasdidposthat\Big(\biasselpost,\rho_2\Big)&=&\biasselpost+(\rho_2-1)(E_n[Y_{i1}|G_i=1]-E_n[Y_{i1}|G_i=0]),\\
&=&\biasselpost+(\rho_2-1)(-\text{53.7}).
\end{eqnarray*} 
where $-53.7$ denotes the pre-treatment difference in means between the treatment and control group, statistically significant at the 1\% level.   

Figure \ref{fig:sensitivity_wo_covariates_medicaid} demonstrates that this substantial pre-treatment difference translates to  the bias of DiD being very sensitive to martingale violations. When considering the benchmark values for $\rho_2$, however, we note that both $\hat{\rho}_1$ and $\hat{\rho}_2^0$ are fairly close to 1, and therefore, the bias of DiD due to martingale deviations is relatively small for those values. If one is willing to assume that $\biasselpost=0$, then our analysis suggests a positive bias of DiD for these benchmark values of $\rho_2$ (Figure \ref{fig:sensitivity_wo_covariates_medicaid}). If we instead assume that $\biasselpost=\biasselpre$, then our analysis indicates a negative bias of DiD for these benchmark values.

Once we adjust for covariates, the pre-treatment difference is no longer significant. Indeed, it is a negligible difference yielding the following sample analogue of the bias of the regression-adjusted DiD 
\begin{eqnarray*}
\biasdidposthat=\biasdidposthat\Big(\biasselpost,\rho_2\Big)&=&\biasselpost+(\rho_2-1)(-3.5).
\end{eqnarray*} 
As a result, the bias is insensitive to violations of the martingale condition and mostly driven by the magnitude of $\biasselpost$, as illustrated in Figure \ref{fig:sensitivity_with_covariates_medicaid}. If $\biasselpost=0$, then our analysis suggests that the bias of DiD is negligible, whereas if $\biasselpost=\biasselpre$, then the bias is negative.

The bias component $\biasselpost$ captures the bias due to selection on post-treatment unobservables. Selection on post-treatment unobservables is unlikely if the time-varying determinants of mortality are difficult to predict. This is the case, for example, if mortality is determined by factors such as adverse weather shocks and disease outbreaks which are arguably difficult to perfectly foresee,  even one year ahead. In this case, a researcher could argue that $\biasselpost$ is close to zero, which implies that the bias of DiD is negligible.

\begin{figure}[!ht]
\caption{Benchmarking the Bias of DiD: Application to Medicaid Expansion}
\begin{subfigure}[b]{\textwidth}
         \centering
          \includegraphics[width=0.8\textwidth, keepaspectratio]{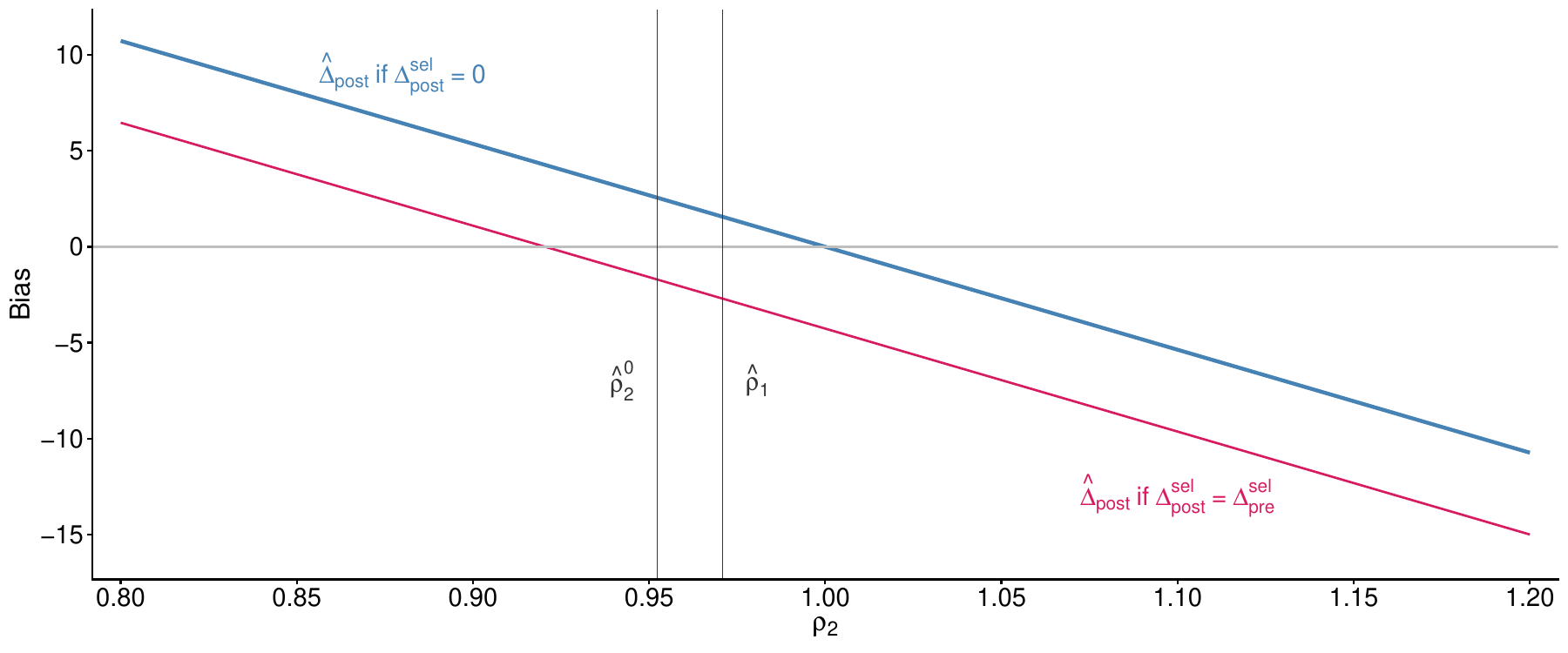}
         \caption{Without covariates}
         \label{fig:sensitivity_wo_covariates_medicaid}
     \end{subfigure}
     
     \begin{subfigure}[b]{\textwidth}
         \centering
         \includegraphics[width=0.8\textwidth, keepaspectratio]{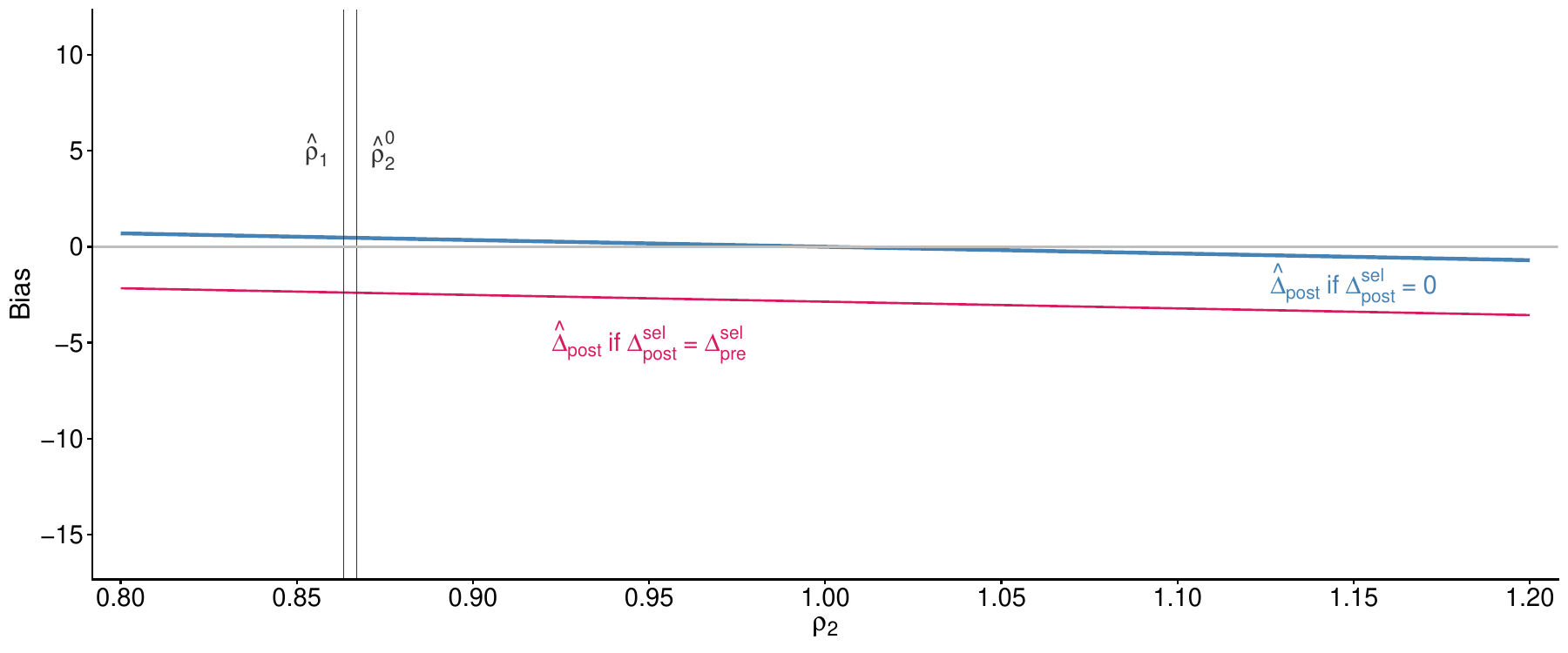}
         \caption{With covariates}
         \label{fig:sensitivity_with_covariates_medicaid}
     \end{subfigure}
     \footnotesize{\textit{Notes:} 
     Figure \ref{fig:sensitivity_wo_covariates_medicaid} displays the results from the bias decomposition without covariates. Figure \ref{fig:sensitivity_with_covariates_medicaid} shows the results from the bias decomposition with regression adjustment using the percentages of a county’s population that are female, white, or Hispanic, the unemployment rate, the poverty rate, and county-level median income (in thousands of dollars)---all measured in 2012. Data: \citet{baker2026difference}.}
\label{fig:medicaid}
     \end{figure}

\section{Implications for empirical practice}\label{sec:implications_PT}

In this paper, we study parallel trends assumptions through the lens of selection into treatment. We derive necessary and sufficient conditions that clarify the empirical content of parallel trends, shed light on the trade-offs between assumptions on selection and time series restrictions, motivate DiD bias decompositions and benchmarking strategies, and provide theory-based templates for assessing and justifying parallel trends in applications with and without covariates. Below, we summarize the main implications of our results for practitioners.

\smallskip

\noindent\textbf{Restrictions on selection are unavoidable in DiD designs.} The necessary and sufficient condition in Corollary \ref{cor:no_restriction} underscores that if researchers are not willing to impose any restrictions on selection, then parallel trends is equivalent to the untreated potential outcomes being constant over time up to deterministic mean shifts. Therefore, in realistic settings, relying on parallel trends assumptions implicitly imposes restrictions on the time-varying unobservables and how selection depends on them. 

\smallskip
\noindent \textbf{Contextual and economic knowledge about selection can be used to assess and justify parallel trends.} Our analysis provides a general approach to derive necessary and sufficient conditions for parallel trends with and without covariates. Importantly, these conditions do not require the researchers to specify explicit selection mechanisms, which may be difficult in practice. Instead, the researchers only need to specify what the units select on. When doing so, it is crucial for researchers to consider the periodicity of the data, the timing of the selection decision, the information set available to the units, who make the selection decision (e.g., the individuals themselves or caseworkers in the training program example), as well as at which level the selection decision is made (e.g., at the level of an individual economic agent vs. at the aggregate level via the political process).\footnote{The importance of the information available to units is underscored by the results in \citet{marx2024parallel}, who study specific economic models of selection including learning and optimal stopping.} Another practical byproduct of our analysis is a menu of selection-based templates for assessing and justifying parallel trends with and without covariates, see Section \ref{sec:other} and Appendix \ref{app:covariates}.

\smallskip

\noindent \textbf{Selection-based bias decompositions are useful to sign and benchmark the bias of DiD.} In Section \ref{sec:bias_characterization}, we provide a general selection-based decomposition of the bias of DiD. We then apply this decomposition to benchmark the bias relative to imperfect foresight. Exploiting a linear martingale relaxation, we show that the bias of DiD can be decomposed into two components: (i) the bias due to selection on post-treatment unobservables, (ii) the bias due to deviations from the martingale property necessary and sufficient for parallel trends under imperfect foresight. This characterization can be used in practice to sign and benchmark these two bias components, as we demonstrate in Section \ref{sec:empirical_illustration}. A practical implication of this characterization is that the pre-treatment difference between the treatment and control group is a key determinant of the bias of DiD when parallel trends is violated.

\smallskip

\noindent \textbf{Pre-trends tests are insufficient for establishing parallel trends.\footnote{We thank an anonymous referee for encouraging us to emphasize this implication of our results.}} The theoretical results in Section \ref{sec:necessary_sufficient} show that the validity of parallel trends depends on whether specific combinations of selection behavior and time series properties are present. Comparing trends between the treated and control units during the pre-treatment period using pre-trends tests is generally insufficient to establish that. Building on the results in this paper, we formally study when pre-trends tests can be informative through the lens of selection into treatment in \citet{ghanem2026when}.

\setlength{\bibsep}{0pt}
\bibliographystyle{apalike}
\bibliography{bibtex}

\newpage

\appendix

\setcounter{page}{1} 

\begin{center}
    \huge{Appendix (for online publication)}
    
\end{center}

\startcontents[sections]
\printcontents[sections]{l}{1}{\setcounter{tocdepth}{2}}

\section{Disaggregate data and aggregate decisions}
\label{app:disaggregate}
In some DiD applications, the data are available at the disaggregate level (e.g., at the individual or firm level), while the decision to select into the treatment is made at the aggregate level (e.g., at the county or state level). The results in the main text directly apply to such settings by interpreting $i$ as indexing the aggregate unit making the selection decision and the unobservables and potential outcomes as aggregate quantities. However, to justify restrictions about selection into treatment, it can be helpful to be more explicit about how selection at the aggregate level is related to the disaggregate level. In the following, we provide a formal framework for doing so. A leading example is when aggregate decisions are based on aggregating preferences at the disaggregate level (e.g., based on voting mechanisms). 

Consider a canonical DiD setting with $S$ groups, indexed by $s\in \{1,\dots,S\}$. Each group contains $n_s$ units, indexed by $i\in \{1,\dots n_s\}$. To simplify the exposition, suppose that all groups are the same size, $n_s=n$ for $s\in \{1,\dots,S\}$. Following the analysis in the main text, we impose general nonseparable models for the disaggregate potential outcomes,
\begin{equation*}
    Y_{ist}(0)=\xi_{st}(\ais,\eist).
\end{equation*}
The aggregate potential outcomes are given by 
$$
Y_{st}(0)=A_{Y(0)}(Y_{1st}(0),\dots,Y_{nst}(0)),
$$
where $A_{Y(0)}(\cdot)$ is a potentially nonlinear aggregation function that can depend on $n$. A simple example is when the aggregate outcomes are averages of the disaggregate outcomes, $Y_{st}(0)=n^{-1}\sum_{i=1}^nY_{ist}(0)$. 

Consider a sharp DiD setting in which the treatment decisions are made at the group level, so that $G_{s}=G_{is}$ for all $i\in \{1,\dots,n\}$, and researchers rely on parallel trends at the group level,
\begin{equation}
    E[Y_{s2}(0)-Y_{s1}(0)|G_s=1]=E[Y_{s2}(0)-Y_{s1}(0)|G_s=0]. \label{eq:aggregate_PT}
\end{equation}
The aggregate selection decision can depend on all unit-level unobservables,
\begin{equation}
G_s=g(\omega_s,\nu_s).\label{eq:aggregate_selection_mechanism}
\end{equation}
Here $\omega_{s}$ is a function or subvector of $(\alpha_s,\varepsilon_{s1},\varepsilon_{s2},\mu_s,\eta_{s1},\eta_{s2})$, where $\alpha_s=(\alpha_{1s},\dots,\alpha_{ns})$, $\varepsilon_{s1}=(\varepsilon_{1s1},\dots,\varepsilon_{ns1})$, and $\varepsilon_{s2}=(\varepsilon_{1s2},\dots,\varepsilon_{ns2})$. The vectors $\mu_s=(\mu_{1s},\dots,\mu_{ns})$, $\eta_{s1}=(\eta_{1s1},\dots,\eta_{ns1})$, and $\eta_{s2}=(\eta_{1s2},\dots,\eta_{ns2})$ contain additional time-invariant and time-varying unobservables. The scalar unobservable $\nu_s$ captures determinants of selection that are independent of $\omega_s$.

All results in the main text directly apply in this setting with $i$ replaced by $s$, such that there are no additional theoretical complications. However, being explicit about the disaggregate level can help ``microfound'' restrictions on the aggregate selection mechanism, as we illustrate in the following example.

\begin{example}[Simple majority voting] Suppose that the aggregate selection decision is based on simple majority voting. Each unit submits a vote $V_{is}\in \{0,1\}$, 
\begin{equation}
V_{is}=v(\omega_{is},\nu_{is}).\label{eq:voting_mechanism}
\end{equation}
where $\omega_{is}$ is a function or subvector of $(\ais,\eisone,\eistwo,\mu_{is},\eta_{is1},\eta_{is2})$.
The voting mechanism \eqref{eq:voting_mechanism} accommodates voting based on group-level unobservables and outcomes since the additional unobservables $(\mu_{is},\eta_{is1},\eta_{is2})$ are unrestricted and can contain group-level quantities. Votes can be based on potential outcomes, expected gains, and fixed effects (as in Examples \ref{ex:selection_outcomes}, \ref{ex:selection_effects}, and \ref{ex:selection_fe}), or other considerations.

The aggregate selection decision under simple majority voting is 
\begin{equation}
G_{s}=1\left\{\frac{1}{n}\sum_{i=1}^{n}V_{is}\ge 0.5\right\}.\label{eq:voting_mechanism_agg}
\end{equation}
This selection mechanism is a special case of mechanism \eqref{eq:aggregate_selection_mechanism}. Restrictions on the aggregate mechanism \eqref{eq:voting_mechanism_agg} can be directly motivated based on assumptions on the units' voting behavior, their information sets, and discount factors. \qed
\end{example}

\section{Multiple periods and multiple groups}\label{app: general did}
Here we generalize our results to DiD designs with multiple periods and multiple groups. The setup and notation are based on \citet{callaway_SantAnna_2021}, \citet{sun_abraham_2021}, and \citet{roth_et_al_DiD_survey}. 

Let $t\in \{1,2,\dots,T\}$ index the periods. Suppose that at time $t=1$, no units are treated, at $t=2$, some units become treated, while others remain untreated, and so on. Previously treated units remain treated for all periods.
Units can be categorized based on their treatment adoption pattern $D_i=(D_{i1},\dots,D_{iT})$. We define the group indicator $G_i$ as the first period in which units are treated, $G_i=\min\{t\in\{1,\dots,T\}:D_{it}=1\}$, and set $G_i=\infty$ for the never-treated units so that $G_i\in \{2,\dots,T,\infty\}$.\footnote{Since $G_i$ is a random variable with finite support, we emphasize that $\{\infty\}$ is merely a label.}

Potential outcomes are indexed by the entire treatment sequence $(d_1,\dots,d_T)\in\{0,1\}^T$, $Y_{it}(d_1,\dots,d_T)$. Since treatment is an absorbing state, the potential outcomes can be indexed by the first treatment period only. Define $Y_{it}(g)=Y_{it}(\mathbf{0}_{g-1},\mathbf{1}_{T-g+1})$ for $g\in \{2,\dots,T\}$ and $Y_{it}(\infty)=Y_{it}(\mathbf{0}_{T})$, where $\mathbf{0}_s\equiv (0,\dots,0)\in \mathbb{R}^s$ and $\mathbf{1}_s\equiv (1,\dots,1)\in \mathbb{R}^s$. 
We maintain a standard no-anticipation assumption \citep[e.g.,][]{roth_et_al_DiD_survey}.
\begin{customass}{NA}
\label{ass:NA}
For $g\in \{2,\dots,T,\infty\}$ and $t<g$, $Y_{it}(g)=Y_{it}(\infty)$.
\end{customass}

Our objects of interest are the group-time ATTs, 
\begin{equation}
\operatorname{ATT}(g,t) = E[Y_{it}(g) - Y_{it}(\infty) | G_i=g].
\end{equation}
We impose the following parallel trends assumption to identify the $\operatorname{ATT}(g,t)$.\footnote{In our setting, this parallel trends assumption corresponds to the ones made by \citet{callaway_SantAnna_2021}, \citet{gardner_two-stage_2021}, \citet{sun_abraham_2021}, \citet{borusyak2024revisiting}, and \citet{wooldridge2021a}; see also \citet{de_chaisemartin_two-way_2020} and \citet{Marcus2021} for related assumptions.}
\begin{customass}{PT-MP} 
\label{ass:PT-MP}
For $(g,t)\in \{2,\dots,T\}^2$,
\begin{equation}
E[Y_{it}(\infty) - Y_{i(t-1)}(\infty) \,|\, G_i = g ] =  E[ Y_{it}(\infty) - Y_{i(t-1)}(\infty) | G_i = \infty]
\end{equation} 
\end{customass}

We consider a general nonseparable outcome model,
$$
Y_{it}(\infty)=\xi_t(\ai,\eit), \quad i=1,\dots,n,\quad  t=1,\dots,T.
$$
Selection into treatment can depend on the unobservable determinants of $Y_{it}(\infty)$ as well as additional unobservables, 
$$
G_i=g(\omega_i,\nu_i),
$$
where $\omega_i$ is a function or a subvector of $(\ai,\eione,\dots,\eiT,\mu_i,\eta_{i1},\dots,\eta_{iT})$
\begin{customass}{SEL-MP}\label{ass:SEL-MP}
Assume that $\nu_i\indep(\ai,\eione,\dots,\eiT,\mu_i,\eta_{i1},\dots,\eta_{iT})$. Furthermore, there exists a non-overlapping partition of the support of $v_i$, $\{B_g\}_{g\in\{2,\dots,T,\infty\}}$, such that $P(v_i\in B_g)\in (0,1)$ for $g\in \{2,\dots,T,\infty\}$. 
\end{customass}
The following theorem presents the necessary and sufficient condition for Assumption \ref{ass:PT-MP}.
\begin{theorem}[Necessary and sufficient condition for \ref{ass:PT-MP} to hold for all $g\in \mathcal{G}_\omega$] \label{thm:main-MP} Suppose that Assumptions \ref{ass:NA} and \ref{ass:SEL-MP} hold. 
 Suppose further that for $t\in\{2,\dots,T\}$ either $P(E[\dot{Y}_{it}(\infty)-\dot{Y}_{i(t-1)}(\infty)|\omega_i]>0)<1$ or $P(E[\dot{Y}_{it}(\infty)-\dot{Y}_{i(t-1)}(\infty)|\omega_i]<0)<1$. Then, Assumption \ref{ass:PT-MP} holds for all $g\in \mathcal{G}_{\omega}$ satisfying $P(G_i=g)\in (0,1)$ for $g\in\{2,\dots,T,\infty\}$ if and only if $E[\dot{Y}_{it}(\infty)-\dot{Y}_{i(t-1)}(\infty)|\omega_i]=0 \text{ a.s.}$ for $t\in\{2,\dots,T\}$.
\end{theorem}
\begin{proof} See Appendix \ref{proof:thm:main-MP}.
\end{proof}

Following the logic of Section \ref{sec:necessary_sufficient}, Theorem \ref{thm:main-MP} implies necessary and sufficient conditions for many different classes of selection mechanisms. For example, if selection is unrestricted, so that $\oi= (\ai,\eione,\dots,\eiT,\mu_i,\eta_{i1},\dots,\eta_{iT})$, we obtain the following necessary and sufficient condition as a corollary of Theorem \ref{thm:main-MP}, 
$$\dot{Y}_{i1}(\infty)=\dot{Y}_{i2}(\infty)=\dots=\dot{Y}_{iT}(\infty),$$
which is a natural extension of Corollary \ref{cor:no_restriction}. 

Similarly, we can consider various classes of restricted selection mechanisms. For example, if $\oi=(\ai,\mu_i)$, we obtain the following necessary and sufficient condition,  
$$E[\dot{Y}_{i1}(\infty)|\ai,\mu_i]=\dots=E[\dot{Y}_{iT}(\infty)|\ai,\mu_i],$$
which is a multi-period version of Corollary \ref{cor:fe}. There are many other corollaries of Theorem \ref{thm:main-MP} depending on the choice of $\omega_i$.

\section{Sufficient conditions for parallel trends beyond Theorem \ref{thm:main}}\label{app:sufficient_condition_exchangeability}
Theorem \ref{thm:main} allows researchers to derive a broad set of sufficient (and necessary) conditions for parallel trends, depending on what unit select on. However, in some applications, researchers might be interested in imposing other types of restrictions not covered by Theorem \ref{thm:main}. 

To illustrate, consider the separable model \eqref{eq:separable}. The following is a sufficient condition for parallel trends based on a symmetry condition on the selection mechanism:
\begin{enumerate}
\item[(e)] Symmetric selection:
\begin{enumerate}[(i)]\setlength\itemsep{0pt}
    \item $g$ is a symmetric function in $\eione$ and $\eitwo$ 
    \item $\eione,\eitwo|\ai\overset{d}{=}\eitwo,\eione|\ai$ and $(\mu_i,\eta_{i1},\eta_{i2})|\ai,\eione,\eitwo\overset{d}{=}(\mu_i,\eta_{i1},\eta_{i2})|\ai,\eitwo,\eione$
\end{enumerate}
\end{enumerate}
Condition (e) can be shown to imply parallel trends. In addition to the symmetry of the selection mechanism, this sufficient condition imposes two different types of exchangeability restrictions. First, it requires that the conditional distribution of $(\mu_i,\eta_{i1},\eta_{i2})$ is exchangeable in $\eione$ and $\eitwo$ after conditioning on $\ai$. This notion of exchangeability has been employed, for example, in \citet{altonji2005cross}. Second, it requires the distribution of $(\eione,\eitwo)$ to be exchangeable conditional on $\ai$. 

To illustrate condition (e), consider the selection mechanism in Example \ref{ex:selection_outcomes} and suppose that the units have complete information, so that $\omega_i=(\ai,\eione,\eitwo,\kappa_{i2})$. In this case, the selection mechanism \eqref{eq:AC_selection} simplifies to $G_i=1\left\{\ai(1+\beta)+\eione+\beta \eitwo\le \kappa_{i2}\right\}.$
This selection mechanism is symmetric if there is no discounting ($\beta=1$), which may be plausible if the time span between the pre- and post-treatment period is short. In addition, the condition (ii) requires $(\eione,\eitwo)$ to be exchangeable and the conditional distribution of the costs to be symmetric in $\eione$ and $\eitwo$, $\kappa_{i2}|\ai,\eione,\eitwo\overset{d}{=}\kappa_{i2}|\ai,\eitwo,\eione$. 

\section{Covariates}\label{app:covariates}
In many applications, parallel trends may only be plausible conditional on covariates \citep[e.g.,][]{heckman1997matching,abadie2005DiD,santanna2020drdid, callaway_SantAnna_2021}. In this appendix, we therefore study the role of covariates through the lens of selection into treatment. While many existing approaches focus on time-invariant covariates, we explicitly allow for a vector of both time-invariant and time-varying  covariates, $X_{it}$, assuming that $X_{it}$ is not affected by the treatment.

In Appendix \ref{app:separability}, we derive necessary and sufficient conditions for parallel trends to hold conditional on covariates and show that these conditions imply separability restrictions on how covariates enter the outcome model. In Appendix \ref{sec:covariates_nonseparable}, we propose a modified parallel trends assumption that accommodates nonseparable models and provide sufficient conditions for this modified parallel trends assumption.

\subsection{Covariates and the role of separability}\label{app:separability}

Suppose that parallel trends holds conditional on the time series of covariates,  $X_i=(X_{i1},X_{i2})$. 
\begin{customass}{PT-X}\label{ass:PT-X}
The conditional parallel trends assumption holds:
$$
E[Y_{i2}(0)-Y_{i1}(0)|G_i=1,X_{i}]=E[Y_{i2}(0)-Y_{i1}(0)|G_i=0,X_{i}]~a.s.
$$
\end{customass}
\noindent Under Assumption \ref{ass:PT-X}, the unconditional ATT is identified as
\begin{align*}
    E[Y_{i2}(1)-Y_{i2}(0)|G_i=1]=E\left[\att(X_i)|G_i=1\right]=E\left[\did(X_i)|G_i=1\right],
\end{align*}
where $\att(X_i)\equiv E[Y_{i2}(1)-Y_{i2}(0)|G_i=1,X_{i}]$ and $\did(X_i)\equiv E[Y_{i2}-Y_{i1}|G_i=1,X_{i}]-E[Y_{i2}-Y_{i1}|G_i=0,X_{i}]$.

In the presence of covariates, potential outcomes and selection into treatment may naturally depend on them. We therefore consider the following outcome model and selection mechanism,
\begin{eqnarray*}
Y_{it}(0)&=&\xi_t(X_{it},\ai,\eit),\quad i=1,\dots,n,\quad t=1,2,\\
G_i&=&g(X_{i},\oi,\nu_i),\quad i=1,\dots,n.
\end{eqnarray*}
Let $\mathcal{X}$ denote the support of $X_{i}$ and define $\mathcal{G}^x_{\omega}\equiv\{g:\mathcal{X}\times \mathcal{O}\times \mathcal{V}\to \{0,1\}\}$.

The necessary and sufficient condition in Theorem \ref{thm:main} generalizes straightforwardly to Assumption \ref{ass:PT-X} as we show in the following theorem. Before we proceed, we introduce the notation $\ddot{Y}_{it}(0)\equiv Y_{it}(0)-E[Y_{it}(0)|X_i]$ and the following regularity condition on $\nu_i$.
\begin{customass}{SEL-X}
$P(\nu_i>c|X_i)\in(0,1)$ for some $c\in \mathbb{R}$ and $\nu_i\indep (X_i,\ai,\eione,\eitwo,\mu_i,\eta_{i1},\eta_{i2})$.\label{ass:SEL-X}
\end{customass}

\begin{theorem}[Necessary and sufficient condition for Assumption \ref{ass:PT-X} for all $g\in \mathcal{G}^x_\omega$] \label{thm:main-X} Suppose that Assumption \ref{ass:SEL-X} holds. 
 Suppose further that either $P(E[\ddot{Y}_{i2}(0)-\ddot{Y}_{i1}(0)|\omega_i,X_i]>0)<1$ or $P(E[\ddot{Y}_{i2}(0)-\ddot{Y}_{i1}(0)|\omega_i,X_i]<0)<1$. Then, Assumption \ref{ass:PT-X} holds for all $g\in \mathcal{G}_{\omega}^x$ satisfying $P(G_i=1|X_i)\in (0,1)$ if and only if $E[\ddot{Y}_{i2}(0)-\ddot{Y}_{i1}(0)|\omega_i,X_i]=0\text{ a.s.}$
\end{theorem}
\begin{proof}
    See Appendix \ref{proof:thm:main-X}.
\end{proof}

The proof follows from the same arguments as in the proof of Theorem \ref{thm:main}, conditional on covariates $X_i$. Given the necessary and sufficient condition in Theorem \ref{thm:main-X}, the results in the main text generalize straightforwardly to settings with covariates.

An important practical implication of the necessary and sufficient condition in Theorem \ref{thm:main-X} is that it implies separability requirements on how the covariates can enter the outcome model. To illustrate, consider the simple case where $\oi=\ai$. In this case, the necessary and sufficient condition in Theorem \ref{thm:main-X} can be written as
$$
E[Y_{i2}(0)|X_i,\ai]-E[Y_{i1}(0)|X_i,\ai]=E[Y_{i2}(0)|X_i]-E[Y_{i1}(0)|X_i].
$$
To illustrate the separability restrictions, consider a generalized random coefficient model \citep[e.g.,][]{chamberlain1992efficiency} in which $\ai$ interacts with $X_{it}$,
\begin{equation}Y_{it}(0)=\ai \chi_t(X_{it})+\lambda_t+\eit.\label{eq:chamberlain_model}\end{equation}
Here $\chi_t(\cdot)$ is an arbitrary time-varying function. Even under the assumption that $E[\eit|X_i,\ai]=0$, this model generally violates the necessary condition due to the combination of nonseparability between $\ai$ and $X_{it}$ and the time variability in the structural function through $\chi_t(\cdot)$,  
$$E[Y_{i2}(0)|X_i,\ai]-E[Y_{i1}(0)|X_i,\ai]=\ai (\chi_2(X_{i2})-\chi_1(X_{i1}))+\lambda_2-\lambda_1.$$
This example demonstrates that for parallel trends to hold in the presence of interactions between covariates and $\ai$, it is not sufficient to focus on subpopulations with $X_{i1}=X_{i2}$. We additionally require that the component that interacts with $\ai$, $\chi_t(\cdot)$, does not vary across time.

Allowing for interactions between the unobservable determinants of selection and some covariates is important in applications. We therefore consider a weaker conditional parallel trends assumption that allows for such interactions in Section \ref{sec:covariates_nonseparable}.

\subsection{Selection-based templates for justifying parallel trends in separable models with covariates}\label{sec:selection-templates-PTX}
The necessary and sufficient conditions with covariates can serve as theory-based templates for assessing and justifying parallel trends in DiD applications with covariates. The exact form of these conditions depends on the model for $Y_{it}(0)$. Consider, for example, the separable model $$Y_{it}(0)=\ai+\chi_t(X_{it})+\lt+\eit,$$ where $\chi_t(\cdot)$  is a potentially time-varying function. For this model, we can provide a menu of necessary and  sufficient conditions that specify the determinants of selection as well as conditions on the idiosyncratic shocks $\eit$. For instance,
\begin{enumerate}[(a)]\setlength\itemsep{0pt}
\item Imperfect foresight (case 1): (i) $\oi=(\ai,\eione,\mu_i,\eta_{i1})$ and (ii) $E[\eitwo-\eione|X_{i},\ai,\eione,\mu_i,\eta_{i1}]=E[\eitwo-\eione|X_{i}]$
\item Imperfect foresight (case 2): (i) $\oi=(\mu_i,\eta_{i1})$ and (ii) $E[\eitwo-\eione|X_{i},\mu_i,\eta_{i1}]=E[\eitwo-\eione|X_{i}]$

\item Roy-style selection: (i) $\oi=(\tau_{i2},\kappa_{i2})$ and (ii) $E[\eitwo-\eione|X_{i},\tau_{i2},\kappa_{i2}]=E[\eitwo-\eione|X_{i}]$

\item Selection on fixed effects: (i) $\oi=(\ai,\mu_i)$ and (ii) $E[\eitwo-\eione|X_{i},\ai,\mu_i]=E[\eitwo-\eione|X_{i}]$
\end{enumerate}

\subsection{A parallel trends assumption for nonseparable models}\label{sec:covariates_nonseparable}
Motivated by Section \ref{app:separability}, we consider a weaker (than Assumption \ref{ass:PT-X}) conditional parallel trends assumption that accommodates nonseparable models. To state this assumption, we explicitly differentiate between two types of covariates: (i) $X_{it}^\mu$ are covariates that interact with the unobservable determinants of selection in the outcome model; (ii) $X_{it}^\lambda$ are covariates that do not interact with these unobservables. Both types of covariates can enter the selection mechanism in an arbitrary way. The following conditional parallel trends assumption holds for subpopulations that experience no change in $X_{it}^\mu$ and the same trajectory in $X_{it}^\lambda$.
\begin{customass}{PT-NSP}\label{ass:PT-NSP} The (modified) conditional parallel trends assumption holds:
$$E[Y_{i2}(0)-Y_{i1}(0)|G_i=1,X_{i}^\lambda,X_{i1}^\mu=X_{i2}^\mu]=E[Y_{i2}(0)-Y_{i1}(0)|G_i=0,X_{i}^\lambda,X_{i1}^\mu=X_{i2}^\mu]~a.s.$$
\end{customass}
Under Assumption \ref{ass:PT-NSP}, we can no longer identify the ATT, $E[Y_{i2}(1)-Y_{i2}(0)|G_i=1]$, because we cannot identify the conditional ATT, $E[Y_{i2}(1)-Y_{i2}(0)|G_i=1,X_{i}^\lambda,X_{i}^\mu]$. Instead, we can identify $E[Y_{i2}(1)-Y_{i2}(0)|G_i=1,X_{i}^\lambda,X_{i1}^\mu=X_{i2}^\mu].$\footnote{With a slight abuse of notation, we use $(X_{i1}^\mu=X_{i2}^\mu)$ in the conditioning set as a short-hand for $(X_{i1}^\mu,X_{i2}^\mu=X_{i1}^\mu)$.}
After integrating out with respect to the distribution of covariates, we can identify the ATT for subpopulations that do not experience changes in $X_{it}^\mu$, $$E[Y_{i2}(1)-Y_{i2}(0)|G_i=1,X_{i1}^\mu-X_{i2}^\mu=0].$$
Note that if $X_{it}^\mu$ is time-invariant, then $X_{i1}^\mu=X_{i2}^\mu$ holds by definition and Assumptions \ref{ass:PT-X} and \ref{ass:PT-NSP} are equivalent.

Next, we provide different sets of sufficient conditions for Assumption \ref{ass:PT-NSP} under the following nonseparable outcome model.\footnote{Without further restrictions on the unobservables, the additive structure is without loss of generality and the superscripts $\mu$ and $\lambda$ are merely labels. Indeed, if $X_{it}^\mu=X_{it}^\lambda$, $\ai^\mu=\ai^\lambda$, and $\eit^\mu=\eit^\lambda$, the model is fully nonseparable and time-varying in an arbitrary way.}
\begin{customass}{NSP-X}\label{ass:NSP-X}
 \begin{align}Y_{it}(0)&=\mu(X_{it}^\mu,\ai^\mu,\eit^\mu)+\lt(X_{it}^\lambda,\ai^\lambda,\eit^\lambda), \quad i=1,\dots,n, \quad t=1, 2,\nonumber
 \end{align}
 where $X_{it}^\mu$, $X_{it}^\lambda$, $\ai^\mu$, $\ai^\lambda$, $\eit^\mu$, and $ \eit^\lambda$ are finite-dimensional random vectors. 
 \end{customass}
Let $\mathcal{X}_{\mu}$, $\mathcal{X}_{\lambda}$, $\mathcal{A}$, and $\mathcal{E}$ denote the supports of $X_{it}^\mu$, $X_{it}^\lambda$, $\ai^\mu$, and $\eit^\mu$, respectively. 

It is natural to consider selection based on the unobservables entering $\mu(\cdot)$. We therefore impose the following condition on the projected selection mechanism.
\begin{customass}{SEL-CI}\label{ass:SEL-CI}
$$E[G_i| X_{i1}^\mu,X_{i2}^\mu,X_{i1}^\lambda, X_{i2}^\lambda,\ai^\mu,\ai^\lambda,\eione^\mu,\eitwo^\mu,\eione^\lambda,\eitwo^\lambda]=E[G_i| X_{i1}^\mu,X_{i2}^\mu,X_{i1}^\lambda, X_{i2}^\lambda,\ai^\mu,\eione^\mu,\eitwo^\mu].$$ 
\end{customass}

Assumption \ref{ass:SEL-CI} allows the projected selection mechanism to depend on all covariates, but only on the unobservables that enter $\mu(\cdot)$. In view of Assumption \ref{ass:SEL-CI}, we define
\begin{eqnarray}
&&\bar{g}(x_1^\mu,x_2^\mu,x_1^\lambda,x_2^\lambda,a^\mu,e_1^\mu,e_2^\mu)\nonumber\\
&&\equiv E[G_i| X_{i1}^\mu=x_1^\mu,X_{i2}^\mu=x_2^\mu,X_{i1}^\lambda=x_1^\lambda, X_{i2}^\lambda=x_2^\lambda,\ai^\mu=a^\mu,\eione^\mu=e_1^\mu,\eitwo^\mu=e_2^\mu].\nonumber\end{eqnarray}

In the following, we present different sets of sufficient conditions for Assumption \ref{ass:PT-NSP}, inspired by the sufficient conditions in Section \ref{sec:other} and Appendix \ref{app:sufficient_condition_exchangeability}. For concreteness, we focus on three sets of sufficient conditions; other sets of sufficient conditions could also be considered. Each set of conditions consists of assumptions on the projected selection mechanism as well as distributional restrictions on the unobservables. Our first sufficient condition allows selection to depend on all covariates as well as the unobservables that enter the time-invariant component of the structural function, while imposing a symmetry restriction on the projected selection mechanism.
\begin{customass}{SC1-NSP}\label{ass:SC1-NSP} The following conditions hold:
\begin{enumerate}[(i)]\setlength\itemsep{0pt}
\item $\bar{g}(x_1^\mu,x_2^\mu,x_1^\lambda,x_2^\lambda,a^\mu,e_1^\mu,e_2^\mu)$ is a symmetric function in $e_1^\mu$ and $e_2^\mu$.\label{ass:SC1-NSP_selection_symmetry}
\item $(\eione^\mu,\eitwo^\mu)|X_i^\mu,X_i^\lambda,\ai^\mu\overset{d}{=}(\eitwo^\mu,\eione^\mu)|X_i^\mu,X_i^\lambda,\ai^\mu$.\label{ass:SC1-NSP_exchangeability}
\item $(\ai^\mu,\eione^\mu,\eitwo^\mu)\indep (\ai^\lambda,\eione^\lambda,\eitwo^\lambda)|X_{i}^\mu,X_i^\lambda$.\label{ass:SC1-NSP_independence}
\end{enumerate}
\end{customass}
Here we require the conditional distribution of $(\eione^\mu,\eitwo^\mu)|X_i^\mu,X_i^\lambda,\ai^\mu$ to be exchangeable. Since the projected selection mechanism depends on $(\ai^\mu, \eione^\mu, \eitwo^\mu)$, we require them to be independent of the unobservables entering $\lt(\cdot)$ conditional on $(X_i^\mu,X_i^\lambda)$.

The exchangeability restriction in Assumption \ref{ass:SC1-NSP} is different from the exchangeability assumption in \citet{altonji2005cross}.  The exchangeability assumption in \citet{altonji2005cross} requires the conditional distribution of all unobservables that enter $\mu(\cdot)$ and $\lambda_t(\cdot)$ to be invariant to permutations of covariates in the conditioning set, which is a nonparametric correlated random effects restriction \citep{ghanem2017testing}.  By contrast, we assume that the time-varying unobservables are exchangeable conditional on $(X_i^\mu,X_i^\lambda,\ai^\mu)$ without imposing any restrictions on the distribution of $\ai^\mu|G_i,X_i^\mu,X_i^\lambda$.

Next, we consider a projected selection mechanism that is a trivial function of $\eitwo^\mu$ in the following sufficient condition.
\begin{customass}{SC2-NSP}\label{ass:SC2-NSP}
The following conditions hold:
 \begin{enumerate}[(i)]\setlength\itemsep{0pt}
\item $\bar{g}(x_1^\mu,x_2^\mu,x_1^\lambda,x_2^\lambda,a^\mu,e_1^\mu,e_2^\mu)$ is a trivial function of $e_2^\mu$.\label{ass:SC2-NSP_selection}
\item $(\ai^\mu,\eione^\mu)\indep \Delta_{\mu,i}|X_{i}^\lambda,X_{i1}^\mu=X_{i2}^\mu$, where $\Delta_{\mu,i}\equiv \mu(X_{i2}^\mu,\ai^\mu,\eitwo^\mu)-\mu(X_{i1}^\mu,\ai^\mu,\eione^\mu)$. \label{ass:SC2-NSP_separability}
\item $(\ai^\mu,\eione^\mu)\indep (\ai^\lambda,\eione^\lambda,\eitwo^\lambda)|X_{i}^\mu,X_i^\lambda$.\label{ass:SC2-NSP_independence}
\end{enumerate}
\end{customass}
Assumption \ref{ass:SC2-NSP}.\ref{ass:SC2-NSP_separability} implicitly imposes separability conditions on $\mu(\cdot)$ (but not on $\lambda_t(\cdot)$) and restrictions on time series dependence.\footnote{To see this, note that since $\Delta_{\mu,i}=\mu(X_{i2}^\mu,\ai^\mu,\eitwo^\mu)-\mu(X_{i1}^\mu,\ai^\mu,\eione^\mu)$, for $\Delta_{\mu,i}$ to be conditionally independent of $(\ai^\mu,\eione^\mu,\eitwo^\mu)$, a sufficient condition would be that $\Delta_{\mu,i}$ is separable in $\ai^\mu$ and  $\eit^\mu$, such that $\Delta_{\mu,i}=\gamma\left(X_{i2}^\mu\right)-\gamma\left(X_{i1}^\mu\right)+\eitwo^\mu-\eione^\mu$, as well as that $(\eitwo^\mu-\eione^\mu)$ and $(\ai^\mu,\eione^\mu)$ are conditionally independent.} The independence condition in Assumption \ref{ass:SC2-NSP}.\ref{ass:SC2-NSP_independence} requires that the unobservable determinants of selection are independent of the unobservables that enter $\lt(\cdot)$ conditional on the time series of covariates.

The last sufficient condition restricts the projected selection mechanism to only depend on covariates and the time-invariant unobservables.
\begin{customass}{SC3-NSP}\label{ass:SC3-NSP}
The following conditions hold:
 \begin{enumerate}[(i)]\setlength\itemsep{0pt}
 \item $\bar{g}(x_1^\mu,x_2^\mu,x_1^\lambda,x_2^\lambda,a^\mu,e_1^\mu,e_2^\mu)$ is a trivial function of $e_1^\mu$ and $e_2^\mu$.\label{ass:SC3-NSP_selection_ai}
   \item $\eione^\mu|X_{i}^\mu,X_{i}^\lambda,\ai^\mu\overset{d}{=}\eitwo^\mu|X_{i}^\mu,X_{i}^\lambda,\ai^\mu$.\label{ass:SC3-NSP_time_homogeneity}
\item $\ai^\mu\indep (\ai^\lambda,\eione^\lambda,\eitwo^\lambda)|X_{i}^\mu,X_{i}^\lambda$.\label{ass:SC3-NSP_independence}
\end{enumerate}
\end{customass}
Assumption \ref{ass:SC3-NSP} requires the distribution of $\eit^\mu$, which enters $\mu(\cdot)$, to be time-invariant conditional on $(\ai^\mu,X_{i}^\mu,X_i^\lambda)$. The unobservables entering $\lambda_t(\cdot)$, $(\ai^\lambda,\eione^\lambda,\eitwo^\lambda)$, are required to be independent of the unobservables that determine selection, $\ai^\mu$, conditional on $(X_i^\mu,X_i^\lambda)$.  

Each of the sufficient conditions consists of three components: (i) a restriction on how/which unobservables enter the projected selection mechanism, (ii) a restriction on the unobservables entering the time-invariant component of the structural function, and (iii) an independence assumption that ensures that the time-varying component of the structural function is independent of $G_i$ conditional on the time series of covariates. 

The following proposition formally establishes sufficiency of each set of conditions.

\begin{proposition}[Sufficient conditions]\label{prop:sufficient-NSP} Suppose that Assumptions \ref{ass:NSP-X} and \ref{ass:SEL-CI} hold and $P(G_i=1|X_{i}^\lambda, X_{i1}^\mu=X_{i2}^\mu)\in (0,1)~a.s.$ Then (i) Assumption \ref{ass:SC1-NSP} implies Assumption \ref{ass:PT-NSP}, (ii) Assumption \ref{ass:SC2-NSP} implies  Assumption \ref{ass:PT-NSP}, and (iii) Assumption \ref{ass:SC3-NSP} implies Assumption \ref{ass:PT-NSP}.
\end{proposition}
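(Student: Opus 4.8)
The plan is to reduce Assumption \ref{ass:PT-NSP} to a pair of conditional zero-covariance statements and verify each using the additive structure of Assumption \ref{ass:NSP-X} together with Assumption \ref{ass:SEL-CI}. Fix the conditioning event $\mathcal{C}\equiv\{X_i^\lambda=x^\lambda,\,X_{i1}^\mu=X_{i2}^\mu=x^\mu\}$ on which Assumption \ref{ass:PT-NSP} is evaluated. Since $G_i$ is binary, the overlap condition $P(G_i=1\mid\mathcal{C})\in(0,1)$ makes Assumption \ref{ass:PT-NSP} equivalent to $Cov(G_i,\,Y_{i2}(0)-Y_{i1}(0)\mid\mathcal{C})=0$. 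By Assumption \ref{ass:NSP-X}, and because $X_{i1}^\mu=X_{i2}^\mu=x^\mu$ on $\mathcal{C}$, I decompose $Y_{i2}(0)-Y_{i1}(0)=\Delta_{\mu,i}+\Delta_{\lambda,i}$, where $\Delta_{\mu,i}=\mu(x^\mu,\ai^\mu,\eitwo^\mu)-\mu(x^\mu,\ai^\mu,\eione^\mu)$ depends only on $(\ai^\mu,\eione^\mu,\eitwo^\mu)$, and $\Delta_{\lambda,i}=\lambda_2(x_2^\lambda,\ai^\lambda,\eitwo^\lambda)-\lambda_1(x_1^\lambda,\ai^\lambda,\eione^\lambda)$ depends only on $(\ai^\lambda,\eione^\lambda,\eitwo^\lambda)$ and the conditioned-on $X_i^\lambda$. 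It then suffices to show $Cov(G_i,\Delta_{\mu,i}\mid\mathcal{C})=0$ and $Cov(G_i,\Delta_{\lambda,i}\mid\mathcal{C})=0$. The move common to all three cases is to replace $G_i$ by the projected mechanism: by the law of iterated expectations and Assumption \ref{ass:SEL-CI}, for any function $h$ of the unobservables and covariates, $E[G_i\,h\mid\mathcal{C}]=E[\bar g\,h\mid\mathcal{C}]$, where $\bar g$ depends only on the covariates and the $\mu$-unobservables. This is what lets me treat the two blocks separately.

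The $\lambda$-block is handled uniformly. In each of the three conditions, part (iii) states that the $\mu$-unobservables on which $\bar g$ depends are independent of $(\ai^\lambda,\eione^\lambda,\eitwo^\lambda)$ conditional on $(X_i^\mu,X_i^\lambda)$ — all of $(\ai^\mu,\eione^\mu,\eitwo^\mu)$ under Assumption \ref{ass:SC1-NSP}, the pair $(\ai^\mu,\eione^\mu)$ under Assumption \ref{ass:SC2-NSP}, and $\ai^\mu$ alone under Assumption \ref{ass:SC3-NSP} — matching exactly the argument set of $\bar g$ determined by part (i) through Assumption \ref{ass:SEL-CI}. Since $\Delta_{\lambda,i}$ is a function of $(\ai^\lambda,\eione^\lambda,\eitwo^\lambda)$ and $X_i^\lambda$, conditional independence gives $\bar g\indep\Delta_{\lambda,i}$ on $\mathcal{C}$, hence $E[G_i\Delta_{\lambda,i}\mid\mathcal{C}]=E[\bar g\mid\mathcal{C}]\,E[\Delta_{\lambda,i}\mid\mathcal{C}]=E[G_i\mid\mathcal{C}]\,E[\Delta_{\lambda,i}\mid\mathcal{C}]$, i.e.\ $Cov(G_i,\Delta_{\lambda,i}\mid\mathcal{C})=0$.

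The $\mu$-block is where the three cases diverge. Under Assumption \ref{ass:SC3-NSP}, the time-homogeneity condition (ii) gives $E[\Delta_{\mu,i}\mid\ai^\mu,\mathcal{C}]=0$; since $\bar g$ is $\ai^\mu$-measurable given the fixed covariates by condition (i), iterated expectations over $\ai^\mu$ yield both $E[\Delta_{\mu,i}\mid\mathcal{C}]=0$ and $E[G_i\Delta_{\mu,i}\mid\mathcal{C}]=E[\bar g\,E[\Delta_{\mu,i}\mid\ai^\mu,\mathcal{C}]\mid\mathcal{C}]=0$. Under Assumption \ref{ass:SC2-NSP}, $\bar g$ is a function of $(\ai^\mu,\eione^\mu)$ by condition (i), while condition (ii) states $(\ai^\mu,\eione^\mu)\indep\Delta_{\mu,i}$ given $\mathcal{C}$; the factorization $E[G_i\Delta_{\mu,i}\mid\mathcal{C}]=E[\bar g\mid\mathcal{C}]\,E[\Delta_{\mu,i}\mid\mathcal{C}]$ is then immediate, giving $Cov(G_i,\Delta_{\mu,i}\mid\mathcal{C})=0$.

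The hardest case is Assumption \ref{ass:SC1-NSP}, which I expect to be the main obstacle. Here I exploit the symmetry of $\bar g$ in $(\eione^\mu,\eitwo^\mu)$ (condition (i)) together with the conditional exchangeability $(\eione^\mu,\eitwo^\mu)\overset{d}{=}(\eitwo^\mu,\eione^\mu)$ given $(X_i^\mu,X_i^\lambda,\ai^\mu)$ (condition (ii)). The key structural observation — and the reason Assumption \ref{ass:PT-NSP} conditions on $X_{i1}^\mu=X_{i2}^\mu$ — is that when the $\mu$-covariate is constant across periods, $\Delta_{\mu,i}=\mu(x^\mu,\ai^\mu,\eitwo^\mu)-\mu(x^\mu,\ai^\mu,\eione^\mu)$ is antisymmetric under the swap $\eione^\mu\leftrightarrow\eitwo^\mu$, whereas $\bar g$ is symmetric and the conditional law of $(\eione^\mu,\eitwo^\mu)$ is invariant. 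Relabeling the integration variables in $E[\bar g\,\Delta_{\mu,i}\mid\ai^\mu,\mathcal{C}]$ therefore shows this expectation equals its own negative and hence vanishes, and the same relabeling gives $E[\Delta_{\mu,i}\mid\ai^\mu,\mathcal{C}]=0$; iterating over $\ai^\mu$ delivers $Cov(G_i,\Delta_{\mu,i}\mid\mathcal{C})=0$. Combining the two blocks in each case gives $Cov(G_i,Y_{i2}(0)-Y_{i1}(0)\mid\mathcal{C})=0$ and thus Assumption \ref{ass:PT-NSP}. Beyond the SC1 symmetry argument, the only points requiring care are the measure-theoretic justification of the $G_i\mapsto\bar g$ reduction via Assumption \ref{ass:SEL-CI} and checking that conditioning on the event $\{X_{i1}^\mu=X_{i2}^\mu\}$ is compatible with the conditional-independence and exchangeability statements, which are all posed given the full covariate trajectory $(X_i^\mu,X_i^\lambda)$.
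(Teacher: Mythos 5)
Your proposal is correct and follows essentially the same route as the paper's proof: reduce Assumption \ref{ass:PT-NSP} to a zero conditional covariance (the paper's Lemma \ref{lem: PT equivalent condition}), pass from $G_i$ to $\bar g$ via the law of iterated expectations and Assumption \ref{ass:SEL-CI}, split $Y_{i2}(0)-Y_{i1}(0)$ into the $\mu$- and $\lambda$-differences, and then handle the $\mu$-block by the symmetry/exchangeability relabeling (SC1), the conditional-independence factorization (SC2), or time homogeneity given $\ai^\mu$ (SC3), with part (iii) of each assumption disposing of the $\lambda$-block. Your bilinear split into $Cov(G_i,\Delta_{\mu,i}\mid\mathcal{C})$ and $Cov(G_i,\Delta_{\lambda,i}\mid\mathcal{C})$ is a slightly tidier bookkeeping of the same computation the paper carries out term by term.
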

\begin{proof}
    See Appendix \ref{proof:prop:sufficient-NSP}.
\end{proof}
\begin{remark}[Connection to unconfoundedness]
All sufficient conditions in Proposition \ref{prop:sufficient-NSP} allow for selection on unobservable determinants of the untreated potential outcome. This is in contrast to the unconfoundedness assumptions commonly used in cross-sectional studies \citep[e.g.,][]{imbens2004nonparametric,imbens2009recent}. Therefore, these results elucidate the differences between conditional parallel trends and unconfoundedness-type assumptions.\qed
\end{remark}

\section{Connections to identification assumptions in panel models}\label{app:connections}
Here we relate the selection-based sufficient conditions in Section \ref{ass:PT-NSP} to the identification assumptions in the nonseparable panel literature.\footnote{See, e.g., \citet{altonji2005cross,athey2006identification,bester2009identification,hoderlein2012nonparametric, chernozhukov2013average,arellano2016nonlinear,ghanem2017testing}. This work extends notions of fixed effects and correlated random effects that originated in the linear model \citep{mundlak1961empirical,mundlak1978on,chamberlain1982multivariate,chamberlain1984panel}. Surveys \citep{arellano2001panel,arellano2011nonlinear} and textbook treatments \citep{arellano2003panel,wooldridge2010econometric} further describe the role of restrictions on time and individual heterogeneity in linear and nonlinear models. Such restrictions have been imposed in the context of identification in limited dependent variable models \citep[e.g.][]{manski1987semiparametric,honore1993orthogonality,kyriazidou1997estimation,honore2000estimation,honore2000panel} and random coefficient models \citep[e.g.][]{chamberlain1992efficiency,graham2012identification,arellano2012identifying}. Nonparametric identification of panel models with additivity restrictions has been examined, e.g., in \citet{evdokimov2010identification} and \citet{freyberger2017non-parametric}.} The literature on nonseparable panel models has considered two broad categories of identification assumptions. First, time homogeneity conditions \citep[e.g.,][]{hoderlein2012nonparametric,chernozhukov2013average} require the distribution of time-varying unobservables to be stationary across time while allowing for unrestricted individual heterogeneity (fixed effects). Second, nonparametric correlated random effects restrictions \citep[e.g.,][]{altonji2005cross,bester2009identification} allow for unrestricted time heterogeneity by imposing restrictions on individual heterogeneity, generalizing the classical notion of correlated random effects \citep[e.g.,][]{mundlak1978on,chamberlain1984panel}. However, neither category of assumptions is explicit about the selection mechanism and, in particular, about how unobservables determine selection.

The existing identification results based on time homogeneity or correlated random effects assumptions suggest a trade-off between restrictions on time and individual heterogeneity. Here we show that our sufficient conditions for Assumption \ref{ass:PT-NSP} constitute interpretable primitive conditions on the selection mechanism that imply \emph{combinations} of time homogeneity and correlated random effects restrictions from the nonseparable panel literature.

The following assumption is the time homogeneity assumption from \citet{chernozhukov2013average} imposed on $\eit^\mu$ in Assumption \ref{ass:NSP-X}, conditional on the time series of all covariates that enter the outcome equation. 
\begin{customass}{TH}\label{ass:TH-X}
$\eione^\mu|G_i,X_i^\mu,X_i^\lambda,\ai^\mu\overset{d}{=}\eitwo^\mu|G_i,X_i^\mu,X_i^\lambda,\ai^\mu$
\end{customass}

Assumption \ref{ass:TH-X} requires the distribution of $\eit^\mu$ to be homogeneous across time conditional on $G_i$, $X_i^\mu$, $X_i^\lambda$, and $\ai^\mu$. However, it does not impose any restrictions on the conditional distribution of $\eit^\mu$. Furthermore, there are no restrictions imposed on the distribution of $\ai^\mu|G_i,X_i^\mu,X_i^\lambda$, consistent with the notion of fixed effects.

The next assumption is a nonparametric correlated random effects assumption \citep[e.g.,][]{altonji2005cross,ghanem2017testing}. 
\begin{customass}{CRE}\label{ass:RE-X}
$(\ai^\lambda,\eione^\lambda,\eitwo^\lambda)|G_i,X_i^\mu,X_i^\lambda\overset{d}{=}(\ai^\lambda,\eione^\lambda,\eitwo^\lambda)|X_i^\mu,X_i^\lambda$.
\end{customass}
Assumption \ref{ass:RE-X} is a conditional independence condition between $G_i$ and the unobservables that enter the time-varying component of the structural function, $\lt(\cdot)$. This assumption does not imply conditional random assignment, $(Y_{i1}(0),Y_{i2}(0))\indep G_i|X_{i}^\mu,X_i^\lambda$, since selection into treatment can depend on the unobservables entering the time-invariant component $\mu(\cdot)$.

Together, Assumptions \ref{ass:TH-X} and \ref{ass:RE-X} imply Assumption \ref{ass:PT-NSP}.\footnote{\citet[][Appendix B]{ghanem2017testing} discusses the nonparametric identification of the ATT through DiD either through time homogeneity or random effects assumptions.} 
\begin{proposition}[\ref{ass:TH-X} and \ref{ass:RE-X} imply \ref{ass:PT-NSP}]\label{prop:TH+CRE imply parallel trends}
Suppose that Assumption \ref{ass:NSP-X} holds and $P(G_i=1|X_{i1}^\mu=X_{i2}^\mu,X_i^\lambda)\in(0,1)$ a.s. Then Assumptions \ref{ass:TH-X} and \ref{ass:RE-X} imply Assumption \ref{ass:PT-NSP}.
\end{proposition}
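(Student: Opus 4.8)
The plan is to decompose the within-unit change in the untreated potential outcome into its time-invariant and time-varying pieces and argue each piece separately: the time-invariant piece contributes zero once we restrict to units with $X_{i1}^\mu=X_{i2}^\mu$, and the time-varying piece is shown to be mean-independent of $G_i$. Using Assumption \ref{ass:NSP-X},
\[
Y_{i2}(0)-Y_{i1}(0)=\Delta_{\mu,i}+\Delta_{\lambda,i},\qquad \Delta_{\lambda,i}\equiv \lambda_2(X_{i2}^\lambda,\ai^\lambda,\eitwo^\lambda)-\lambda_1(X_{i1}^\lambda,\ai^\lambda,\eione^\lambda),
\]
where $\Delta_{\mu,i}$ is defined as in Assumption \ref{ass:SC2-NSP}. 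Throughout I condition on $G_i=g$, $X_i^\lambda$, and the common value $X_{i1}^\mu=X_{i2}^\mu$ (so that $X_i^\mu$ is effectively in the conditioning set, per the stated convention), and I use $P(G_i=1\mid X_{i1}^\mu=X_{i2}^\mu,X_i^\lambda)\in(0,1)$ a.s.\ to guarantee that both conditional expectations are well defined.

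For the $\mu$-term I would first condition additionally on $\ai^\mu$ and apply the law of iterated expectations, so that it suffices to evaluate $E[\Delta_{\mu,i}\mid G_i,X_i^\mu,X_i^\lambda,\ai^\mu]$. On the event $X_{i1}^\mu=X_{i2}^\mu$ the two $X^\mu$-arguments coincide, hence $e\mapsto\mu(X_{i2}^\mu,\ai^\mu,e)$ is a single fixed measurable function that is applied to $\eitwo^\mu$ in the first summand and to $\eione^\mu$ in the second. Assumption \ref{ass:TH-X} gives $\eione^\mu\mid G_i,X_i^\mu,X_i^\lambda,\ai^\mu\overset{d}{=}\eitwo^\mu\mid G_i,X_i^\mu,X_i^\lambda,\ai^\mu$, so the two conditional expectations are equal and $E[\Delta_{\mu,i}\mid G_i,X_i^\mu,X_i^\lambda,\ai^\mu]=0$. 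Integrating out $\ai^\mu$ yields $E[\Delta_{\mu,i}\mid G_i=g,X_i^\lambda,X_{i1}^\mu=X_{i2}^\mu]=0$ for every $g$, so this component drops out regardless of $G_i$.

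For the $\lambda$-term, note that conditional on $(X_i^\mu,X_i^\lambda)$ the quantity $\Delta_{\lambda,i}$ is a fixed measurable function of $(\ai^\lambda,\eione^\lambda,\eitwo^\lambda)$ only, since the $X^\lambda$-arguments are held fixed. Assumption \ref{ass:RE-X} states that the conditional law of $(\ai^\lambda,\eione^\lambda,\eitwo^\lambda)$ given $(G_i,X_i^\mu,X_i^\lambda)$ equals its law given $(X_i^\mu,X_i^\lambda)$, hence $E[\Delta_{\lambda,i}\mid G_i,X_i^\mu,X_i^\lambda]$ does not depend on $G_i$. Combining the two pieces, $E[Y_{i2}(0)-Y_{i1}(0)\mid G_i=g,X_i^\lambda,X_{i1}^\mu=X_{i2}^\mu]$ equals a term free of $g$, which is exactly Assumption \ref{ass:PT-NSP}. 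I expect the main obstacle to be the careful bookkeeping of the conditioning sets in the $\mu$-term—specifically, making explicit why the restriction $X_{i1}^\mu=X_{i2}^\mu$ is indispensable: only on this event do the two evaluations of $\mu$ share a common function of $\eit^\mu$, so that the distributional equality in Assumption \ref{ass:TH-X} forces the difference to vanish; off this event the functions $\mu(X_{i1}^\mu,\ai^\mu,\cdot)$ and $\mu(X_{i2}^\mu,\ai^\mu,\cdot)$ differ and the argument breaks down.
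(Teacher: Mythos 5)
Your proposal is correct and follows essentially the same route as the paper: the same decomposition of $Y_{i2}(0)-Y_{i1}(0)$ into the $\mu$- and $\lambda$-components under Assumption \ref{ass:NSP-X}, with Assumption \ref{ass:TH-X} killing the $\mu$-term on the event $X_{i1}^\mu=X_{i2}^\mu$ and Assumption \ref{ass:RE-X} rendering the $\lambda$-term mean-independent of $G_i$. The only cosmetic difference is that you condition on $\ai^\mu$ explicitly and integrate it out, whereas the paper works directly with the implied joint equality $(\ai^\mu,\eione^\mu)\,|\,G_i,X_i^\mu,X_i^\lambda \overset{d}{=} (\ai^\mu,\eitwo^\mu)\,|\,G_i,X_i^\mu,X_i^\lambda$; these are equivalent.
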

\begin{proof}
    See Appendix \ref{proof:prop:TH+CRE imply parallel trends}.
\end{proof}
In view of Proposition \ref{prop:TH+CRE imply parallel trends}, it is interesting to explore the connection between selection, time homogeneity, and correlated random effects in the nonseparable DiD framework. To this end, Proposition \ref{prop:connection-NSP-X} shows that Assumptions \ref{ass:SC1-NSP} and \ref{ass:SC3-NSP} are primitive sufficient conditions on the selection mechanism for the nonseparable model satisfying Assumptions \ref{ass:TH-X} and \ref{ass:RE-X}.\footnote{In the context of correlated random coefficient models, \citet{graham2012identification} impose a similar structure on their model.} In the following, with a slight abuse of notation, we let $\mathcal{A}$ denote the support of $\ai^\mu$. For $t=1,2$, let $\mathcal{E}$ denote the support of $\eit^\mu$, $\mathcal{X}_\mu$ the support of $X_{it}^\mu$ and $\mathcal{X}_\lambda$ the support of $X_{it}^\lambda$.

\begin{proposition}[Connection between selection, time homogeneity, and correlated random effects]\label{prop:connection-NSP-X} Suppose that Assumption \ref{ass:NSP-X} holds and $G_i=g(X_{i1}^\mu,X_{i2}^\mu,X_{i1}^\lambda,X_{i2}^\lambda,\ai^\mu,\eione^\mu,\eitwo^\mu)$.\footnote{For simplicity, we assume that the selection mechanism only depends on the unobservables that enter the time-invariant component of the structural function $\mu(\cdot)$ in addition to covariates.} Assume that the conditional density of $(\eione^\mu,\eitwo^\mu)|X_i^\mu,X_i^\lambda,\ai^\mu$,  $f_{\eione^\mu,\eitwo^\mu|X_i^\mu,X_i^\lambda,\ai^\mu}(e_1,e_2|x^\mu,x^\lambda,a)$, exists and the conditional density of $\eit^\mu|X_i^\mu,X_i^\lambda,\ai^\mu$, $f_{\eit|X_i^\mu,X_i^\lambda,\ai}(e|x^\mu,x^\lambda,a)$, is strictly positive  for $(e,x^\mu,x^\lambda,a)\in\mathcal{E}\times\mathcal{X}_{\mu}^2\times\mathcal{X}_\lambda^2\times \mathcal{A}$, $t=1,2$. Then (i) Assumption \ref{ass:SC1-NSP} with $g(\cdot)$ in lieu of $\bar{g}(\cdot)$ implies Assumptions \ref{ass:TH-X} and \ref{ass:RE-X} if $P(G_i=1|X_i^\mu,X_i^\lambda,\ai^\mu)\in (0,1)$ a.s., (ii) Assumption \ref{ass:SC3-NSP} with $g(\cdot)$ in lieu of $\bar{g}(\cdot)$ implies Assumptions \ref{ass:TH-X} and \ref{ass:RE-X}.
\end{proposition}
\begin{proof}
    See Appendix \ref{proof:prop:connection-NSP-X}.
\end{proof}
Proposition \ref{prop:connection-NSP-X}  demonstrates how restrictions on selection can be used to justify combinations of Assumptions \ref{ass:TH-X} and \ref{ass:RE-X}.

\section{Bias decomposition with covariates}
\label{app: bias decomposition covariates}

Here we extend the analysis in Section \ref{sec:bias_characterization} to the setup with covariates in Appendix \ref{app:covariates}. 

\subsection{Decomposition}
The necessary and sufficient condition in Corollary \ref{cor:if1} with covariates and one additional pre-treatment period is
$
E[\ddot{Y}_{i2}(0)|X_i,\ai,\eipre,\mu_i,\etaipre]=\ddot{Y}_{i1}(0),
$ 
where $\ddot{Y}_{it}(0)\equiv Y_{it}(0)-E[Y_{it}(0)|X_i]$.
Consider the following linear relaxation of the martingale property,
\begin{equation}
E[\ddot{Y}_{it}(0)|X_i,\ai,\varepsilon_i^{t-1},\mu_i,\eta_i^{t-1}]=\rho_t\ddot{Y}_{i(t-1)}(0), \quad i=1,\dots,n,\quad t=1,2,\label{eq:relaxation_covariates_linear}
\end{equation}
and define $\zeta_{it}\equiv \ddot{Y}_{it}(0)-\rho_t\ddot{Y}_{i(t-1)}$. 

By the same arguments as in  Proposition \ref{prop:bias_characterization}, conditional on $X_i$, we can decompose the bias of the conditional ATT as follows
\begin{eqnarray*}
        \biasdidpost(X_i)\equiv\did(X_i)-\att(X_i)=\biasselpost(X_i)+\biasmtgpost(X_i),
\end{eqnarray*}
where 
\begin{eqnarray*}\biasselpost(X_i)&\equiv&E[\zeta_{i2}|G_i=1,X_i]-E[\zeta_{i2}|G_i=0,X_i],\\
\biasmtgpost(X_i)&\equiv&(\rho_2-1)(E[{Y}_{i1}|G_i=1,X_i]-E[{Y}_{i1}|G_i=0,X_i]).
\end{eqnarray*}
Therefore, the unconditional bias is (with a slight abuse of notation)
\begin{eqnarray*}
\biasdidpost&\equiv &E[\biasdidpost(X_i)|G_i=1]\\
&=&E[E[\zeta_{i2}|G_i=1,X_i]-E[\zeta_{i2}|G_i=0,X_i]|G_i=1]\\
&&+(\rho_2-1)(E[Y_{i1}|G_i=1]-E[E[Y_{i1}|G_i=0,X_i]|G_i=1])\\
&\equiv &\biasselpost+\biasmtgpost
\end{eqnarray*}

Analogous to Section \ref{sec:benchmarking},  we rely on pre-treament and control group counterparts of $\rho_2$ and the pre-treatment counterpart of $\biasselpost$, $
\biasselpre=E[\zeta_{i1}|G_i=1]-E[E[\zeta_{i1}|G_i=0,X_i]|G_i=1]$, to benchmark the bias.

\subsection{Implementation}\label{app:implementation}
 For a random variable $W_{it}$, let $E_n[W_{it} | G_i=1] = \left. \sum_{i=1}^n G_i W_{it} \right/ \sum_{i=1}^n G_i$ and $\widehat{E}[W_{it}|G_i=g,X_i]$ ($\widehat{E}[W_{it}|X_i]$) be a linear regression estimator of $E[W_{it}|G_i=g,X_i]$ ($E[W_{it}|X_i]$). 

 Given $\biasselpost$ and $\rho_2$, an estimator of $\biasdidpost$ is given by
\begin{eqnarray*}\biasdidposthat=\biasdidposthat\left(\biasselpost,\rho_2\right)=\biasselpost+(\rho_2-1)\left(E_n[Y_{i1}|G_i=1]-E_n[\widehat{E}[Y_{i1}|G_i=0,X_i]|G_i=1]\right).
\end{eqnarray*}
If $\biasselpost=0$, then $$\biasdidposthat=(\rho_2-1)\left(E_n[Y_{i1}|G_i=1]-E_n[\widehat{E}[Y_{i1}|G_i=0,X_i]|G_i=1]\right).$$
Alternatively, if $\biasselpost=\biasselpre$, then
$$\biasdidposthat=\biasselprehat+(\rho_2-1)\left(E_n[Y_{i1}|G_i=1]-E_n[\widehat{E}[Y_{i1}|G_i=0,X_i]|G_i=1]\right),$$
where $\biasselprehat=E_n[\hat{\zeta}_{i1}|G_i=1]-E_n\left[\widehat{E}\left[\widehat{\zeta}_{i1}|G_i=0,X_i\right]|G_i=1\right]$, $\widehat{\zeta}_{i1}=\widehat{\ddot{Y}}_{i1}-\widehat{\rho}_1\widehat{\ddot{Y}}_{i0}$, $\widehat{\ddot{Y}}_{it}=Y_{it}-\widehat{E}[Y_{it}|X_i]$, and $\widehat{\rho}_1$ is obtained from a regression of $\widehat{\ddot{Y}}_{i1}$ on $\widehat{\ddot{Y}}_{i0}$.

\section{Proofs of the results in the main text}

\subsection{Auxiliary lemmas}

\begin{lemma}\label{lem: PT equivalent condition}
Let $W_i$ denote a vector of random variables. Suppose that $P(G_i=1|W_i)\in (0,1)$ a.s. Then $E[Y_{i2}(0)-Y_{i1}(0)|G_i=1,W_i]=E[Y_{i2}(0)-Y_{i1}(0)|G_i=0,W_i]$ if and only if $E[G_i(Y_{i2}(0)-Y_{i1}(0))|W_i]=E[G_i|W_i]E[Y_{i2}(0)-Y_{i1}(0)|W_i]$ a.s.
\end{lemma}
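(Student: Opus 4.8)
The plan is to condition everything on $\omega_i$ and exploit that $G_i$ is binary, which collapses both sides of the claimed equivalence to a single scalar identity. Writing $\Delta_i \equiv Y_{i2}(0)-Y_{i1}(0)$, and setting $a \equiv E[\Delta_i \mid G_i=1,\omega_i]$, $b \equiv E[\Delta_i \mid G_i=0,\omega_i]$, and $p \equiv E[G_i\mid\omega_i] = P(G_i=1\mid\omega_i)$, the left-hand condition is simply $a=b$ a.s. So the task reduces to showing that the covariance-type condition on the right is equivalent to $a=b$ a.s.

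First I would compute the two relevant conditional moments via the law of iterated expectations, conditioning on $(G_i,\omega_i)$. Because $G_i\in\{0,1\}$, we have $G_i\Delta_i=\Delta_i$ on $\{G_i=1\}$ and $G_i\Delta_i=0$ on $\{G_i=0\}$, so $E[G_i\Delta_i\mid\omega_i]=p\,a$. Partitioning on the value of $G_i$ likewise yields $E[\Delta_i\mid\omega_i]=p\,a+(1-p)\,b$. Substituting these into the right-hand side of the covariance condition gives $E[G_i\mid\omega_i]\,E[\Delta_i\mid\omega_i]=p\bigl(p a+(1-p)b\bigr)$.

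The covariance condition $E[G_i\Delta_i\mid\omega_i]=E[G_i\mid\omega_i]\,E[\Delta_i\mid\omega_i]$ then reads $p a=p\bigl(p a+(1-p)b\bigr)$, which rearranges to $p(1-p)(a-b)=0$ a.s. This is exactly where the maintained hypothesis $P(G_i=1\mid\omega_i)\in(0,1)$ a.s.\ enters: it guarantees $p(1-p)>0$ a.s., so the product vanishes if and only if $a=b$ a.s., i.e.\ precisely the conditional parallel trends statement on the left of the equivalence. The argument is symmetric in its steps, so it delivers both directions of the ``if and only if'' simultaneously. The only point requiring care---the main, and quite modest, obstacle---is the bookkeeping of the conditioning: one must justify $E[G_i\Delta_i\mid\omega_i]=p a$ and $E[\Delta_i\mid\omega_i]=p a+(1-p)b$ rigorously through iterated expectations on $(G_i,\omega_i)$, using integrability of $\Delta_i$ (ensured by the standing assumption that all relevant moments exist), rather than any deeper difficulty.
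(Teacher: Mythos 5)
Your proof is correct and uses essentially the same ingredients as the paper's: the identity $E[G_i\Delta_i\mid\omega_i]=p\,a$ and the law-of-total-probability decomposition $E[\Delta_i\mid\omega_i]=p\,a+(1-p)\,b$, with the overlap condition $p\in(0,1)$ a.s.\ doing the same work in both arguments. The only (cosmetic) difference is that you collapse both directions into the single identity $p(1-p)(a-b)=0$, whereas the paper treats the two implications separately.
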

\begin{proof}
In the following, all equalities involving conditional expectations are understood as a.s.\ equalities.

\textbf{``$\Longrightarrow$''}: First, note that by the law of total probability, $E[Y_{i2}(0)-Y_{i1}(0)|G_i=1,W_i]=E[Y_{i2}(0)-Y_{i1}(0)|G_i=0,W_i]$ implies $$E[Y_{i2}(0)-Y_{i1}(0)|G_i=1,W_i]=E[Y_{i2}(0)-Y_{i1}(0)|W_i].$$ The result follows from noting that $E[Y_{i2}(0)-Y_{i1}(0)|G_i=1,W_i]=\frac{E[G_i(Y_{i2}(0)-Y_{i1}(0))|W_i]}{P(G_i=1|W_i)}$ by definition.

\textbf{``$\Longleftarrow$''}: Since $P(G_i=1|W_i)\in(0,1)$, it follows that $E[Y_{i2}(0)-Y_{i1}(0)|G_i=1,W_i]=E[Y_{i2}(0)-Y_{i1}(0)|W_i]$. It then follows that
\begin{eqnarray*}&&E[Y_{i2}(0)-Y_{i1}(0)|G_i=1,W_i]P(G_i=1|W_i)+E[Y_{i2}(0)-Y_{i1}(0)|G_i=0,W_i]P(G_i=0|W_i)\\
&&=E[Y_{i2}(0)-Y_{i1}(0)|G_i=1,W_i].\end{eqnarray*}
The result follows from subtracting the first term on the left-hand side and dividing by $P(G_i=0|W_i)$.
\end{proof}
\begin{lemma}\label{lem:necessary}
For a scalar random variable $W_i$, let $\dot{W}_i=W_i-E[W_i]$. If $E[\dot{W}_i1\{\dot{W}_i\leq 0\}]=0$ or $E[\dot{W}_i1\{\dot{W}_i\geq 0\}]=0$, then $W_i=E[W_i]$ a.s.
\end{lemma}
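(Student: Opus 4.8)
The plan is to exploit the fact that $\dot{W}_i=W_i-E[W_i]$ is centered, so $E[\dot{W}_i]=0$ automatically, and then to show that each of the two hypotheses forces $\dot{W}_i$ to be of a single sign almost surely, after which the zero-mean property collapses it to $0$. The whole argument rests on one elementary fact that I would state and use twice: if a random variable is almost surely nonpositive (resp.\ nonnegative) and has expectation zero, then it equals zero almost surely.

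First I would treat the case $E[\dot{W}_i 1\{\dot{W}_i\le 0\}]=0$. The key observation is that the integrand $\dot{W}_i 1\{\dot{W}_i\le 0\}$ is pointwise nonpositive, since it is the product of $\dot{W}_i\le 0$ on the event $\{\dot{W}_i\le 0\}$ with an indicator. A nonpositive random variable with zero expectation must vanish almost surely, so $\dot{W}_i 1\{\dot{W}_i\le 0\}=0$ a.s. This in turn forces $P(\dot{W}_i<0)=0$, because on the event $\{\dot{W}_i<0\}$ the product would be strictly negative; hence $\dot{W}_i\ge 0$ a.s. Combining $\dot{W}_i\ge 0$ a.s.\ with $E[\dot{W}_i]=0$ and invoking the same elementary fact (now for a nonnegative variable) yields $\dot{W}_i=0$ a.s., i.e.\ $W_i=E[W_i]$ a.s.

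Second, the case $E[\dot{W}_i 1\{\dot{W}_i\ge 0\}]=0$ is entirely symmetric: here $\dot{W}_i 1\{\dot{W}_i\ge 0\}$ is pointwise nonnegative with zero expectation, so it vanishes a.s., giving $P(\dot{W}_i>0)=0$ and thus $\dot{W}_i\le 0$ a.s.; together with $E[\dot{W}_i]=0$ this again gives $\dot{W}_i=0$ a.s. I would note once that the overlap of the events $\{\dot{W}_i\le 0\}$ and $\{\dot{W}_i\ge 0\}$ at $\{\dot{W}_i=0\}$ is harmless, since $\dot{W}_i$ vanishes there and contributes nothing to either expectation.

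There is no real obstacle here; the statement is elementary. The only point requiring care is the logical step from ``a one-signed integrand has zero mean'' to ``the event of the strict opposite sign has probability zero,'' and then correctly reusing the zero-mean argument a second time on the now one-signed variable $\dot{W}_i$ itself rather than on the truncated integrand. I would make sure both invocations of the zero-mean/one-sign fact are stated explicitly so the two-step structure is transparent.
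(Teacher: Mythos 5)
Your proof is correct and follows essentially the same route as the paper's: both arguments first use the hypothesis to conclude that $\dot{W}_i$ has a single sign almost surely (the paper via the explicit decomposition $\dot{W}_i=\dot{W}_i^+-\dot{W}_i^-$, you via the observation that the truncated integrand is one-signed with zero mean), and then invoke the zero-mean property a second time to collapse $\dot{W}_i$ to zero. The only difference is presentational.
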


\begin{proof} We prove the result for the case where $E[\dot{W}_i1\{\dot{W}_i\leq 0\}]=0$, since the proof for the other case follows by identical arguments.
First, note that by definition $E[\dot{W}_i]=0$, which is equivalent to \begin{align}E[\dot{W}_i^+]=E[\dot{W}_i^-],\label{eq:zeta_meanzero}\end{align}
where $\dot{W}_i^+=|\dot{W}_i|1\{\dot{W}_i>0\}$ and $\dot{W}_i^-=|\dot{W}_i|1\{\dot{W}_i< 0\}$.

Now suppose that $E[\dot{W}_i1\{\dot{W}_i\leq 0\}]=0$ holds, which is equivalent to 
\begin{align}E[\dot{W}_i^+1\{\dot{W}_i\leq 0\}]=E[\dot{W}_i^-1\{\dot{W}_i\leq 0\}],\end{align}
since, by definition, $\dot{W}_i=\dot{W}_i^+-\dot{W}_i^-$. Note that the left-hand side equals zero by the definition of $\dot{W}_i^+$. As a result, $E[\dot{W}_i^-1\{\dot{W}_i\leq 0\}]=E[\dot{W}_i^-]=0$. Since $\dot{W}_i^-\geq 0$, this implies that $P(\dot{W}_i^-=0)=1$. Now note that $P(\dot{W}_i^-=0)=P(|\dot{W}_i|1\{\dot{W}_i<0\}=0)=P(1\{\dot{W}_i<0\}=0)=1$, which implies $P(\dot{W}_i<0)=0$.

Since $E[\dot{W}_i]=0$, \eqref{eq:zeta_meanzero} further implies that $E[\dot{W}_i^-]=E[\dot{W}_i^+]=0$. Since $\dot{W}_i^+\geq 0$, it follows that $P(\dot{W}_i^+=0)=1$. Now note that $P(\dot{W}_i^+=0)=P(|\dot{W}_i|1\{\dot{W}_i>0\}=0)=P(1\{\dot{W}_i>0\}=0)=1$, which implies $P(\dot{W}_i>0)=0$.

Together, $P(\dot{W}_i<0)=0$ and $P(\dot{W}_i>0)=0$ imply that $P(\dot{W}_i=0)=1-(P(\dot{W}_i<0)+P(\dot{W}_i>0))=1$, which completes the proof.
\end{proof}

\subsection{Proof of Lemma \ref{lem:bias_decomposition}}\label{proof:lem:bias_decomposition}

We rewrite $\biasdidpost$ as follows,
\begin{eqnarray*}
    \biasdidpost&\equiv &E[Y_{i2}(0)-Y_{i1}(0)|G_i=1]-E[Y_{i2}(0)-Y_{i1}(0)|G_i=0]\\&=&\frac{E[G_i(Y_{i2}(0)-Y_{i1}(0))]}{P(G_i=1)}-\frac{E[(1-G_i)(Y_{i2}(0)-Y_{i1}(0)]}{P(G_i=0)}\\
    &=&\frac{(1-P(G_i=1))E[G_i(Y_{i2}(0)-Y_{i1}(0))]-P(G_i=1)E[Y_{i2}(0)-Y_{i1}(0)]}{P(G_i=1)P(G_i=0)}\\&&+\frac{P(G_i=1)E[G_i(Y_{i2}(0)-Y_{i1}(0))]}{P(G_i=1)P(G_i=0)}\\
    &=&\frac{E[G_i(\dot{Y}_{i2}(0)-\dot{Y}_{i1}(0))]}{P(G_i=1)P(G_i=0)}\\
    &=&\frac{E[(G_i-E[G_i|\oi])(\dot{Y}_{i2}(0)-\dot{Y}_{i1}(0))]}{P(G_i=1)P(G_i=0)}+\frac{E[E[G_i|\oi](\dot{Y}_{i2}(0)-\dot{Y}_{i1}(0))]}{P(G_i=1)P(G_i=0)}\\
&=&\frac{E[(G_i-E[G_i|\oi])(\dot{Y}_{i2}(0)-\dot{Y}_{i1}(0))]}{P(G_i=1)P(G_i=0)}+\frac{E[E[G_i|\oi]E[\dot{Y}_{i2}(0)-\dot{Y}_{i1}(0)|\oi]]}{P(G_i=1)P(G_i=0)}
\end{eqnarray*}
The first equality follows by definition of $\biasdidpost$. The second equality follows by the LIE. The penultimate equality follows from subtracting and adding $E[G_i|\oi](\dot{Y}_{i2}(0)-\dot{Y}_{i1}(0))$ inside the expectation in the numerator. The last equality follows from the LIE.
\qed

\subsection{Proof of Proposition \ref{prop:bias_characterization}}\label{proof:prop:bias_characterization}

We first simplify $\biasselpost$ in Lemma \ref{lem:bias_decomposition} with $\omega_i=(\ai,\eipre,\mu_i,\etaipre)$ as follows, 
\begin{eqnarray*}
    \biasselpost
    &=&\frac{E[(G_i-E[G_i|\ai,\eipre,\mu_i,\etaipre])(\dot{Y}_{i2}(0)-\dot{Y}_{i1}(0))]}{P(G_i=1)P(G_i=0)},\\
    &=&\frac{E[G_i(\dot{Y}_{i2}(0)-\dot{Y}_{i1}(0))]}{P(G_i=1)P(G_i=0)}\\
    &&-\frac{E[E[G_i|\ai,\eipre,\mu_i,\etaipre](E[\dot{Y}_{i2}(0)|\ai,\eipre,\mu_i,\etaipre]-\dot{Y}_{i1}(0))]}{P(G_i=1)P(G_i=0)}\\    
    &=&\frac{E[G_i(\dot{Y}_{i2}(0)-E[\dot{Y}_{i2}(0)|\ai,\eipre,\mu_i,\etaipre])]}{P(G_i=1)P(G_i=0)}\\
    &=&E[\zeta_{i2}|G_i=1]-E[\zeta_{i2}|G_i=0]
\end{eqnarray*}
The second and the third equality follow from the LIE. The last equality follows from Assumption \ref{ass:REL} and because $G_i\in \{0,1\}$.

Next, we simplify $\biasmtgpost$ in Lemma \ref{lem:bias_decomposition} with $\omega_i=(\ai,\eipre,\mu_i,\etaipre)$ as follows,
\begin{eqnarray*}
\biasmtgpost&=&\frac{E[E[G_i|\ai,\eipre,\mu_i,\etaipre](E[\dot{Y}_{i2}(0)|\ai,\eipre,\mu_i,\etaipre]-\dot{Y}_{i1}(0))]}{P(G_i=1)P(G_i=0)}\\&=&(\rho_2-1)\frac{E[G_i\dot{Y}_{i1}(0))]}{P(G_i=1)P(G_i=0)}\\
&=&(\rho_2-1)(E[Y_{i1}(0)|G_i=1]-E[Y_{i1}(0)|G_i=0].
\end{eqnarray*}
The second equality follows from Assumption \ref{ass:REL} and the LIE. The last equality follows because $G_i\in \{0,1\}$. \qed

\section{Proofs of results in the Appendix}

\subsection{Auxiliary lemmas}

\begin{lemma}[Equivalence with multiple periods]
\label{lem: PT-MP equivalent condition}
Suppose that Assumption \ref{ass:NA} holds and $P(G_i=g)\in(0,1)$ for $g\in \{2,\dots,T,\infty\}$. Then Assumption \ref{ass:PT-MP} is equivalent to $E[1\{G_i=g\}(\dot{Y}_{it}(\infty)-\dot{Y}_{i(t-1)}(\infty))]=0$ for $g\in \{2,\dots,T,\infty\}$ and $t\in \{2,\dots,T\}$.
\end{lemma}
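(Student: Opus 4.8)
The plan is to reduce both sides of the claimed equivalence to a single statement about the conditional means of the centered first differences, and then to bridge the two groupings via the law of total expectation. Write $\Delta_{it}\equiv Y_{it}(\infty)-Y_{i(t-1)}(\infty)$ and observe that, since centering is linear, its centered version is $\dot{Y}_{it}(\infty)-\dot{Y}_{i(t-1)}(\infty)=\Delta_{it}-E[\Delta_{it}]$. This is the multiple-period, multiple-group analogue of Lemma \ref{lem: PT equivalent condition}, so the argument is essentially an exercise in manipulating conditional means over the partition induced by $G_i$.

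First I would translate the moment condition into conditional-mean form. Because $P(G_i=g)\in(0,1)$ for every $g\in\{2,\dots,T,\infty\}$, the identity $E[1\{G_i=g\}(\dot{Y}_{it}(\infty)-\dot{Y}_{i(t-1)}(\infty))]=P(G_i=g)\,E[\dot{Y}_{it}(\infty)-\dot{Y}_{i(t-1)}(\infty)\mid G_i=g]$ shows that the moment vanishing is equivalent to $E[\dot{Y}_{it}(\infty)-\dot{Y}_{i(t-1)}(\infty)\mid G_i=g]=0$, i.e.\ to $E[\Delta_{it}\mid G_i=g]=E[\Delta_{it}]$. Hence the whole collection of moment conditions (over $g\in\{2,\dots,T,\infty\}$ and $t\in\{2,\dots,T\}$) is equivalent to the statement that, for all such $g$ and $t$, the group-specific mean of $\Delta_{it}$ coincides with its unconditional mean.

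Second I would establish the two-way equivalence against this reformulation. The direction from the moment condition to Assumption \ref{ass:PT-MP} is immediate: if every group mean (including $g=\infty$) equals the unconditional mean, then for any treated group $g\in\{2,\dots,T\}$ the mean of $\Delta_{it}$ equals that for the never-treated group $\infty$, which is exactly Assumption \ref{ass:PT-MP}. For the converse, I would assume Assumption \ref{ass:PT-MP} and set $c_t\equiv E[\Delta_{it}\mid G_i=\infty]$, so that $E[\Delta_{it}\mid G_i=g]=c_t$ for every $g\in\{2,\dots,T,\infty\}$.

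The step requiring care — the main obstacle — is closing the $g=\infty$ case, because Assumption \ref{ass:PT-MP} only compares treated groups to the never-treated group and never equates $c_t$ to the unconditional mean directly. Here I would invoke the law of total expectation over the partition $\{G_i=g\}_{g\in\{2,\dots,T,\infty\}}$: since all the conditional means equal the common value $c_t$, $E[\Delta_{it}]=\sum_{g}P(G_i=g)\,E[\Delta_{it}\mid G_i=g]=c_t$. Thus $c_t=E[\Delta_{it}]$, so every group mean equals the unconditional mean, which by the first step is precisely the moment condition. The positivity requirement $P(G_i=g)\in(0,1)$ is what makes the first step an equivalence rather than an implication; Assumption \ref{ass:NA} is maintained for consistency with the surrounding development but is not needed for this purely algebraic identity on conditional means.
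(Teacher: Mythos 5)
Your proof is correct and follows essentially the same route as the paper's: both reduce the moment conditions to conditional-mean statements using $P(G_i=g)\in(0,1)$, and both close the loop by noting that the centered differences have unconditional mean zero, so that equality of all group means forces the common value to be zero (you make the law-of-total-expectation step explicit where the paper leaves it implicit). Your side remark that Assumption \ref{ass:NA} is not actually used in the argument is also consistent with the paper's proof.
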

\begin{proof}
Assumption \ref{ass:PT-MP} is equivalent to
$$
E[\dot{Y}_{it}(\infty)-\dot{Y}_{i(t-1)}(\infty)|G_i=g]=E[\dot{Y}_{it}(\infty)-\dot{Y}_{i(t-1)}(\infty)|G_i=\infty]\quad \text{for }(g,t)\in \{2,\dots,T\}^2, 
$$
which, since $E[\dot{Y}_{it}(\infty)]=0$, is also equivalent to 
\begin{equation}
E[\dot{Y}_{it}(\infty)-\dot{Y}_{i(t-1)}(\infty)|G_i=g]=0\quad \text{for }(g,t)\in \{2,\dots,T,\infty\}\times \{2,\dots,T\}. \label{eq: PT for all g}
\end{equation}
Thus, we need to show that \eqref{eq: PT for all g} is equivalent to $E[1\{G_i=g\}(\dot{Y}_{it}(\infty)-\dot{Y}_{i(t-1)}(\infty))]=0$ for $g\in \{2,\dots,T,\infty\}$ and $t\in \{2,\dots,T\}$. This follows because
$$
E[\dot{Y}_{it}(\infty)-\dot{Y}_{i(t-1)}(\infty)|G_i=g]=\frac{E[1\{G_i=g\}(\dot{Y}_{it}(\infty)-\dot{Y}_{i(t-1)}(\infty))]}{P(G_i=g)} \quad 
$$
for $(g,t)\in \{2,\dots,T,\infty\}\times \{2,\dots,T\}$, since $P(G_i=g)\in (0,1)$ for $g\in\{2,\dots,T,\infty\}$ by assumption. 
\end{proof}

\begin{lemma}\label{lem:necessary-X}
For a scalar random variable $W_i$, let $\ddot{W}_i=W_i-E[W_i|X_i]$. If $E[\ddot{W}_i1\{\ddot{W}_i\leq 0\}|X_i]=0$ or $E[\ddot{W}_i1\{\ddot{W}_i\geq 0\}|X_i]=0$ a.s., then $W_i=E[W_i|X_i]$ a.s.
\end{lemma}

\begin{proof} We prove the result for the case where $E[\ddot{W}_i1\{\ddot{W}_i\leq 0\}|X_i]=0$, since the proof for the other case follows by identical arguments. In the following, all statements involving conditional expectations hold a.s.

First, note that by definition $E[\ddot{W}_i|X_i]=0$, which is equivalent to \begin{align}E[\ddot{W}_i^+|X_i]=E[\ddot{W}_i^-|X_i],\label{eq:zeta_meanzero_X}\end{align}
where $\ddot{W}_i^+=|\ddot{W}_i|1\{\ddot{W}_i>0\}$ and $\ddot{W}_i^-=|\ddot{W}_i|1\{\ddot{W}_i< 0\}$.

Now suppose that $E[\ddot{W}_i1\{\ddot{W}_i\leq 0\}|X_i]=0$ holds, which is equivalent to 
\begin{align}E[\ddot{W}_i^+1\{\ddot{W}_i\leq 0\}|X_i]=E[\ddot{W}_i^-1\{\ddot{W}_i\leq 0\}|X_i],\end{align}
since, by definition, $\ddot{W}_i=\ddot{W}_i^+-\ddot{W}_i^-$. Note that the left-hand side equals zero by the definition of $\ddot{W}_i^+$. As a result, $E[\ddot{W}_i^-1\{\ddot{W}_i\leq 0\}|X_i]=E[\ddot{W}_i^-|X_i]=0$. Since $\ddot{W}_i^-\geq 0$, this implies that $P(\ddot{W}_i^-=0|X_i)=1$. Now note that $P(\ddot{W}_i^-=0|X_i)=P(|\ddot{W}_i|1\{\ddot{W}_i<0\}=0|X_i)=P(1\{\ddot{W}_i<0\}=0|X_i)=1$, which implies $P(\ddot{W}_i<0|X_i)=0$.

Since $E[\ddot{W}_i|X_i]=0$, \eqref{eq:zeta_meanzero_X} further implies that $E[\ddot{W}_i^-|X_i]=E[\ddot{W}_i^+|X_i]=0$. Since $\ddot{W}_i^+\geq 0$, it follows that $P(\ddot{W}_i^+=0|X_i)=1$. Now note that $P(\ddot{W}_i^+=0|X_i)=P(|\ddot{W}_i|1\{\ddot{W}_i>0\}=0|X_i)=P(1\{\ddot{W}_i>0\}=0|X_i)=1$, which implies $P(\ddot{W}_i>0|X_i)=0$.

Together, $P(\ddot{W}_i<0|X_i)=0$ and $P(\ddot{W}_i>0|X_i)=0$ imply that $P(\ddot{W}_i=0|X_i)=1-(P(\ddot{W}_i<0|X_i)+P(\ddot{W}_i>0|X_i))=1$, which further implies that $P(\ddot{W}_i=0)=1$ and thereby completes the proof.
\end{proof}

\begin{lemma}\label{lem:exchangeability}
Let $(\ai,\eione,\eitwo)$ denote a vector of random variables on $\mathcal{A}\times\mathcal{E}^2$. 
Suppose that $\eione,\eitwo|\ai\overset{d}{=}\eitwo,\eione|\ai$ holds. Then,  
\begin{enumerate}[(i)]\setlength\itemsep{0pt}
\item $F_{\eione|\ai}(e|a)=F_{\eitwo|\ai}(e|a)$ $a.e.$ $(a,e)\in\mathcal{A}\times\mathcal{E}$. \label{lem:exchangeability_i}
\item  Suppose further that the conditional density of $(\eione,\eitwo)|\ai$,  $f_{\eione,\eitwo|\ai}(e_1,e_2|a)$, exists and the conditional density of $\eit|\ai$, $f_{\eit|\ai}(e|a)$, is strictly positive  for $(e,a)\in\mathcal{E}\times \mathcal{A}$, $t=1,2$. Then, $F_{\eione|\eitwo,\ai}(e_1|e_2,a)=F_{\eitwo|\eione,\ai}(e_1|e_2,a)$ $a.e.$ $(a,e_1,e_2)\in\mathcal{A}\times\mathcal{E}^2$. \label{lem:exchangeability_ii}
\end{enumerate}
\end{lemma}
\begin{proof}\textbf{(i)} By the definition of the marginal distribution, the conditional exchangeability restriction implies (i) by the following $a.e.$

\vspace{-0.5cm}

{\small\begin{align}F_{\eione|\ai}(e_1|a)&=\lim_{e_2\rightarrow\infty}F_{\eione,\eitwo|\ai}(e_1,e_2|a)=\lim_{e_2\rightarrow\infty}F_{\eione,\eitwo|\ai}(e_2,e_1|a)=F_{\eitwo|\ai}(e_1|a).
\end{align}}

\noindent \textbf{(ii)} To simplify notation, we prove this result assuming $\mathcal{E}=\mathbb{R}^k$, where $k\equiv \dim(\eit)$. Let $\tilde{e}=(\tilde{e}^1,\dots\tilde{e}^k)$ be the constant of integration and $e_1=(e_1^1,\dots,e_1^k)$. By the definition of the conditional distribution and (i) of this lemma, the conditional exchangeability restriction implies (ii) by the following

\vspace{-0.5cm}

{\footnotesize
\begin{eqnarray*}F_{\eione|\eitwo,\ai}(e_1|e_2,a)&=&\int_{-\infty}^{e_1^1}\dots\int_{-\infty}^{e_1^k}f_{\eione|\eitwo,\ai}(\tilde{e}|e_2,a)d\tilde{e}^1\dots d\tilde{e}^k=\int_{-\infty}^{e_1^1}\dots\int_{-\infty}^{e_1^k}\frac{f_{\eione,\eitwo|\ai}(\tilde{e},e_2|a)}{f_{\eitwo|\ai}(e_2|a)}d\tilde{e}^1\dots d\tilde{e}^k\nonumber\\
&=&\int_{-\infty}^{e_1^1}\dots\int_{-\infty}^{e_1^k}\frac{f_{\eitwo,\eione|\ai}(\tilde{e},e_2|a)}{f_{\eione|\ai}(e_2|a)}d\tilde{e}^1\dots d\tilde{e}^k=F_{\eitwo|\eione,\ai}(e_1|e_2,a).\label{eq:exchangeability_ii} \end{eqnarray*}}
\end{proof}

\subsection{Proof of Theorem \ref{thm:main-MP}}\label{proof:thm:main-MP}

\textbf{``$\Longrightarrow$''}: We first consider the case where $P(E[\dot{Y}_{it}(\infty)-\dot{Y}_{i(t-1)}(\infty))|\omega_i]>0)<1$ for $t\in\{2,\dots,T\}$. Since Assumption \ref{ass:PT-MP} holds for all $g\in \mathcal{G}_{\omega}$, it holds for the following selection mechanism, where $\mathcal{G}_{S}=\{2,\dots,T\}$ denotes the set of switcher groups,
\begin{align*}\check{g}(\omega_i,\nu_i)=
\left\{\begin{array}{l}
g \text{ if }  1\{E[\dot{Y}_{ig}(\infty) - \dot{Y}_{i(g-1)}(\infty)|\omega_i]\leq 0\}1\{\nu_i\in B_g\}=1, g\in\mathcal{G}_S\\
\infty \text{ otherwise}.
\end{array}\right.
\end{align*}
Next, we show that $P(\check{g}(\omega_i,\nu_i)=g)\in(0,1)$ for $g\in\{2,\dots,T,\infty\}$, such that $G_i=\check{g}(\omega_i,\nu_i)$ is a nondegenerate selection mechanism.  For $g\in\mathcal{G}_S$,
\begin{eqnarray}P(\check{g}(\omega_i,\nu_i)=g)=P(E[\dot{Y}_{ig}(\infty)-\dot{Y}_{i(g-1)}(\infty)|\omega_i]\leq 0)P(\nu_i\in B_g)\in(0,1),\label{eq:G_S_probability}\end{eqnarray}
where the first equality follows from the independence between $\omega_i$ and $\nu_i$ implied by Assumption \ref{ass:SEL-MP}. Since $P(E[\dot{Y}_{ig}(\infty)-\dot{Y}_{i(g-1)}(\infty)|\omega_i]\leq 0)>0$ and $P(\nu_i\in B_g)\in(0,1)$ for $\mathcal{G}_S$, it follows that $P(\check{g}(\omega_i,\nu_i)=g)\in(0,1)$ for $g\in\mathcal{G}_S$. This further implies that $\sum_{g\in\mathcal{G}_S}P(\check{g}(\omega_i,\nu_i)=g)>0$ and therefore $P(\check{g}(\omega_i,\nu_i)=\infty)=1-\sum_{g\in\mathcal{G}_S}P(\check{g}(\omega_i,\nu_i)=g)<1$. It remains to show that $P(\check{g}(\omega_i,\nu_i)=\infty)>0$.
\begin{eqnarray*}
P(\check{g}(\omega_i,\nu_i)=\infty)&=&1-\sum_{g\in\mathcal{G}_S}P(\check{g}(\omega_i,\nu_i)=g)\geq 1-\sum_{g\in\mathcal{G}_S}P(\nu_i\in B_g)=P(\nu_i\in B_\infty)>0,
\end{eqnarray*}
where the weak inequality follows from $P(\check{g}(\omega_i,\nu_i)=g)\leq P(\nu_i\in B_g)$ in \eqref{eq:G_S_probability}. The last equality follows from Assumption \ref{ass:SEL-MP}, in particular that $\{B_g\}_{g=2,\dots,T,\infty}$ is a non-overlapping partition of the support of $\nu_i$ and $P(\nu_i\in B_g)\in(0,1)$ for $g\in\{2,\dots,T,\infty\}$. It follows that $P(\check{g}(\omega_i,\nu_i)=g)\in(0,1)$ for $g\in\{2,\dots,T,\infty\}$.

Now we can invoke Lemma \ref{lem: PT-MP equivalent condition} to show the implication of Assumption \ref{ass:PT-MP} with $G_i=\check{g}(\omega_i,\nu_i)$; specifically  for any $g\in \mathcal{G}_S$
\begin{eqnarray*} 
0&=&E[1\{\check{g}(\omega_i,\nu_i)=g\}(\dot{Y}_{ig}(\infty) -\dot{Y}_{i(g-1)}(\infty)) ]\\&=&
E[ 1\{E[\dot{Y}_{ig}(\infty)- \dot{Y}_{i(g-1)}(\infty)|\omega_i]\leq 0\}1\{\nu_i\in B_g\}(\dot{Y}_{ig}(\infty) - \dot{Y}_{i(g-1)}(\infty))]\\
&=&P(\nu_i\in B_g)E[ 1\{E[\dot{Y}_{ig}(\infty)- \dot{Y}_{i(g-1)}(\infty)|\omega_i]\leq 0\}(\dot{Y}_{ig}(\infty) - \dot{Y}_{i(g-1)}(\infty))]\\
&=&P(\nu_i\in B_g)E[ 1\{E[\dot{Y}_{ig}(\infty)- \dot{Y}_{i(g-1)}(\infty)|\omega_i]\leq 0\}E[\dot{Y}_{ig}(\infty) - \dot{Y}_{i(g-1)}(\infty)|\omega_i]]. \end{eqnarray*}
The third equality follows by the independence condition in Assumption \ref{ass:SEL-MP} and the definition of $\omega_i$. The last equality follows by the LIE.  Since $P(\nu_i\in B_g)>0$ by Assumption \ref{ass:SEL-MP}, it follows that $E[ 1\{E[\dot{Y}_{ig}(\infty)- \dot{Y}_{i(g-1)}(\infty)|\omega_i]\leq 0\}E[\dot{Y}_{ig}(\infty) - \dot{Y}_{i(g-1)}(\infty)|\omega_i]]=0$ for $g\in\mathcal{G}_S$. By Lemma \ref{lem:necessary} with $W_i=E[\dot{Y}_{ig}(\infty) - \dot{Y}_{i(g-1)}(\infty)|\omega_i]$ for each $g\in\mathcal{G}_S$, it follows that $E[\dot{Y}_{ig}(\infty)-\dot{Y}_{i(g-1)}(\infty)|\omega_i]=0$ a.s.\ for each $g\in\mathcal{G}_S=\{2,\dots,T\}$, which implies the result. 

The proof for the case where $P(E[\dot{Y}_{it}(\infty)-\dot{Y}_{i(t-1)}(\infty)|\omega_i]<0)<1$ for $t\in\{2,\dots,T\}$ follows symmetrically using the selection mechanism,
\begin{align*}\check{g}(\omega_i,\nu_i)=\left\{\begin{array}{l}
g \text{ if }  1\{E[\dot{Y}_{ig}(\infty) - \dot{Y}_{i(g-1)}(\infty)|\omega_i]\ge 0\}1\{\nu_i\in B_g\}=1, g\in\mathcal{G}_S\\
\infty \text{ otherwise}.
\end{array}\right.
\end{align*}

The proof for the case where $P(E[\dot{Y}_{it}(\infty)-\dot{Y}_{i(t-1)}(\infty)|\omega_i]>0)<1$ for $t\in\mathcal{G}_{1}\subset \mathcal{G}_S$ and $P(E[\dot{Y}_{is}(\infty)-\dot{Y}_{i(s-1)}(\infty)|\omega_i]<0)<1$ for $s\in\mathcal{G}_2=\mathcal{G}_1^c\cap \mathcal{G}_S$ follows from using the following selection mechanism
\begin{align*}\check{g}(\omega_i,\nu_i)=
\left\{\begin{array}{l}
g \text{ if }  1\{E[\dot{Y}_{ig}(\infty) - \dot{Y}_{i(g-1)}(\infty)|\omega_i]\le 0\}1\{\nu_i\in B_g\}=1, g\in\mathcal{G}_1,\\
g \text{ if }  1\{E[\dot{Y}_{ig}(\infty) - \dot{Y}_{i(g-1)}(\infty)|\omega_i]\ge 0\}1\{\nu_i\in B_g\}=1, g\in\mathcal{G}_2,\\
\infty \text{ otherwise}.
\end{array}\right.
\end{align*}

\noindent\textbf{``$\Longleftarrow$''}: This direction is immediate by the LIE.
\subsection{Proof of Theorem \ref{thm:main-X}}\label{proof:thm:main-X}

\textbf{``$\Longrightarrow$''}: We prove the result for the case where $P(E[\ddot{Y}_{i2}(0)-\ddot{Y}_{i1}(0)|\omega_i,X_i]>0)<1$. The proof for the case where $P(E[\ddot{Y}_{i2}(0)-\ddot{Y}_{i1}(0)|\omega_i,X_i]<0)<1$ follows from the same arguments. 

Under Assumption \ref{ass:SEL-X} and because $P(E[\ddot{Y}_{i2}(0)-\ddot{Y}_{i1}(0)|\omega_i,X_i]>0)<1$, the selection mechanism 
\begin{equation}
G_i=1\{\nu_i>c\}1\{E[\ddot{Y}_{i2}(0)-\ddot{Y}_{i1}(0)|\oi,X_i]\le 0\}\label{eq:least_favorable_mechanism}
\end{equation}
is nondegenerate, that is,
$$
P(1\{\nu_i>c\}1\{E[\ddot{Y}_{i2}(0)-\ddot{Y}_{i1}(0)|\oi,X_i]\le 0\}=1|X_i)\in (0,1).
$$

If Assumption \ref{ass:PT-X} holds for all non-degenerate selection mechanisms $g\in \mathcal{G}_{\omega}^x$, then it holds for the mechanism \eqref{eq:least_favorable_mechanism}. By Lemma \ref{lem: PT equivalent condition}, Assumption \ref{ass:PT-X} holding for the mechanism in \eqref{eq:least_favorable_mechanism} is equivalent to 
\begin{equation*}
E[1\{\nu_i>c\}1\{E[\ddot{Y}_{i2}(0)-\ddot{Y}_{i1}(0)|\oi,X_i]\le 0\}(\ddot{Y}_{i2}(0)-\ddot{Y}_{i1}(0))|X_i]=0,
\end{equation*}
which, by Assumption \ref{ass:SEL-X}, is equivalent to
\begin{equation*}
E[1\{E[\ddot{Y}_{i2}(0)-\ddot{Y}_{i1}(0)|\oi,X_i]\le 0\}(\ddot{Y}_{i2}(0)-\ddot{Y}_{i1}(0))|X_i]=0.
\end{equation*}

By the law of iterated expectations (LIE), this is further equivalent to 
\begin{eqnarray*}
E[1\{E[\ddot{Y}_{i2}(0)-\ddot{Y}_{i1}(0)|\oi,X_i]\le 0 \}E[\ddot{Y}_{i2}(0)-\ddot{Y}_{i1}(0)|\oi,X_i]|X_i]=0
\end{eqnarray*}
Since $E[E[\ddot{Y}_{i2}(0)-\ddot{Y}_{i1}(0)|\oi,X_i]|X_i]=0$ by construction, the result follows by Lemma \ref{lem:necessary-X}.

\smallskip

\noindent \textbf{``$\Longleftarrow$''}: By the LIE,
\begin{eqnarray*}
E[\ddot{Y}_{i2}(0)-\ddot{Y}_{i1}(0)|G_i,X_i]&=&E[E[\ddot{Y}_{i2}(0)-\ddot{Y}_{i1}(0)|\oi,\nu_i,X_i]|G_i,X_i]\\
&=&E[E[\ddot{Y}_{i2}(0)-\ddot{Y}_{i1}(0)|\oi,X_i]|G_i,X_i]=0,
\end{eqnarray*}
where the penultimate equality follows from Assumption \ref{ass:SEL-X}. 

\subsection{Proof of Proposition \ref{prop:sufficient-NSP}}\label{proof:prop:sufficient-NSP}
In this proof, all equalities involving random variables are understood to hold a.s. 

\smallskip

First, by Lemma \ref{lem: PT equivalent condition}, Assumption \ref{ass:PT-NSP} under Assumption \ref{ass:NSP-X} holds if and only if
\begin{align}
&E[G_i(Y_{i2}(0)-Y_{i1}(0))|X_i^\lambda,X_{i1}^\mu=X_{i2}^\mu]\nonumber \\
=&E[G_i|X_i^\lambda,X_{i1}^\mu=X_{i2}^\mu]E[Y_{i2}(0)-Y_{i1}(0)|X_i^\lambda,X_{i1}^\mu=X_{i2}^\mu].\label{eq:CPT_NSP}
\end{align}
Next, we state some preliminary observations and then proceed to show each statement separately. 

Note that, by the LIE, Assumption \ref{ass:SEL-CI} and the definition of $\bar{g}(\cdot)$, the LHS of \eqref{eq:CPT_NSP} equals the following,
\begin{align}&E[G_i(Y_{i2}(0)-Y_{i1}(0))|X_i^\lambda,X_{i1}^\mu=X_{i2}^\mu]\nonumber\\
=&E[E[G_i|X_{i}^\mu,X_{i}^\lambda,\ai^\mu,\ai^\lambda,  \eione^\mu,\eitwo^\mu,\eione^\lambda,\eitwo^\lambda](Y_{i2}(0)-Y_{i1}(0))|X_i^\lambda,X_{i1}^\mu=X_{i2}^\mu]\nonumber\\
=&E[\bar{g}(X_{i1}^\mu,X_{i2}^\mu,X_{i1}^\lambda, X_{i2}^\lambda,\ai^\mu,\eione^\mu,\eitwo^\mu)(Y_{i2}(0)-Y_{i1}(0))|X_i^\lambda,X_{i1}^\mu=X_{i2}^\mu].
\end{align}
Similarly, by the LIE, the RHS of \eqref{eq:CPT_NSP} equals the following,

\vspace{-0.5cm}
{\small
\begin{align}&E[G_i|X_i^{\lambda},X_{i1}^\mu=X_{i2}^\mu]E[Y_{i2}(0)-Y_{i1}(0)|X_i^\lambda,X_{i1}^\mu=X_{i2}^\mu]\nonumber\\=&E[\bar{g}(X_{i1}^\mu,X_{i2}^\mu,X_{i1}^\lambda, X_{i2}^\lambda,\ai^\mu,\eione^\mu,\eitwo^\mu)|X_i^{\lambda},X_{i1}^\mu=X_{i2}^\mu]E[Y_{i2}(0)-Y_{i1}(0)|X_i^{\lambda},X_{i1}^\mu=X_{i2}^\mu]\end{align}}

\noindent As a result, in the following, to show that Assumptions \ref{ass:SC1-NSP}, \ref{ass:SC2-NSP}, and \ref{ass:SC3-NSP} are sufficient for Assumption \ref{ass:PT-NSP}, it suffices to show that each assumption implies the following equality,
\begin{align}&E[\bar{g}(X_{i1}^\mu,X_{i2}^\mu,X_{i1}^\lambda, X_{i2}^\lambda,\ai^\mu,\eione^\mu,\eitwo^\mu)(Y_{i2}(0)-Y_{i1}(0))|X_i^\lambda,X_{i1}^\mu=X_{i2}^\mu]\nonumber\\
=&E[\bar{g}(X_{i1}^\mu,X_{i2}^\mu,X_{i1}^\lambda, X_{i2}^\lambda,\ai^\mu,\eione^\mu,\eitwo^\mu)|X_i^{\lambda},X_{i1}^\mu=X_{i2}^\mu]E[Y_{i2}(0)-Y_{i1}(0)|X_i^{\lambda},X_{i1}^\mu=X_{i2}^\mu]\nonumber
\end{align}
\noindent \textbf{(i)} By Assumption \ref{ass:NSP-X}, it follows that

\vspace{-0.5cm}
{\footnotesize
\begin{align}&E[\bar{g}(X_{i1}^\mu,X_{i2}^\mu,X_{i1}^\lambda,X_{i2}^\lambda,\ai^\mu,\eione^\mu,\eitwo^\mu)(Y_{i2}(0)-Y_{i1}(0))|X_i^\lambda,X_{i1}^\mu=X_{i2}^\mu]\nonumber\\
=&E[\bar{g}(X_{i1}^\mu,X_{i2}^\mu,X_{i1}^\lambda,X_{i2}^\lambda,\ai^\mu,\eione^\mu,\eitwo^\mu)(\mu(X_{i2}^\mu,\ai^\mu,\eitwo^\mu)-\mu(X_{i1}^\mu,\ai^\mu,\eione^\mu))|X_i^\lambda,X_{i1}^\mu=X_{i2}^\mu]\nonumber\\
&+E[\bar{g}(X_{i1}^\mu,X_{i2}^\mu,X_{i1}^\lambda,X_{i2}^\lambda,\ai^\mu,\eione^\mu,\eitwo^\mu)(\lambda_2(X_{i2}^\lambda,\ai^\lambda,\eitwo^\lambda)-\lambda_1(X_{i1}^\lambda,\ai^\lambda,\eione^\lambda))|X_i^\lambda,X_{i1}^\mu=X_{i2}^\mu],\label{eq:CPT-NSP_nonsep2}
\end{align}}

\noindent We first examine the first term on the RHS of the above equality.  Note that by the symmetry restrictions in Assumptions \ref{ass:SC1-NSP}.\ref{ass:SC1-NSP_selection_symmetry} and \ref{ass:SC1-NSP}.\ref{ass:SC1-NSP_exchangeability}, it follows that a.e. $(a,x^\mu,x_1^\lambda,x_2^\lambda)\in\mathcal{A}\times\mathcal{X}_\mu\times\mathcal{X}_\lambda^2$

\vspace{-0.5cm}
{\footnotesize
\begin{align}&E[\bar{g}(X_{i1}^\mu,X_{i2}^\mu,X_{i1}^\lambda,X_{i2}^\lambda,\ai^\mu,\eione^\mu,\eitwo^\mu)\mu(X_{i1}^\mu,\ai^\mu,\eione^\mu)|X_i^\lambda=(x_1^\lambda,x_2^\lambda),X_{i1}^\mu=X_{i2}^\mu=x^\mu,\ai^\mu=a]\nonumber\\
=&\int{\bar{g}(x^\mu,x^\mu,x_1^\lambda,x_2^\lambda,a,e_1,e_2)\mu(x^\mu,a,e_1)}dF_{\eione^\mu,\eitwo^\mu|X_i^\lambda,X_{i1}^\mu=X_{i2}^\mu,\ai^\mu}(e_1,e_2|(x_1^\lambda,x_2^\lambda),x^\mu,a)\nonumber\\
=&\int{\bar{g}(x^\mu,x^\mu,x_1^\lambda,x_2^\lambda,a,e_2,e_1)\mu(x^\mu,a,e_1)}dF_{\eione^\mu,\eitwo^\mu|X_i^\lambda,X_{i1}^\mu=X_{i2}^\mu,\ai^\mu}(e_2,e_1|(x_1^\lambda,x_2^\lambda),x^\mu,a)\nonumber\\
=&E[\bar{g}(X_{i1}^\mu,X_{i2}^\mu,X_{i1}^\lambda,X_{i2}^\lambda,\ai^\mu,\eione^\mu,\eitwo^\mu)\mu(X_{i2}^\mu,\ai^\mu,\eitwo^\mu)|X_i^\lambda=(x_1^\lambda,x_2^\lambda),X_{i1}^\mu=X_{i2}^\mu=x^\mu,\ai^\mu=a].\label{eq:gmu_equality}\end{align}}
As a result, the first summand in \eqref{eq:CPT-NSP_nonsep2} equals zero by \eqref{eq:gmu_equality} and the LIE.

Next, we consider the second summand in \eqref{eq:CPT-NSP_nonsep2},

\vspace{-0.5cm}
{\footnotesize
\begin{align}&E[\bar{g}(X_{i1}^\mu,X_{i2}^\mu,X_{i1}^\lambda,X_{i2}^\lambda,\ai^\mu,\eione^\mu,\eitwo^\mu)(\lambda_2(X_{i2}^\lambda,\ai^\lambda,\eitwo^\lambda)-\lambda_1(X_{i1}^\lambda,\ai^\lambda,\eione^\lambda))|X_i^\lambda,X_{i1}^\mu=X_{i2}^\mu]\nonumber\\
=&E[\bar{g}(X_{i1}^\mu,X_{i2}^\mu,X_{i1}^\lambda,X_{i2}^\lambda,\ai^\mu,\eione^\mu,\eitwo^\mu)|X_i^\lambda,X_{i1}^\mu=X_{i2}^\mu]E[\lambda_2(X_{i2}^\lambda,\ai^\lambda,\eitwo^\lambda)-\lambda_1(X_{i1}^\lambda,\ai^\lambda,\eione^\lambda)|X_i^\lambda,X_{i1}^\mu=X_{i2}^\mu]\nonumber\\
=&E[\bar{g}(X_{i1}^\mu,X_{i2}^\mu,X_{i1}^\lambda,X_{i2}^\lambda,\ai^\mu,\eione^\mu,\eitwo^\mu)|X_i^\lambda,X_{i1}^\mu=X_{i2}^\mu]E[Y_{i2}(0)-Y_{i1}(0)|X_i^\lambda,X_{i1}^\mu=X_{i2}^\mu].
\end{align}}

\noindent The first equality follows from the conditional independence assumption in Assumption \ref{ass:SC1-NSP}.\ref{ass:SC1-NSP_independence}.  The last equality follows from the time homogeneity of $F_{\eit^\mu|X_i^\mu,X_{i}^\lambda,\ai^\mu}$, which follows from the exchangeability restriction in Assumption \ref{ass:SC1-NSP}.\ref{ass:SC1-NSP_exchangeability} by Lemma \ref{lem:exchangeability}.\ref{lem:exchangeability_i}, and implies that $E[\mu(X_{i2}^\mu,\ai^\mu,\eitwo^\mu)-\mu(X_{i1}^\mu,\ai^\mu,\eione^\mu)|X_i^\lambda,X_{i1}^\mu=X_{i2}^\mu,\ai^\mu]=0$ and 
$$
E[Y_{i2}(0)-Y_{i1}(0)|X_i^\lambda,X_{i1}^\mu=X_{i2}^\mu]=E[\lambda_2(X_{i2}^\lambda,\ai^\lambda,\eitwo^\lambda)-\lambda_1(X_{i1}^\lambda,\ai^\lambda,\eione^\lambda))|X_i^\lambda,X_{i1}^\mu=X_{i2}^\mu]
$$
by the LIE. As a result, the above implies that Assumption \ref{ass:PT-NSP} holds.

\bigskip

\noindent \textbf{(ii)} By Assumption  \ref{ass:SC2-NSP}.\ref{ass:SC2-NSP_selection}, we can define $\check{\bar{g}}(x_1^\mu,x_2^\mu,x_1^\lambda,x_2^\lambda,a^\mu,e_1^\mu)=\bar{g}(x_1^\mu,x_2^\mu,x_1^\lambda,x_2^\lambda,a^\mu,e_1^\mu,e_2^\mu).$ By Assumption \ref{ass:NSP-X}, it follows that

\vspace{-0.5cm}
{\footnotesize
\begin{align}&E[\check{\bar{g}}(X_{i1}^\mu,X_{i2}^\mu,X_{i1}^\lambda,X_{i2}^\lambda,\ai^\mu,\eione^\mu)(Y_{i2}(0)-Y_{i1}(0))|X_i^\lambda,X_{i1}^\mu=X_{i2}^\mu]\nonumber\\
=&E[\check{\bar{g}}(X_{i1}^\mu,X_{i2}^\mu,X_{i1}^\lambda,X_{i2}^\lambda,\ai^\mu,\eione^\mu)\Delta_{\mu,i}|X_i^\lambda,X_{i1}^\mu=X_{i2}^\mu]\nonumber\\
&+E[\check{\bar{g}}(X_{i1}^\mu,X_{i2}^\mu,X_{i1}^\lambda,X_{i2}^\lambda,\ai^\mu,\eione^\mu)(\lambda_2(X_{i2}^\lambda,\ai^\lambda,\eitwo^\lambda)-\lambda_1(X_{i1}^\lambda,\ai^\lambda,\eione^\lambda))|X_i^\lambda,X_{i1}^\mu=X_{i2}^\mu]\nonumber\\
=&E[\check{\bar{g}}(X_{i1}^\mu,X_{i2}^\mu,X_{i1}^\lambda,X_{i2}^\lambda,\ai^\mu,\eione^\mu)|X_i^\lambda,X_{i1}^\mu=X_{i2}^\mu]E[\Delta_{\mu,i}|X_i^\lambda,X_{i1}^\mu=X_{i2}^\mu]\nonumber\\
&+E[\check{\bar{g}}(X_{i1}^\mu,X_{i2}^\mu,X_{i1}^\lambda,X_{i2}^\lambda,\ai^\mu,\eione^\mu)|X_i^\lambda,X_{i1}^\mu=X_{i2}^\mu]E[\lambda_2(X_{i2}^\lambda,\ai^\lambda,\eitwo^\lambda)-\lambda_1(X_{i1}^\lambda,\ai^\lambda,\eione^\lambda)|X_i^\lambda,X_{i1}^\mu=X_{i2}^\mu]\nonumber\\
=&E[\check{\bar{g}}(X_{i1}^\mu,X_{i2}^\mu,X_{i1}^\lambda,X_{i2}^\lambda,\ai^\mu,\eione^\mu)|X_i^\lambda,X_{i1}^\mu=X_{i2}^\mu]E[Y_{i2}(0)-Y_{i1}(0)|X_i^\lambda,X_{i1}^\mu=X_{i2}^\mu]\label{eq:PT-NSPii}
\end{align}}

\noindent The second equality follows from the conditional independence conditions in Assumptions \ref{ass:SC2-NSP}.\ref{ass:SC2-NSP_separability} and \ref{ass:SC2-NSP}.\ref{ass:SC2-NSP_independence}. The last equality follows from Assumption \ref{ass:NSP-X}. Equation \eqref{eq:PT-NSPii} then implies Assumption \ref{ass:PT-NSP}.

\bigskip

\noindent \textbf{(iii)} By Assumption  \ref{ass:SC3-NSP}.\ref{ass:SC3-NSP_selection_ai}, we can define $\check{\bar{g}}(x_1^\mu,x_2^\mu,x_1^\lambda,x_2^\lambda,a^\mu)=\bar{g}(x_1^\mu,x_2^\mu,x_1^\lambda,x_2^\lambda,a^\mu,e_1^\lambda,e_2^\lambda).$ Now by Assumption \ref{ass:NSP-X} and \ref{ass:SC3-NSP}.\ref{ass:SC3-NSP_selection_ai}, it follows that

\vspace{-0.5cm}
{\footnotesize
\begin{align}&E[\check{\bar{g}}(X_{i1}^\mu,X_{i2}^\mu,X_{i1}^\lambda,X_{i2}^\lambda,\ai^\mu)(Y_{i2}(0)-Y_{i1}(0))|X_i^\lambda,X_{i1}^\mu=X_{i2}^\mu]\nonumber\\
=&E[\check{\bar{g}}(X_{i1}^\mu,X_{i2}^\mu,X_{i1}^\lambda,X_{i2}^\lambda,\ai^\mu)(\mu(X_{i2}^\mu,\ai^\mu,\eitwo^\mu)-\mu(X_{i1}^\mu,\ai^\mu,\eione^\mu))|X_i^\lambda,X_{i1}^\mu=X_{i2}^\mu]\nonumber\\
&+E[\check{\bar{g}}(X_{i1}^\mu,X_{i2}^\mu,X_{i1}^\lambda,X_{i2}^\lambda,\ai^\mu)(\lambda_2(X_{i2}^\lambda,\ai^\lambda,\eitwo^\lambda)-\lambda_1(X_{i1}^\lambda,\ai^\lambda,\eione^\lambda))|X_i^\lambda,X_{i1}^\mu=X_{i2}^\mu]\nonumber\\
=&E[\check{\bar{g}}(X_{i1}^\mu,X_{i2}^\mu,X_{i1}^\lambda,X_{i2}^\lambda,\ai^\mu)E[\mu(X_{i2}^\mu,\ai^\mu,\eitwo^\mu)-\mu(X_{i1}^\mu,\ai^\mu,\eione^\mu)|X_i^\lambda,X_{i1}^\mu=X_{i2}^\mu,\ai^\mu]|X_i^\lambda,X_{i1}^\mu=X_{i2}^\mu]\nonumber\\
&+E[\check{\bar{g}}(X_{i1}^\mu,X_{i2}^\mu,X_{i1}^\lambda,X_{i2}^\lambda,\ai^\mu)|X_i^\lambda,X_{i1}^\mu=X_{i2}^\mu]E[\lambda_2(X_{i2}^\lambda,\ai^\lambda,\eitwo^\lambda)-\lambda_1(X_{i1}^\lambda,\ai^\lambda,\eione^\lambda)|X_i^\lambda,X_{i1}^\mu=X_{i2}^\mu]\nonumber\\
=&E[\check{\bar{g}}(X_{i1}^\mu,X_{i2}^\mu,X_{i1}^\lambda,X_{i2}^\lambda,\ai^\mu)|X_i^\lambda,X_{i1}^\mu=X_{i2}^\mu]E[\lambda_2(X_{i2}^\lambda,\ai^\lambda,\eitwo^\lambda)-\lambda_1(X_{i1}^\lambda,\ai^\lambda,\eione^\lambda)|X_i^\lambda,X_{i1}^\mu=X_{i2}^\mu]\nonumber\\
=&E[\check{\bar{g}}(X_{i1}^\mu,X_{i2}^\mu,X_{i1}^\lambda,X_{i2}^\lambda,\ai^\mu)|X_i^\lambda,X_{i1}^\mu=X_{i2}^\mu]E[Y_{i2}(0)-Y_{i1}(0)|X_i^\lambda,X_{i1}^\mu=X_{i2}^\mu],\nonumber
\end{align}}

\noindent where the first equality follows from Assumption \ref{ass:NSP-X}. The second equality follows by applying the LIE to the first term and the conditional independence imposed in Assumption \ref{ass:SC3-NSP}.\ref{ass:SC3-NSP_independence} to the second term.  The first term on the RHS of the second equality equals zero by the conditioning on $X_{i1}^\mu=X_{i2}^\mu$ and the time homogeneity condition in Assumption \ref{ass:SC3-NSP}.\ref{ass:SC3-NSP_time_homogeneity}. The last equality follows from noting, similar to the proof of (i), that since $E[\mu(X_{i2}^\mu,\ai^\mu,\eitwo^\mu)-\mu(X_{i1}^\mu,\ai^\mu,\eione^\mu)|X_i^\lambda,X_{i1}^\mu=X_{i2}^\mu,\ai^\mu]=0$, $$E[Y_{i2}(0)-Y_{i1}(0)|X_i^\lambda,X_{i1}^\mu=X_{i2}^\mu]=E[\lambda_2(X_{i2}^\lambda,\ai^\lambda,\eitwo^\lambda)-\lambda_1(X_{i1}^\lambda,\ai^\lambda,\eione^\lambda)|X_i^\lambda,X_{i1}^\mu=X_{i2}^\mu]$$
by the LIE. This completes the proof.
\qed

\subsection{Proof of Proposition \ref{prop:TH+CRE imply parallel trends}}\label{proof:prop:TH+CRE imply parallel trends}
Under Assumption \ref{ass:NSP-X},
\begin{align}&E[Y_{i2}(0)-Y_{i1}(0)|G_i,X_i^\lambda,X_{i1}^\mu=X_{i2}^\mu]\nonumber\\
=&E[\mu(X_{i2}^\mu,\ai^\mu,\eitwo^\mu)-\mu(X_{i1}^\mu,\ai^\mu,\eione^\mu)|G_i,X_i^\lambda,X_{i1}^\mu=X_{i2}^\mu]\label{eq:EDiff1}\\
&+E[\lambda_2(X_{i2}^\lambda,\ai^\lambda,\eitwo^\lambda)-\lambda_1(X_{i1}^\lambda,\ai^\lambda,\eione^\lambda)|G_i,X_i^\lambda,X_{i1}^\mu=X_{i2}^\mu].\label{eq:EDiff2}\end{align}
The remainder of the proof follows in two steps. First, we show that the term in \eqref{eq:EDiff1} equals zero under our assumptions. Second, we show that the second term is conditionally mean independent of $G_i$, which implies Assumption \ref{ass:PT-NSP}.

We proceed to show that under Assumption \ref{ass:TH-X} the term in \eqref{eq:EDiff1} equals zero by the following,
\begin{align}&E[\mu(X_{i1}^\mu,\ai^\mu,\eione^\mu)|G_i=g,X_i^\lambda=(x_1^\lambda,x_2^\lambda),X_{i1}^\mu=X_{i2}^\mu=x^\mu]\nonumber\\
=&\int{\mu(x^\mu,a^\mu,e^\mu)}dF_{\ai^\mu,\eione^\mu|G_i,X_{i}^\mu,X_i^\lambda}(a^\mu,e^\mu|g,(x^\mu,x^\mu),(x_1^\lambda,x_2^\lambda))\nonumber\\
=&\int{\mu(x^\mu,a^\mu,e^\mu)}dF_{\ai^\mu,\eitwo^\mu|G_i,X_i^\mu,X_{i}^\lambda}(a^\mu,e^\mu|g,(x^\mu,x^\mu),(x_1^\lambda,x_2^\lambda))\nonumber\\
=&E[\mu(X_{i2}^\mu,\ai^\mu,\eitwo^\mu)|G_i=g,X_i^\lambda=(x_1^\lambda,x_2^\lambda),X_{i1}^\mu=X_{i2}^\mu=x^\mu],\end{align}
where the first and last equalities follow by definition, whereas the penultimate equality follows from Assumption \ref{ass:TH-X} noting that it implies $\ai^\mu,\eione^\mu|G_i,X_i^\mu,X_i^\lambda\overset{d}{=}\ai^\mu,\eitwo^\mu|G_i,X_i^\mu,X_i^\lambda$. 

Finally, we show that Assumption \ref{ass:RE-X} implies the following for \eqref{eq:EDiff2}
\begin{align}
&E[\lambda_2(X_{i2}^\lambda,\ai^\lambda,\eitwo^\lambda)-\lambda_1(X_{i1}^\lambda,\ai^\lambda,\eione^\lambda)|G_i=g,X_i^\lambda=(x_1^\lambda,x_2^\lambda),X_{i1}^\mu=X_{i2}^\mu=x^\mu]\nonumber\\
=&\int{(\lambda_2(x_2^\lambda,a^\lambda,e_2^\lambda)-\lambda_1(x_1^\lambda,a^\lambda,e_1^\lambda))}dF_{\ai^\lambda,\eione^\lambda,\eitwo^\lambda|G_i,X_{i}^\mu,X_i^\lambda}(a^\lambda,e_1^\lambda,e_2^\lambda|g,(x^\mu,x^\mu),(x_1^\lambda,x_2^\lambda))\nonumber\\
=&\int{(\lambda_2(x_2^\lambda,a^\lambda,e_2^\lambda)-\lambda_1(x_1^\lambda,a^\lambda,e_1^\lambda))}dF_{\ai^\lambda,\eione^\lambda,\eitwo^\lambda|X_i^\mu,X_i^\lambda}(a^\lambda,e_1^\lambda,e_2^\lambda|(x^\mu,x^\mu),(x_1^\lambda,x_2^\lambda))\nonumber\\
=&E[\lambda_2(X_{i2}^\lambda,\ai^\lambda,\eitwo^\lambda)-\lambda_1(X_{i1}^\lambda,\ai^\lambda,\eione^\lambda)|X_i^\lambda=(x_1^\lambda,x_2^\lambda),X_{i1}^\mu=X_{i2}^\mu=x^\mu],
\end{align}
where the penultimate equality follows by Assumption \ref{ass:RE-X}.
This completes the proof. \qed

\subsection{Proof of Proposition \ref{prop:connection-NSP-X}}\label{proof:prop:connection-NSP-X}
Throughout this proof, equalities involving conditioning statements are understood to hold $a.e.$ We proceed to show each result separately.  

\bigskip

\noindent \textbf{(i)} It suffices to show (i.a) Assumptions \ref{ass:SC1-NSP}.\ref{ass:SC1-NSP_selection_symmetry} and \ref{ass:SC1-NSP}.\ref{ass:SC1-NSP_exchangeability} imply Assumption \ref{ass:TH-X} and (i.b) Assumptions \ref{ass:SC1-NSP}.\ref{ass:SC1-NSP_selection_symmetry} and \ref{ass:SC1-NSP}.\ref{ass:SC1-NSP_independence} imply Assumption \ref{ass:RE-X}.

\textbf{(i.a)}
First, we define the conditional joint (mixed) density function of $\eione^\mu$ and $G_i$
 \begin{eqnarray*}&&f_{\eione^\mu,G_i|X_i^\mu,X_i^\lambda,\ai^\mu}(e_1,g|x^\mu,x^\lambda,a)\\&=&P(G_i=g|\eione^\mu=e_1,X_i^\mu=x^\mu,X_i^\lambda=x^\lambda,\ai^\mu=a)f_{\eione^\mu|X_i^\mu,X_i^\lambda,\ai^\mu}(e_1|x^\mu,x^\lambda,a)
 \end{eqnarray*}

Assumption \ref{ass:SC1-NSP}.\ref{ass:SC1-NSP_exchangeability} implies 
$f_{\eione^\mu|X_i^\mu,X_i^\lambda,\ai^\mu}(e|x^\mu,x^\lambda,a)=f_{\eitwo^\mu|X_i^\mu,X_i^\lambda,\ai^\mu}(e|x^\mu,x^\lambda,a)$ as well as $f_{\eione^\mu|\eitwo^\mu,X_i^\mu,X_i^\lambda,\ai^\mu}(e_1|e_2,x^\mu,x^\lambda,a)=f_{\eitwo^\mu|\eione^\mu,X_i^\mu,X_i^\lambda,\ai^\mu}(e_1|e_2,x^\mu,x^\lambda,a)$ by Lemma \ref{lem:exchangeability}. As a result, the second term on the RHS of the last equality equals $f_{\eitwo^\mu|X_i^\mu,X_i^\lambda,\ai^\mu}(e_1|x^\mu,x^\lambda,a)$. To complete the proof that Assumption \ref{ass:TH-X} holds, it remains to show the following

\begin{eqnarray*}
&&P(G_i=g|\eione^\mu=e_1,X_i^\mu=x^\mu,X_i^\lambda=x^\lambda,\ai^\mu=a)\\&=&\int{1\{g(x_1^\mu,x_2^\mu,x_1^\lambda,x_2^\lambda,a,e_1,e_2)= g\}f_{\eitwo^\mu|\eione^\mu,X_i^\mu,X_i^\lambda,\ai^\mu}(e_2|e_1,x^\mu,x^\lambda,a)de_2}\\
&=&\int{1\{g(x_1^\mu,x_2^\mu,x_1^\lambda,x_2^\lambda,a,e_2,e_1)= g\}f_{\eione^\mu|\eitwo^\mu,X_i^\mu,X_i^\lambda,\ai^\mu}(e_2|e_1,x^\mu,x^\lambda,a)de_2}\\
&=&P(G_i=g|\eitwo^\mu=e_1,X_i^\mu=x^\mu,X_i^\lambda=x^\lambda,\ai^\mu=a)
\end{eqnarray*}
where the penultimate equality follows from Assumptions \ref{ass:SC1-NSP}.\ref{ass:SC1-NSP_selection_symmetry} and the implication of \ref{ass:SC1-NSP}.\ref{ass:SC1-NSP_exchangeability} that follows from Lemma \ref{lem:exchangeability}.

As a result,
\begin{eqnarray*}f_{\eione^\mu,G_i|X_i^\mu,X_i^\lambda,\ai^\mu}(e_1,g|x^\mu,x^\lambda,a)
&=&f_{\eitwo^\mu,G_i|X_i^\mu,X_i^\lambda,\ai^\mu}(e_1,g|x^\mu,x^\lambda,a).
 \end{eqnarray*}
Dividing the last equality by $P(G_i=g|x^\mu,x^\lambda,a)\in(0,1)$ for $g\in\{0,1\}$, it follows that $f_{\eione^\mu|G_i,X_i^\mu,X_i^\lambda,\ai^\mu}=f_{\eitwo^\mu|G_i,X_i^\mu,X_i^\lambda,\ai^\mu}$, and therefore Assumption \ref{ass:TH-X} holds.

\textbf{(i.b)} This statement follows in a straightforward manner from the definition of $G_i$ in Assumption \ref{ass:SC1-NSP}.\ref{ass:SC1-NSP_selection_symmetry} and the conditional independence condition in Assumption \ref{ass:SC1-NSP}.\ref{ass:SC1-NSP_independence} which together imply Assumption \ref{ass:RE-X}.  This completes the proof of (i).

\bigskip

\noindent \textbf{(ii)} To show the result, it suffices to show that (ii.a) Assumptions \ref{ass:SC3-NSP}.\ref{ass:SC3-NSP_selection_ai} and \ref{ass:SC3-NSP}.\ref{ass:SC3-NSP_time_homogeneity} imply Assumption \ref{ass:TH-X} and (ii.b) Assumptions \ref{ass:SC3-NSP}.\ref{ass:SC3-NSP_selection_ai} and \ref{ass:SC3-NSP}.\ref{ass:SC3-NSP_independence} imply Assumption \ref{ass:RE-X}.

\textbf{(ii.a)} Under Assumptions  \ref{ass:SC3-NSP}.\ref{ass:SC3-NSP_selection_ai} and \ref{ass:SC3-NSP}.\ref{ass:SC3-NSP_time_homogeneity},  conditional on $X_i^\mu$, $X_i^\lambda$ and $\ai^\mu$, $G_i=g(X_{i1}^\mu,X_{i2}^\mu,X_{i1}^\lambda,X_{i2}^\lambda,\ai^\mu)$ is a degenerate random variable equaling either zero or one with probability one. As a result, 

\vspace{-0.5cm}
{\footnotesize{\begin{align}&F_{\eit^\mu|G_i,X_i^\mu,X_i^\lambda,\ai^\mu}(e|g,x^\mu,x^\lambda,a)\nonumber\\
=&\sum_{h=0,1}P(\eit^\mu\leq e|G_i=g(x_1^\mu,x_2^\mu,x_1^\lambda,x_2^\lambda,a),X_i^\mu=x^\mu,X_i^\lambda=x^\lambda,\ai^\mu=a)1\{g(x_1^\mu,x_2^\mu,x_1^\lambda,x_2^\lambda,a)=h\}\nonumber\\
=&\sum_{h=0,1}P(\eit^\mu\leq e|X_i^\mu=x^\mu,X_i^\lambda=x^\lambda,\ai^\mu=a)1\{g(x_1^\mu,x_2^\mu,x_1^\lambda,x_2^\lambda,a)=h\}\nonumber\\
=&\sum_{h=0,1}F_{\eit^\mu|X_i^\mu,X_i^\lambda,\ai^\mu}(e|x^\mu,x^\lambda,a)1\{g(x_1^\mu,x_2^\mu,x_1^\lambda,x_2^\lambda,a)=h\}.
\end{align}}}

\noindent As a result, Assumption \ref{ass:SC3-NSP}.\ref{ass:SC3-NSP_selection_ai} together with the time homogeneity of $F_{\eit^\mu|X_i^\mu,X_i^\lambda,\ai^\mu}$ in Assumption \ref{ass:SC3-NSP}.\ref{ass:SC3-NSP_time_homogeneity} is sufficient for the time homogeneity of $F_{\eit^\mu|G_i,X_i^\mu,X_i^\lambda,\ai^\mu}$, which yields Assumption \ref{ass:TH-X}.

\textbf{(ii.b)} The statement (ii.b) is immediate from noting that Assumption \ref{ass:SC3-NSP}.\ref{ass:SC3-NSP_independence} together with $G_i=g(X_{i1}^\mu,X_{i2}^\mu,X_{i1}^\lambda,X_{i2}^\lambda,\ai^\mu)$ imply that $g(X_{i1}^\mu,X_{i2}^\mu,X_{i1}^\lambda,X_{i2}^\lambda,\ai^\mu)\indep$ $ (\ai^\lambda,\eione^\lambda,\eitwo^\lambda)|X_i^\mu,X_i^\lambda$, which is equivalent to Assumption \ref{ass:RE-X}.  This completes the proof of (ii). \qed

\end{document}